\numberwithin{equation}{section}
\newtheoremstyle{newRemark}%
    {10pt}%
    {10pt}%
    {}
    {}
    {\itshape}
    {:}
    {.5em}
    {}%
\newtheoremstyle{newPlain}%
    {10pt}%
    {10pt}%
    {\itshape}
    {}
    {\bfseries}
    {:}
    {.5em}
    {}%
\newtheoremstyle{newDefinition}%
    {10pt}%
    {10pt}%
    {\itshape}
    {}
    {\bfseries}
    {:}
    {.5em}
    {}%
\theoremstyle{newPlain}
\newtheorem{theorem}{Theorem}[section]
\newtheorem{lemma}[theorem]{Lemma}
\newtheorem{proposition}[theorem]{Proposition}
\theoremstyle{newDefinition}
\newtheorem{definition}[theorem]{Definition}
\newtheorem{assumption}{Assumption}[theorem]
\theoremstyle{newRemark}
\newtheorem{remark}[theorem]{Remark}
\newtheorem{example}[theorem]{Example}
\newcounter{notecount}[section] 
\renewcommand{\thenotecount}{\thesection.\arabic{notecount}}
\begin{document}

\title{European Options in Market Models with Multiple Defaults: the BSDE approach}

\author{
  Miryana Grigorova\thanks{\textit{Corresponding Author}, Email: \texttt{miryana.grigorova@warwick.ac.uk},  Department of Statistics, University of Warwick} \and
  James Wheeldon\thanks{Email: \texttt{james.wheeldon@warwick.ac.uk}, Department of Statistics, University of Warwick\\\textit{Acknowledgements:} We thank Marie-Claire Quenez for helpful discussions.} 
}
 
\date{\today}  

\maketitle

\begin{abstract}
We study non-linear Backward Stochastic Differential Equations (BSDEs) driven by a Brownian motion and $p$ default martingales. The driver of the BSDE with multiple default jumps can take a generalized form involving an optional finite variation process.  We first show existence and uniqueness. We then  establish comparison and strict comparison results for these BSDEs, under a suitable assumption on the driver. In the case of a linear driver, we derive an explicit formula for the first component of the BSDE using an adjoint exponential semimartingale. The representation depends on whether the finite variation process is predictable or only optional. We apply our results  to the problem of pricing and hedging a European option in a linear complete market with two defaultable assets and in  a non-linear complete market with $p$ defaultable assets. Two examples of the latter market model are provided: an example where the seller of the option is a large investor  influencing the probability of default of a single asset and  an example   where the large seller’s strategy affects the default probabilities of all $p$ assets.
\end{abstract}

\bigskip

\noindent\textbf{Keywords:} BSDEs with Multiple Default Jumps, Generalized Driver, Comparison Theorem, European Option, Non-linear Market, Optional Dividend Process.

\newpage

\section{Introduction}\label{intro}

In this paper, we consider BSDEs with multiple default jumps and explore some applications in a financial context. BSDEs were first introduced by Bismut \cite{bismut1973conjugate}, who studied the linear case in relation to stochastic control. Pardoux and Peng \cite{pardoux1990adapted} established the well-posedness of non-linear BSDEs with Lipschitz drivers in a Brownian filtration.

BSDEs incorporating both continuous and jump components have also been studied in the literature (cf., e.g., \cite{royer2006bsdejumps}, \cite{quenez2013bsdejumps}, \cite{becherer2017monotone},  \cite{dumitrescu2018bsdes}, \cite{papapantoleon2016existence}). 
In \cite{dumitrescu2018bsdes}, BSDEs with a single default jump are studied and applied in a financial context (for developments in the incomplete single default framework, we refer to \cite{grigorova2020european, grigorova2021american}). 

In the present paper, we consider BSDEs driven by a Brownian motion and $p$ compensated default martingales, each associated with a default time $\tau_i > 0$ and an intensity process $\lambda^i = (\lambda_t^i)_{t \in [0, T]}$ for $i \in \{1, \ldots, p\}$. We develop new \emph{a priori} estimates and establish the existence and uniqueness of the solution to the BSDE with multiple default jumps.

We consider an optional (not necessarily predictable), right-continuous left-limited (rcll) process $D$ of finite variation which enters into the driver of the BSDE, leading to a generalized form of the driver: $g(t, y, z, k^1, \ldots, k^p)dt + dD_t$. This modeling choice is motivated by the fact that, in markets with defaultable securities, contingent claims often give rise to intermediate cash flows, particularly at the time of default, as observed in \cite{bielecki2004hedging}.

We extend the definition of $\lambda$-linear drivers introduced in \cite{dumitrescu2018bsdes} to the general case of $\lambda^{(p)}$-linear drivers, where $\lambda^{(p)}$ refers to the vector of $p$ intensity processes $\lambda^1, \ldots, \lambda^p$. When the driver $g$ is $\lambda^{(p)}$-linear, we derive an explicit representation of the solution to the BSDE with multiple default jumps using an associated adjoint semimartingale. This representation depends on whether the process $D$ is predictable or only optional. We also prove a comparison result and a strict comparison result under suitable assumptions on the driver, where we distinguish again the case where  $D$ is predictable and the case where $D$ is  only optional.

We present two financial applications of this model: one in which the market is linear and complete, and another in which the market is complete but non-linear. In both cases, we assume the existence of a risk-free asset, a default-free (jump-free) risky asset, and two or more defaultable assets. We focus on pricing and hedging a European contingent claim with terminal payoff $\eta$ at maturity $T > 0$, and with intermediate `dividends' modeled by the process $D$, where $D$ represents an exogenous cumulative process. The process $D$ is not necessarily predictable.  In the second example, the market is non-linear, with the non-linearity arising from imperfections caused by a large seller whose strategy influences default probabilities.

In the non-linear market setting with $p$ defaults, we show that the price of the contingent claim is $X_{\cdot, T}^g(\eta,D)$, where $X_{\cdot, T}^g(\eta,D)$ denotes the solution to the non-linear BSDE with multiple default jumps, terminal time $T$, terminal condition $\eta$, and   generalized driver of the form $g(t,y,z,k^1,\ldots,k^p)dt + dD_t$. This gives rise to a non-linear pricing system $\mathbf{X}^g:(\eta,D)\mapsto X_{\cdot,T}^g(\eta,D)$, whose properties we study. When $D$ is fixed, we define the associated $(g,D)$-conditional evaluation by $\mathscr{E}_{\cdot,T}^{g,D}(\eta)\coloneqq X_{\cdot,T}^g(\eta,D)$ for  $\eta\in L^2(\mathcal{G}_T),$ and provide its main properties. 

The remainder of the paper is organized as follows: \\
Section \ref{sec:PPS} is dedicated to the study of the BSDE with $p$  default jumps and generalized driver. 
In Subsection \ref{sub-section_BSDE}, we define the BSDE with $p$  default jumps and generalized  $\lambda^{(p)}$-driver  (allowing for an intermediate optional finite variational process $D$);  in Subsection \ref{subsection_existence}, we prove existence and uniqueness, and study the case where the generalized  $\lambda^{(p)}$-driver is linear; in   Subsection \ref{subsection_comparison}, we establish comparison and  strict comparison results under  suitable assumptions on the driver. Section \ref{sec:Pricing} is dedicated to the applications to pricing and hedging a European option in complete linear and non-linear markets. In Subsection \ref{subsec:LinearPricing1}, we present an example of  a \textit{linear} and complete market; in Subsection \ref{subsec:non-linearPricingCompleteMarket}, we present the\textit{ non-linear } complete market with $p$ defaults, and study properties of the associated non-linear pricing system for a European option; in Subsection \ref{subsection_large1}, we provide a  particular case of a  non-linear complete market where a large seller influences  the probability of default of one (defaultable) risky asset and, in Subsection \ref{subsection_large2}, we  consider the example where the large seller's trading strategy influences the probabilities of default of all $p$  (defaultable) assets. 
\section{The Underlying Probability Setup}\label{sec:PPS}
In the sequel, we fix $T > 0$ to be the finite time horizon. Let $(\Omega, \mathcal{F}, P)$ be a complete probability space, and let $(W_t)_{t \in [0, T]}$ be a one-dimensional Brownian motion. Let $\mathbb{F} \coloneqq (\mathcal{F}_t)_{t \in [0, T]}$ denote the augmented filtration generated by $W$. Inequalities and equalities between random variables are to be understood in the almost sure (a.s.) sense with respect to $P$. 

Let $p \in \mathbb{N} \setminus \{0\}$, and let $\tau_1, \tau_2, \ldots, \tau_p$ be positive random times. We assume that, for each $i \in \{1, \ldots, p\}$, the random time $\tau_i$ has a continuous distribution, and that $P(\tau_i \ne \tau_j) = 1$ for all $i, j \in \{1, \ldots, p\}$ with $i \ne j$. Moreover, we assume that the $p$ default times are strictly ordered, i.e., $\tau_1 < \tau_2 < \cdots < \tau_p$. We will interpret  $\tau_i$  as the $i$-th default (or $i$-th credit event) time. For  $i \in \{1, \ldots, p\}$ and $t \in [0, T]$, we define $N_t^i \coloneqq \mathbbm{1}_{\{\tau_i \le t\}}$.

For each $i \in \{1, \ldots, p\}$, let $\mathbb{F}^i \coloneqq (\mathcal{F}_t^i)_{t \in [0, T]}$ denote the smallest right-continuous filtration making $\tau_i$ an $\mathbb{F}^i$-stopping time. We define the enlarged filtration $\mathbb{G} \coloneqq \mathbb{F} \vee \mathbb{F}^1 \vee \cdots \vee \mathbb{F}^p$. The filtration $\mathbb{G}$ is, in fact, the augmented filtration generated by $W$ and the default indicator processes $N^1, N^2, \ldots, N^p$. We \textit{assume} that the $\mathbb{F}$-Brownian motion $W$ remains a $\mathbb{G}$-Brownian motion.

By definition, for each $i$, the process $N^i = (N_t^i)_{t \in [0, T]}$ is non-decreasing, $\mathbb{G}$-adapted, and thus a $\mathbb{G}$-submartingale. Let $\Lambda^i = (\Lambda_t^i)$ denote the $\mathbb{G}$-predictable compensator of $N^i$. Note that the process $(\Lambda_{t \wedge \tau_i}^i)$ is the $\mathbb{G}$-predictable compensator of $(N_{t \wedge \tau_i}^i)$. By the uniqueness of the predictable compensator, we have $\Lambda_{t \wedge \tau_i}^i = \Lambda^i_t$ for $t \ge 0$ a.s.

We further assume that each process $\Lambda^i$ is absolutely continuous with respect to the Lebesgue measure. This implies the existence of a non-negative $\mathbb{G}$-predictable process $(\lambda_t^i)_{t \in [0, T]}$, called the $i$-th \emph{intensity process}, such that $
\Lambda_t^i = \int_0^t \lambda_s^i ds$, for $t \in [0,T]$.
Since $\Lambda_t^i = \Lambda_{t \wedge \tau_i}^i$, it follows that $\lambda_t^i = 0$ for $t > \tau_i$.

For each $i \in \{1, \ldots, p\}$, we define the $\mathbb{G}$-compensated default martingale $M^i$ by
\begin{equation}\label{eq:2.1}
    M_t^i = N_t^i - \int_0^t \lambda_s^i ds,
\end{equation}
for all $t \in [0, T]$.

\subsection{BSDEs with Multiple Default Jumps}\label{sub-section_BSDE}

We define the following:

\begin{itemize}
	\item $\mathcal{P}$ is the \textbf{predictable} $\sigma$-algebra on $\Omega\times[0,T]$.
    \item $\mathcal{S}_T^2$ is the set of $\mathbb{G}$-adapted right-continuous left-limited (rcll) processes $\varphi$ such that $\mathbb{E}[\sup_{0\le t\le T}|\varphi_t|^2]<+\infty$.
    \item $\mathcal{A}_T^2$ is the set of real-valued \textbf{finite variation} rcll $\mathbb{G}$-adapted processes $A$, with $\mathbb{E}[(\int_0^T|dA_t|)^2]<\infty$ and such that  $A_0=0$.
    \item $\mathcal{A}_{p, T}^2$ is the subset of all \textbf{predictable} processes in $\mathcal{A}_T^2$.
    \item $\mathcal{H}_T^2$ is the space of all $\mathbb{G}$-predictable processes endowed with  $\|Z\|^2 \coloneqq \mathbb{E}[\int_0^T|Z_t|^2dt]<\infty$.
    \item $\mathcal{H}_{\lambda^i, T}^2$ is the space $L^2(\Omega \times [0,T], \mathcal{P}, \lambda_tdP\otimes dt)$ equipped with the scalar product $\langle U,V\rangle_{\lambda^i}\coloneqq\mathbb{E}[\int_0^TU_tV_t\lambda_t^idt]<\infty$. For all $U\in\mathcal{H}_{\lambda^i,T}^2$ we define $\|U\|_{\lambda^i}^2\coloneqq\mathbb{E}[\int_0^T|U_t|^2\lambda_t^idt]<\infty$.
\end{itemize}

If it is obvious that we are working under the finite time horizon $T$, we might drop the $T$ subscript from the above notation.

As the $\mathbb{G}$-intensity $\lambda_t^i$ disappears for $t>\tau_i$ we have for all $U\in\mathcal{H}_{\lambda^i}^2$ $\|U\|_{\lambda^i}^2=\mathbb{E}[\int_0^{T\wedge\tau_i}|U_t|^2\lambda_t^idt]$ {(hence, the values of $U$ after  $\tau_i$ do not intervene in the computation of $\|U\|_{\lambda^i}^2$).

For our framework of multiple defaults, we recall the martingale representation property from \cite{kusuoka1999remark} (see also \cite{grigorian2023enlargement} Theorem 107).

\begin{theorem}[\textbf{Martingale Representation Property}]\label{theorem:MRP}
    For any $(\mathbb{P}, \mathbb{G})$-square-integrable martingale $(m_t)_{t\in[0,T]}$ there exist unique $\mathbb{G}$-predictable processes $z\in\mathcal{H}_T^2$ and  $k^i\in\mathcal{H}_{\lambda^i,T}^2$ for all $i\in\{1,\ldots,p\}$, such that the following martingale representation property holds,
    \begin{equation}\label{eq:2-MRP}
        m_t = m_0 + \int_0^tz_sdW_s + \sum_{i=1}^p\int_0^tk_s^idM_s^i.
    \end{equation}
 
\end{theorem}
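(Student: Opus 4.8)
The statement is a predictable representation property (PRP) for the progressively enlarged filtration $\mathbb{G}$, and it is essentially the content of \cite{kusuoka1999remark} (see also \cite[Theorem 107]{grigorian2023enlargement}). The plan is to reduce the $p$-default case to the stability of the PRP under a single progressive enlargement by a random time with continuous law, and then to iterate this $p$ times; uniqueness is handled separately from the mutual orthogonality of the driving martingales.

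I would treat uniqueness first, as it is routine. The martingales $W, M^1, \dots, M^p$ are pairwise strongly orthogonal: $[W, M^i] \equiv 0$ since $W$ is continuous, while for $i \neq j$ one has $[M^i, M^j]_t = \sum_{s \le t} \Delta N^i_s \, \Delta N^j_s = \mathbbm{1}_{\{\tau_i = \tau_j \le t\}} = 0$ a.s.\ by the assumption $P(\tau_i \neq \tau_j) = 1$. Hence, if $m$ admitted two representations with integrands in the stated spaces, their difference would be a square-integrable $\mathbb{G}$-martingale equal to zero; computing its predictable quadratic variation and using the It\^o isometry $\mathbb{E}\big[\big(\int_0^T z_s \, dW_s\big)^2\big] = \|z\|^2$ together with the compensation isometry $\mathbb{E}\big[\big(\int_0^T k^i_s \, dM^i_s\big)^2\big] = \|k^i\|_{\lambda^i}^2$ would force the two sets of integrands to coincide in $\mathcal{H}_T^2$ and in $\mathcal{H}_{\lambda^i, T}^2$ respectively.

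For existence, I would first note that the claim is equivalent to saying that the stable subspace of square-integrable $\mathbb{G}$-martingales null at $0$ generated by $W, M^1, \dots, M^p$ is the whole space (equivalently, every bounded $\mathbb{G}$-martingale strongly orthogonal to each of these processes is a.s.\ constant). This I would prove by induction on the number of defaults. The base case is the classical Brownian martingale representation theorem, valid because $\mathbb{F}$ is the augmented filtration of $W$. For the inductive step, writing $\mathbb{G}^0 \coloneqq \mathbb{F}$ and $\mathbb{G}^i \coloneqq \mathbb{F} \vee \mathbb{F}^1 \vee \cdots \vee \mathbb{F}^i$, I would assume the PRP holds in $\mathbb{G}^{i-1}$ with respect to $W$ (which is still a Brownian motion for each such sub-filtration of $\mathbb{G}$) and the $\mathbb{G}^{i-1}$-compensated martingales of $N^1, \dots, N^{i-1}$, and observe that $\mathbb{G}^i = \mathbb{G}^{i-1} \vee \mathbb{F}^i$ is a progressive enlargement by the single random time $\tau_i$, which, having a continuous law, avoids $\mathbb{G}^{i-1}$-stopping times and is $\mathbb{G}^i$-totally inaccessible. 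Under the standing hypotheses of the paper, namely that $W$ remains a $\mathbb{G}$-Brownian motion and that each $\Lambda^i$ is absolutely continuous, this enlargement is of the kind for which the PRP is stable, the new generator being the $\mathbb{G}^i$-compensated martingale of $N^i$; invoking the corresponding stability result gives the PRP in $\mathbb{G}^i$. It then remains to identify the compensators along the chain: because $\tau_1 < \tau_2 < \cdots < \tau_p$, revealing the later defaults does not modify the compensator of $N^i$ on $[0, \tau_i]$, so the $\mathbb{G}$-intensity $\lambda^i$ agrees up to $\tau_i$ with the intensity computed in $\mathbb{G}^i$, and hence the $\mathbb{G}$-version of $M^i$ coincides with its $\mathbb{G}^i$-version; chaining the steps $i = 1, \dots, p$ then yields \eqref{eq:2-MRP} in $\mathbb{G}$.

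The hard part is the inductive step. On the one hand, one must verify that each enlargement from $\mathbb{G}^{i-1}$ to $\mathbb{G}^i$ genuinely fits the hypotheses of an available PRP-stability theorem — this is where the density/immersion-type content concealed in the assumptions that $W$ remains a $\mathbb{G}$-Brownian motion and that each $\Lambda^i$ is absolutely continuous is actually used. On the other hand, one must keep careful track of the successive compensators along $\mathbb{G}^0 \subset \mathbb{G}^1 \subset \cdots \subset \mathbb{G}^p = \mathbb{G}$; the strict ordering $\tau_1 < \cdots < \tau_p$ is precisely what makes this bookkeeping coherent, ensuring that the intensities stabilize and that the $p$ enlargement steps compose cleanly into the single statement of the theorem.
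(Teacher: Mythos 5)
The paper does not prove this statement at all: Theorem \ref{theorem:MRP} is recalled verbatim from \cite{kusuoka1999remark} (see also \cite{grigorian2023enlargement}, Theorem 107) and no proof is given in the text. So there is no ``paper's own proof'' to compare against; what can be assessed is whether your blind argument would stand on its own. Your uniqueness argument does: the pairwise strong orthogonality of $W, M^1,\ldots,M^p$ (continuity of $W$, and $[M^i,M^j]_t=\mathbbm{1}_{\{\tau_i=\tau_j\le t\}}=0$ for $i\ne j$), combined with the It\^o and compensation isometries $\mathbb{E}[(\int_0^T z_s\,dW_s)^2]=\|z\|^2$ and $\mathbb{E}[(\int_0^T k^i_s\,dM^i_s)^2]=\|k^i\|^2_{\lambda^i}$ (the latter because $\langle M^i\rangle_t=\int_0^t\lambda^i_s\,ds$), forces the integrands of two representations to coincide in the stated $L^2$ spaces. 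That part is complete and correct.

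The existence part, however, is a roadmap rather than a proof. Both load-bearing steps are delegated to an unnamed ``PRP-stability theorem'' whose hypotheses you do not verify. Concretely: (a) the propagation of the PRP from $\mathbb{G}^{i-1}$ to $\mathbb{G}^i=\mathbb{G}^{i-1}\vee\mathbb{F}^i$ requires, beyond $\tau_i$ avoiding $\mathbb{G}^{i-1}$-stopping times, an immersion of $\mathbb{G}^{i-1}$ into $\mathbb{G}^i$ (every square-integrable $\mathbb{G}^{i-1}$-martingale, including the compensated martingales of $N^1,\ldots,N^{i-1}$ computed in $\mathbb{G}^{i-1}$, must remain a $\mathbb{G}^i$-martingale). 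The paper's standing assumption only gives immersion of $\mathbb{F}$ into $\mathbb{G}$ (via $W$ remaining a $\mathbb{G}$-Brownian motion); the intermediate immersions are neither assumed nor derived in your argument. (b) The assertion that the compensator of $N^i$ is unchanged when passing from $\mathbb{G}^i$ to $\mathbb{G}$ --- i.e.\ that the $\mathbb{G}$-intensity $\lambda^i$ agrees with the $\mathbb{G}^i$-intensity up to $\tau_i$, so that ``the $\mathbb{G}$-version of $M^i$ coincides with its $\mathbb{G}^i$-version'' --- is exactly the nontrivial content of the result and is stated, not proven; the strict ordering $\tau_1<\cdots<\tau_p$ makes it plausible but does not by itself establish it. Since the authors themselves treat the theorem as an imported result, citing it (as you implicitly do) is a legitimate resolution; but read as a self-contained proof, these two gaps are genuine.
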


In the following definition,  we extend the notion of  $\lambda$-driver from \cite{dumitrescu2018bsdes} in order to account for multiple default jumps. For simplicity we denote by $\lambda^{(p)}$ the vector $(\lambda^1,\ldots,\lambda^p)'$ of default intensities. Here the notation $'$ is for the transposition of the vector.

\begin{definition}[\textbf{$\lambda^{(p)}$-Admissible Driver}]\label{defintion:2_LambdaDriver}
    We say that a function $g$ is a driver if $g:\Omega\times[0,T]\times\mathbb{R}^{2+p}\rightarrow\mathbb{R}$ with $(\omega, t, y, z, k^1,\ldots,k^p)\mapsto g(\omega,t,y,z,k^1,\ldots,k^p)$ is a $\mathcal{P}\otimes\mathcal{B}(\mathbb{R}^{2+p})$-measurable function such that $g(\cdot,\cdot,0,0,\ldots,0)\in\mathcal{H}_T^2$. The driver $g$ is said to be $\lambda^{(p)}$-admissible if there exists a constant $C\ge0$ such that for $dP\otimes dt$-almost every $(\omega, t)$, for all $(y_1,z_1,k_1^1,\ldots,k_1^p)$, $(y_2, z_2, k_2^1,\ldots,k_2^p)$ we have,
    \begin{multline}\label{eq:2-LAD}
        |g(\omega, t, y_1,z_1,k_1^1,\ldots,k_1^p) - g(\omega, t, y_2,z_2,k_2^1,\ldots,k_2^p)| \le\\ C\left(|y_1-y_2|+|z_1-z_2| + \sum_{i=1}^p\sqrt{\lambda_t^i(\omega)}|k_1^i-k_2^i|\right).
    \end{multline}
\end{definition}

\begin{remark}
    From the condition  in \eqref{eq:2-LAD} we have that for each $(y, z, k^1,\ldots,k^p)$,  $g(t,y,z,k^1,\ldots,k^j,k^{j+1},\ldots,k^p)=g(t,y,z,0,\ldots,0,k^{j+1},\ldots,k^p)$ for $t>\tau_j$ $dP\otimes dt$-a.e., where we have used the fact that $\lambda^j$ disappears after $\tau_j$ and the assumption that the stopping times $\tau_1,\ldots,\tau_p$ are ordered. Hence, on the set $\{t>\tau_j\}$, the $\lambda^{(p)}$-admissible driver $g$ does not depend on $k^1,\ldots,k^j$.
\end{remark}

\begin{definition}[\textbf{BSDE with a $\lambda^{(p)}$-Admissible Driver}]
    Let $g$ be a $\lambda^{(p)}$-admissible driver and let $\eta\in L^2(\mathcal{G}_T)$. A process  $(Y, Z, K^1,\ldots,K^p)$ in $\mathcal{S}_T^2\times\mathcal{H}_T^2\times\mathcal{H}_{\lambda^1,T}^2\times\cdots\times\mathcal{H}_{\lambda^p,T}^2$ is said to be a solution of the BSDE with $p$ default jumps, with a terminal time $T$, a $\lambda^{(p)}$-admissible driver $g$, and a terminal condition $\eta$, if it satisfies the following:
    \begin{equation}
        -dY_t = g(t, Y_t, Z_t, K_t^1,\ldots,K_t^p)dt - Z_tdW_t - \sum_{i=1}^pK_t^idM_t^i,\quad Y_T=\eta.
    \end{equation}
\end{definition}

As we will see later, when dealing with (possibly non-linear) pricing problems in markets with defaults, contingent claims might  generate intermediate cash flows. These may arise, for instance,  from promised dividends, which can be modeled as a stream of continuous or discrete random cash flows received by the claim holder, or from a recovery process, which provides a recovery payoff in the event that a default occurs before time $T$. 
It is convenient to `wrap' these various sources of intermediate cash flows into a single `dividend' process $D$, where $D$ is assumed to be optional (but not necessarily predictable), right-continuous with left limits (rcll), and of finite variation.

Thus, we are interested in BSDEs with \textbf{generalized} drivers which include a process $D \in \mathcal{A}_T^2$.

\begin{definition}[\textbf{BSDE with a \emph{Generalized} $\lambda^{(p)}$-Admissible Driver}]\label{defintion:2_GenLambdaDriver}
    Let $g$ be a $\lambda^{(p)}$-admissible driver, let $\eta \in L^2(\mathcal{G}_T)$, and let $D \in \mathcal{A}_T^2$. A process $(Y, Z, K^1, \ldots, K^p)$ in $\mathcal{S}_T^2 \times \mathcal{H}_T^2 \times \mathcal{H}_{\lambda^1,T}^2 \times \cdots \times \mathcal{H}_{\lambda^p,T}^2$ is said to be a solution of the BSDE with $p$ default jumps, with a terminal time $T$, a \textbf{generalized} $\lambda^{(p)}$-admissible driver $g(t,y,z,k^1,\ldots,k^p)\,dt + dD_t$, and a terminal condition $\eta$, if it satisfies the following:
    \begin{equation}\label{eq:2gen_BSDE}
        -dY_t = g(t,Y_t, Z_t, K_t^1, \ldots, K_t^p)\,dt + dD_t - Z_t\,dW_t - \sum_{i=1}^p K_t^i\,dM_t^i, \quad Y_T = \eta.
    \end{equation}
\end{definition}

We emphasize that, in Equation \eqref{eq:2gen_BSDE}, the process $D$ is a finite variational, rcll, adapted process such that $D_0 = 0$ and its total variation is integrable. This implies that $D$ has at most a countable number of jumps and admits the canonical decomposition $D = A - A'$, where $A$ and $A'$ are integrable, non-decreasing, rcll, adapted processes starting at zero (i.e., $A_0 = A'_0 = 0$), and such that the mutual singularity condition $dA_t \perp dA'_t$ is satisfied (cf.,\ e.g., Proposition A.7 in \cite{dumitrescu2016generalized}). In the case where $D$ is predictable, the processes $A$ and $A'$ are also predictable.

\begin{proposition}\label{prop:OptionalpJumps}
    If $D\in\mathcal{A}_T^2$, then there exist a unique (predictable) process $D'\in\mathcal{A}_{p,T}^2$ and unique (predictable) processes $\theta^1\in\mathcal{H}_{\lambda^1,T}^2$, $\theta^2\in\mathcal{H}_{\lambda^2,T}^2$, $\ldots$ , $\theta^p\in\mathcal{H}_{\lambda^p,T}^2$, such that for all $t\in[0,T]$,
    \begin{equation}\label{eq:2_D_Optional}
        D_t = D_t' + \sum_{i=1}^p\int_0^t\theta_s^idN_s^i.
    \end{equation}
\end{proposition}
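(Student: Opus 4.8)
The plan is to reduce the statement to the martingale representation property of Theorem~\ref{theorem:MRP}: first replace $D$ by a martingale (by subtracting its compensator), then apply the representation, and finally use that $D$ — hence that martingale — has finite variation to discard the Brownian term.

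Concretely, I would let $D^{(p)}$ denote the $\mathbb{G}$-compensator (dual predictable projection) of $D$. Standard properties of the compensator give $D^{(p)}\in\mathcal{A}_{p,T}^2$ when $D\in\mathcal{A}_T^2$, so $m\coloneqq D-D^{(p)}$ is a square-integrable $\mathbb{G}$-martingale of finite variation with $m_0=0$. Theorem~\ref{theorem:MRP} then yields unique predictable $z\in\mathcal{H}_T^2$ and $k^i\in\mathcal{H}_{\lambda^i,T}^2$ with $m_t=\int_0^t z_s\,dW_s+\sum_{i=1}^p\int_0^t k^i_s\,dM^i_s$. Since each $\int_0^\cdot k^i_s\,dM^i_s$ has finite variation (as $M^i=N^i-\int_0^\cdot\lambda^i_s\,ds$ does), the process $\int_0^\cdot z_s\,dW_s=m-\sum_i\int_0^\cdot k^i_s\,dM^i_s$ is a continuous local martingale of finite variation null at $0$, hence vanishes, so $z=0$ $dP\otimes dt$-a.e.; substituting the definition of $M^i$ gives
\[
D_t=\Big(D_t^{(p)}-\sum_{i=1}^p\int_0^t k^i_s\lambda^i_s\,ds\Big)+\sum_{i=1}^p\int_0^t k^i_s\,dN^i_s .
\]
Setting $D'\coloneqq D^{(p)}-\sum_i\int_0^\cdot k^i_s\lambda^i_s\,ds$ and $\theta^i\coloneqq k^i$ yields \eqref{eq:2_D_Optional}: $\theta^i\in\mathcal{H}_{\lambda^i,T}^2$ by construction; $D'$ is predictable, being the difference of the predictable $D^{(p)}$ and the continuous adapted processes $\int_0^\cdot k^i_s\lambda^i_s\,ds$, and null at $0$; and $D'\in\mathcal{A}_{p,T}^2$, since $D^{(p)}\in\mathcal{A}_{p,T}^2$ while $\int_0^\cdot|k^i_s|\lambda^i_s\,ds$ is the compensator of $\int_0^\cdot|k^i_s|\,dN^i_s$, whose variation at $T$ is $|k^i_{\tau_i}|\mathbbm{1}_{\{\tau_i\le T\}}$ with second moment $\|k^i\|_{\lambda^i}^2<\infty$.

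For uniqueness, suppose $D'+\sum_i\int_0^\cdot\theta^i_s\,dN^i_s=\tilde D'+\sum_i\int_0^\cdot\tilde\theta^i_s\,dN^i_s$ with both sides satisfying the stated conditions, and put $R\coloneqq D'-\tilde D'=\sum_i\int_0^\cdot(\tilde\theta^i_s-\theta^i_s)\,dN^i_s$. The right-hand side is a pure-jump process with jumps only at $\tau_1,\dots,\tau_p$; taking $\mathbb{G}$-compensators, the (predictable) left-hand side equals its own compensator while the compensator of the right-hand side is the \emph{continuous} process $\sum_i\int_0^\cdot(\tilde\theta^i_s-\theta^i_s)\lambda^i_s\,ds$. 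Hence $R$ is simultaneously pure-jump and continuous with $R_0=0$, so $R\equiv0$; in particular $\theta^i_{\tau_i}=\tilde\theta^i_{\tau_i}$ a.s. on $\{\tau_i\le T\}$, and since $|\theta^i-\tilde\theta^i|^2$ is predictable and $\int_0^\cdot\lambda^i_s\,ds$ compensates $N^i$, $\|\theta^i-\tilde\theta^i\|_{\lambda^i}^2=\mathbb{E}[|\theta^i_{\tau_i}-\tilde\theta^i_{\tau_i}|^2\mathbbm{1}_{\{\tau_i\le T\}}]=0$; finally $R\equiv0$ also forces $D'=\tilde D'$.

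The substantive content lies in two features of $\mathbb{G}$: the martingale representation property of Theorem~\ref{theorem:MRP}, which forces the jumps of any $\mathbb{G}$-martingale onto $\bigcup_i[\![\tau_i]\!]$, and the total inaccessibility of each $\tau_i$ — equivalently, the continuity of $\Lambda^i=\int_0^\cdot\lambda^i_s\,ds$, which holds by the assumed absolute continuity of $\Lambda^i$ — which is exactly what makes $D'$ predictable with no residual jump at the default times and what drives the compensator argument for uniqueness. I expect the only genuinely technical point to be the $L^2$-bookkeeping placing $D'$ in $\mathcal{A}_{p,T}^2$ and $\theta^i$ in $\mathcal{H}_{\lambda^i,T}^2$; the rest is a direct application of Theorem~\ref{theorem:MRP} together with elementary properties of compensators.
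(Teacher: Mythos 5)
Your proof is correct, but it is organized differently from the paper's. The paper first writes $D=A-\hat A$ as a difference of two non-decreasing processes in $\mathcal{A}_T^2$ and then invokes Lemma~\ref{lemma:hDecomposition}, which treats each monotone piece as a square-integrable submartingale, applies the Doob--Meyer decomposition, and only then uses the martingale representation of Theorem~\ref{theorem:MRP}; the two resulting decompositions are subtracted. You instead take the dual predictable projection $D^{(p)}$ of the signed process $D$ directly, apply Theorem~\ref{theorem:MRP} once to the finite-variation martingale $D-D^{(p)}$, and observe explicitly that the Brownian component must vanish because a continuous local martingale of finite variation is constant. Since the Doob--Meyer compensator of a non-decreasing process is its dual predictable projection, the two routes rest on the same mechanism (compensate, then represent), but yours avoids the splitting into monotone pieces and makes two points explicit that the paper leaves implicit: the vanishing of $z$ (in the paper this is hidden in the claim that $B$ is non-decreasing, hence of finite variation), and the uniqueness of the decomposition of $D$ itself --- the paper deduces uniqueness from the uniqueness of the decompositions of $A$ and $\hat A$ separately, which strictly speaking requires an extra word, whereas your compensator argument (a predictable finite-variation process that is also a pure-jump process supported on the totally inaccessible times $\tau_1,\dots,\tau_p$ must vanish) settles it directly and also yields $\|\theta^i-\tilde\theta^i\|_{\lambda^i}=0$. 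The only step you should justify with a reference is the assertion that $D^{(p)}\in\mathcal{A}_{p,T}^2$, which relies on the standard $L^2$-energy inequality $\mathbb{E}[(\tilde A_T)^2]\le 4\,\mathbb{E}[A_T^2]$ for the compensator $\tilde A$ of a non-decreasing integrable process $A$ (applied to the variation process of $D$); the paper needs the analogous fact for the Doob--Meyer compensator, so this is not a gap relative to its own level of detail.
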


\begin{proof}

    As $D\in\mathcal{A}_T^2$, we have the canonical decomposition $D = A - \hat{A}$, where $A,\hat{A}$ are non-decreasing processes in $\mathcal{A}_T^2$, such that $dA_t\bot d\hat{A}$. By applying Lemma \ref{lemma:hDecomposition} to $A$ and $\hat{A}$, we get, that $A$ and $\hat{A}$ can be uniquely decomposed as,
    \begin{equation*}
        A_t = B_t + \sum_{i=1}^p\int_0^t \psi_s^idN_s^i,
    \end{equation*}
    and
    \begin{equation*}
        \hat{A}_t = \hat{B}_t + \sum_{i=1}^p\int_0^t \hat{\psi}_s^idN_s^i,
    \end{equation*}
    where $(B_t)$ and $(\hat{B}_t)$ are predictable (non-decreasing) and in $\mathcal{A}_{p,T}^2$, and for each $i\in\{1,\ldots,p\}$, $(\psi_t^i)$ and $(\hat{\psi}_t^i)$ are in $\mathcal{H}_{\lambda^i,T}^2$. By setting $D_t' \coloneqq B_t -\hat{B}_t$, and, for each $i\in\{1,\ldots,p\}$, $\theta_t^i\coloneqq\psi_t^i-\hat{\psi}_t^i$, we get the desired property.
\end{proof}

\subsection{BSDE with Multiple Default Jumps: Properties}\label{subsection_existence}

We begin by establishing \emph{a priori} estimates for BSDEs with $p$ default jumps. For $\beta > 0$, $\phi \in \mathcal{H}_T^2$, and $k^i \in \mathcal{H}_{\lambda^i,T}^2$, we introduce the following: $\|\phi\|_\beta^2 \coloneqq \mathbb{E}[\int_0^T e^{\beta t} \phi_t^2 \, dt]$, $\|k^1\|_{\lambda^1,\beta}^2 \coloneqq \mathbb{E}[\int_0^T e^{\beta t} (k_t^1)^2 \lambda_t^1 \, dt], \ldots, \|k^p\|_{\lambda^p,\beta}^2 \coloneqq \mathbb{E}[\int_0^T e^{\beta t} (k_t^p)^2 \lambda_t^p \, dt]$, and $\|k^{(p)}\|_{\lambda^{(p)},\beta}^2 \coloneqq \sum_{i=1}^p \|k^i\|_{\lambda^i,\beta}^2 = \mathbb{E}[\int_0^T e^{\beta t} \sum_{i=1}^p (k_t^i)^2 \lambda_t^i \, dt]$.

\subsubsection{A Priori Estimates for BSDEs with Multiple Default Jumps}

\begin{proposition}\label{prop:PE}
    Let $\eta$, $\hat{\eta}\in L^2(\mathcal{G}_T)$. Let $g$, $\hat{g}$ be two $\lambda^{(p)}$-admissible drivers. Let $C>0$ be a $\lambda^{(p)}$-constant associated to $g$. Let $D$ be an optional process belonging to $\mathcal{A}_T^2$.
    Let $(Y,Z,K^1,\ldots,K^p)$ and $(\hat{Y}, \hat{Z}, \hat{K}^1,\ldots,\hat{K}^p)$ be  solutions to the BSDEs associated with terminal time $T>0$, generalized drivers $g(t,y,z,k^1,\ldots,k^p)dt + dD_t$ and $\hat{g}(t,\hat{y},\hat{z},\hat{k}^1,\ldots,\hat{k}^p) + dD_t$ respectively, and terminal conditions $\eta$ and $\hat{\eta}$ respectively.
    Let $\Bar{\eta}\coloneqq\eta - \hat{\eta}$.  For $s\in[0,T]$, we denote $\Bar{Y}_s\coloneqq Y_s - \hat{Y}_s$, $\Bar{Z}_s\coloneqq Z_s - \hat{Z}_s$, and for each $i\in\{1,\ldots,p\}$, we denote $\Bar{K}_s^i\coloneqq K_s^i - \hat{K}_s^i$.

    Let $\xi$, $\beta>0$ be such that $\beta\ge\frac{p+2}{\xi}+2C$ and $\xi\le\frac{1}{C^2}$. Then, for each $t\in[0,T]$, it holds
    \begin{equation}\label{eq:peProp1}
        e^{\beta t}\Bar{Y}_t^2 \le \mathbb{E}\left[e^{\beta T}\Bar{\eta}^2\middle|\mathcal{G}_t\right] + \xi\text{ }\mathbb{E}\left[\int_t^Te^{\beta s}\Bar{g}_s^2ds\text{ }\middle|\text{ }\mathcal{G}_t\right]\quad\text{a.s.},
    \end{equation}
    where $\Bar{g}_s\coloneqq g(s,\hat{Y}_s,\hat{Z}_s,\hat{K}_s^1\ldots,\hat{K}_s^p) - \hat{g}(s,\hat{Y}_s,\hat{Z}_s,\hat{K}_s^1\ldots,\hat{K}_s^p)$. Further,
    \begin{equation}\label{eq:peProp2}
        \|\Bar{Y}\|_\beta^2\le T\left[e^{\beta T}\mathbb{E}[\Bar{\eta}^2] + \xi\text{ }\|\Bar{g}\|_\beta^2\right].
    \end{equation}
    Moreover, if $\xi < \frac{1}{C^2}$, we have,
    \begin{equation}\label{eq:peProp3}
        \|\Bar{Z}\|_\beta^2 + \|\Bar{K}^{(p)}\|_{\lambda^{(p)},\beta}^2 \le \frac{1}{(1-C^2\xi)}\left[e^{\beta T}\mathbb{E}[\Bar{\eta}^2] + \xi\text{ }\|\Bar{    g}\|_\beta^2\right].
    \end{equation}
\end{proposition}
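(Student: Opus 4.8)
The plan is to apply Itô's formula to the process $s \mapsto e^{\beta s}\bar Y_s^2$ on $[t,T]$, exploiting the fact that the jumps contributed by the intermediate process $D$ cancel (since $D$ is common to both BSDEs, so $\bar Y = Y - \hat Y$ has no $dD$ term). Writing the dynamics of $\bar Y$ as $-d\bar Y_s = \big(g(s,Y_s,Z_s,K_s^{(p)}) - \hat g(s,\hat Y_s,\hat Z_s,\hat K_s^{(p)})\big)ds - \bar Z_s\,dW_s - \sum_i \bar K_s^i\,dM_s^i$, and noting the jump of $\bar Y$ at $\tau_i$ equals $\bar K_{\tau_i}^i$ while the quadratic variation picks up $|\bar Z_s|^2\,ds + \sum_i |\bar K_s^i|^2\,dN_s^i$, the integration-by-parts identity gives, after taking conditional expectation given $\mathcal G_t$ (the stochastic integrals being true martingales thanks to the $\mathcal S^2 \times \mathcal H^2 \times \prod_i \mathcal H_{\lambda^i}^2$ integrability),
\begin{equation}\label{eq:pe-ito}
  e^{\beta t}\bar Y_t^2 + \mathbb E\!\left[\int_t^T e^{\beta s}\Big(\beta \bar Y_s^2 + |\bar Z_s|^2 + \sum_{i=1}^p \lambda_s^i |\bar K_s^i|^2\Big)ds\,\Big|\,\mathcal G_t\right]
  = \mathbb E\!\left[e^{\beta T}\bar\eta^2\,\big|\,\mathcal G_t\right] + 2\,\mathbb E\!\left[\int_t^T e^{\beta s}\bar Y_s\,\delta_s\,ds\,\Big|\,\mathcal G_t\right],
\end{equation}
where $\delta_s \coloneqq g(s,Y_s,Z_s,K_s^{(p)}) - \hat g(s,\hat Y_s,\hat Z_s,\hat K_s^{(p)})$; here I have replaced the jump terms $dN^i$ by their compensated form plus $\lambda^i\,ds$ and absorbed the martingale parts.

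Next I would split $\delta_s = \big(g(s,Y_s,Z_s,K_s^{(p)}) - g(s,\hat Y_s,\hat Z_s,\hat K_s^{(p)})\big) + \bar g_s$ and estimate the first bracket by the $\lambda^{(p)}$-admissibility bound \eqref{eq:2-LAD}, so that
\[
  |\delta_s| \le C\Big(|\bar Y_s| + |\bar Z_s| + \sum_{i=1}^p \sqrt{\lambda_s^i}\,|\bar K_s^i|\Big) + |\bar g_s|.
\]
Then $2|\bar Y_s||\delta_s|$ is bounded using Young's inequality $2ab \le \tfrac1\xi a^2 + \xi b^2$ applied term by term: the $|\bar Y_s|$ against itself produces $2C\bar Y_s^2$; the cross terms $2C|\bar Y_s||\bar Z_s|$, $2C|\bar Y_s|\sqrt{\lambda_s^i}|\bar K_s^i|$ ($p$ of them), and $2|\bar Y_s||\bar g_s|$ each yield a $\tfrac1\xi \bar Y_s^2$ contribution (there are $p+2$ such terms) plus $\xi C^2|\bar Z_s|^2$, $\xi C^2 \lambda_s^i|\bar K_s^i|^2$, and $\xi|\bar g_s|^2$ respectively. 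Collecting, the $\bar Y_s^2$ coefficient on the right is $2C + \tfrac{p+2}{\xi}$, which is $\le \beta$ by hypothesis, so it is dominated by the $\beta \bar Y_s^2$ term on the left; the $|\bar Z_s|^2$ and $\lambda_s^i|\bar K_s^i|^2$ coefficients are $C^2\xi \le 1$, dominated by the corresponding left-hand terms. This immediately yields \eqref{eq:peProp1} by dropping the (nonnegative) remaining integral terms on the left. For \eqref{eq:peProp2}, multiply \eqref{eq:peProp1} by $e^{-\beta t}$... rather, integrate \eqref{eq:peProp1} in $t$ over $[0,T]$, take expectations, use $\mathbb E[\mathbb E[\,\cdot\,|\mathcal G_t]] = \mathbb E[\,\cdot\,]$ and bound the resulting time-integrals crudely by their value at $t=0$ times $T$ (since $\mathbb E\big[\int_t^T e^{\beta s}\bar g_s^2\,ds\big] \le \|\bar g\|_\beta^2$ and similarly $\mathbb E[e^{\beta T}\bar\eta^2] = e^{\beta T}\mathbb E[\bar\eta^2]$ is constant in $t$), giving the factor $T$. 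For \eqref{eq:peProp3}, go back to \eqref{eq:pe-ito} with $t=0$, keep the $|\bar Z_s|^2$ and $\lambda_s^i|\bar K_s^i|^2$ terms, and after the same Young estimates the surviving coefficient in front of them is $1 - C^2\xi > 0$ (strict since now $\xi < 1/C^2$), while $e^{\beta\cdot 0}\bar Y_0^2 \ge 0$ is discarded; rearranging gives the stated bound.

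The main obstacle I anticipate is purely bookkeeping rather than conceptual: getting the jump terms in the Itô formula right for the rcll process $\bar Y$ driven by the compensated default martingales $M^i$ — specifically, checking that $\Delta\bar Y_{\tau_i} = \bar K_{\tau_i}^i$ and that $d[\bar Y]_s = |\bar Z_s|^2\,ds + \sum_i |\bar K_s^i|^2\,dN_s^i$, then correctly rewriting $\sum_i \int (\text{jump terms})\,dN_s^i$ as compensator-plus-martingale so the $\lambda_s^i|\bar K_s^i|^2$ appears with the right sign on the left of \eqref{eq:pe-ito}. One must also justify that the local-martingale stochastic integrals are genuine martingales under the conditional expectation — this follows from the a priori membership of $(\bar Y,\bar Z,\bar K^{(p)})$ in the solution spaces together with $D\in\mathcal A_T^2$, via a standard localization argument. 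Once those two points are handled, the rest is the elementary Young-inequality accounting described above, and the precise thresholds $\beta \ge \tfrac{p+2}{\xi}+2C$ and $\xi \le \tfrac{1}{C^2}$ are exactly what make the accounting close.
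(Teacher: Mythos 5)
Your proposal is correct and follows essentially the same route as the paper's proof: Itô's formula applied to $e^{\beta s}\bar Y_s^2$, cancellation of the common $dD$ term, conversion of the jump contribution $\sum_i(\bar K_s^i)^2\,dN_s^i$ into its compensated form, conditional expectation to kill the martingale parts, and then the Lipschitz bound combined with Young's inequality yielding exactly the coefficients $2C+\tfrac{p+2}{\xi}$ on $\bar Y^2$ and $C^2\xi$ on $\bar Z^2$ and $\lambda^i(\bar K^i)^2$. The only (immaterial) differences are that you apply Young term by term where the paper groups the terms via a single Cauchy--Schwarz step with $\xi=(p+2)\epsilon^2$, and that you obtain \eqref{eq:peProp3} by setting $t=0$ in the pre-conditional-expectation inequality rather than integrating over $[0,T]$ first; both variants give identical constants.
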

\begin{proof}
    Using It\^o's formula applied to the semimartingale $(e^{\beta s}\Bar{Y}_s^2)$ between $t$ and $T$, we get,
    \begin{align}\label{eq:peProof1}
        \begin{aligned}
            e^{\beta T}\Bar{Y}_T^2 &= e^{\beta t}\Bar{Y}_t^2 + \beta\int_t^Te^{\beta s}\bar{Y}_{s-}^2ds + 2\int_t^Te^{\beta s}\Bar{Y}_{s-}d\bar{Y}_s \\&\quad+\int_t^Te^{\beta s}d\langle\bar{Y}^c\rangle_s + \sum_{t<s\le T}\left(e^{\beta s}\bar{Y}_s^2-e^{\beta s}\bar{Y}_{s-}^2 - 2e^{\beta s}\bar{Y}_{s-}\Delta\bar{Y}_s\right).
            \end{aligned}
    \end{align}
 	Further computation (noting that the $dD_t$ terms cancel) leads to,
    \begin{align}
        &\begin{aligned}\label{eq:peProof2}
        2\int_t^Te^{\beta s}\bar{Y}_{s-}d\bar{Y}_{s} &= -2\int_t^Te^{\beta s}\bar{Y}_{s-}(g(s,Y_s,Z_s,K_s^1,\ldots,K_s^p) - \hat{g}(s,\hat{Y}_s,\hat{Z}_s,\hat{K}_s^1,\ldots,\hat{K}_s^p))ds \\
        &\quad+ 2\int_t^Te^{\beta s}\bar{Y}_{s-}\bar{Z}_sdW_s + 2\int_t^Te^{\beta s}\bar{Y}_{s-}\sum_{i=1}^p\bar{K}_s^idM_s^i,
        \end{aligned}\\
        \label{eq:peProof3}
        &\quad\quad\int_t^Te^{\beta s}d\langle\bar{Y}^c\rangle_s = \int_t^Te^{\beta s}\bar{Z}_s^2ds,
    \end{align}
    and the `jump term',
    \begin{align*}
        \sum_{t<s\le T}\left(e^{\beta s}\bar{Y}_s^2-e^{\beta s}\bar{Y}_{s-}^2 - 2e^{\beta s}\bar{Y}_{s-}\Delta\bar{Y}_s\right) = \sum_{t<s\le T}e^{\beta s}(\bar{Y}_s - \bar{Y}_{s-})^2.
    \end{align*}
    Since $P(\tau_i = \tau_j) = 0$ for all $i,j\in\{1,\ldots,p\}$ such that $i\ne j$, we get,
    \begin{align}\label{eq:peProof4}
        \begin{aligned}
            &\sum_{t<s\le T}(e^{\beta s}\bar{Y}_s^2-e^{\beta s}\bar{Y}_{s-}^2 - 2e^{\beta s}\bar{Y}_{s-}\Delta\bar{Y}_s) = \sum_{t<s\le T}e^{\beta s}(\bar{Y}_s - \bar{Y}_{s-})^2\\
	      &= \int_t^Te^{\beta s}\sum_{i=1}^p(\bar{K}_s^i)^2dN_s^i\\
            &= \int_t^Te^{\beta s}\sum_{i=1}^p(\bar{K}_s^i)^2dM_s^i + \int_t^Te^{\beta s}\sum_{i=1}^p(\bar{K}_s^i)^2\lambda_s^ids.
        \end{aligned}
    \end{align}
    Plugging \eqref{eq:peProof2}, \eqref{eq:peProof3}, and \eqref{eq:peProof4}, into \eqref{eq:peProof1} we get,
    \begin{align}\label{eq:peProof5}
        \begin{aligned}
            e^{\beta t}\bar{Y}_t^2 &+ \beta\int_t^Te^{\beta s}\bar{Y}_{s-}^2ds + \int_t^Te^{\beta s}\left(\bar{Z}_s^2+\sum_{i=1}^p(\bar{K}_s^i)^2\lambda_s^i\right)ds = e^{\beta T}\bar{Y}_T^2\text{ }+\\ &2\int_t^Te^{\beta s}\bar{Y}_{s-}(g(s,Y_s,Z_s,K_s^1,\ldots,K_s^p) - \hat{g}(s,\hat{Y}_s,\hat{Z}_s,\hat{K}_s^1,\ldots,\hat{K}_s^p))ds\\
            &- 2\int_t^Te^{\beta s}\bar{Y}_{s-}\bar{Z}_sdW_s -2 \sum_{i=1}^p\int_t^Te^{\beta s}\bar{Y}_{s-}\bar{K}_s^idM_s^i-\sum_{i=1}^p\int_t^Te^{\beta s}(\bar{K}_s^i)^2dM_s^i.
        \end{aligned}
    \end{align}
    Taking the conditional expectation given $\mathcal{G}_t$ in \eqref{eq:peProof5} results in,
    \begin{align}\label{eq:interm99}
	\begin{aligned}
        e^{\beta t}\bar{Y}_t^2 &+ \mathbb{E}\left[\beta\int_t^Te^{\beta s}\bar{Y}_{s-}^2ds + \int_t^Te^{\beta s}\left(\bar{Z}_s^2+\sum_{i=1}^p(\bar{K}_s^i)^2\lambda_s^i\right)ds\middle|\mathcal{G}_t\right] \\&= \mathbb{E}\left[e^{\beta T}\bar{Y}_T^2\middle|\mathcal{G}_t\right] \\&+ 2\mathbb{E}\left[\int_t^Te^{\beta s}\bar{Y}_{s-}(g(s,Y_s,Z_s,K_s^1,\ldots,K_s^p) - \hat{g}(s,\hat{Y}_s,\hat{Z}_s,\hat{K}_s^1,\ldots,\hat{K}_s^p))ds\middle|\mathcal{G}_t\right].
	\end{aligned}
    \end{align}
    Now,
	\begin{multline*}
	g(s,Y_s,Z_s,K_s^1,\ldots,K_s^p) - \hat{g}(s,\hat{Y}_s,\hat{Z}_s,\hat{K}_s^1,\ldots,\hat{K}_s^p)\\= g(s,Y_s,Z_s,K_s^1,\ldots,K_s^p) - g(s,\hat{Y}_s,\hat{Z}_s,\hat{K}_s^1,\ldots,\hat{K}_s^p) + \bar{g}_s.
	\end{multline*}

Since $g$ is a $\lambda^{(p)}$-admissible driver, it satisfies condition \eqref{eq:2-LAD}; hence,
    \begin{multline*}
        |g(s,Y_s,Z_s,K_s^1,\ldots,K_s^p) - \hat{g}(s,\hat{Y}_s,\hat{Z}_s,\hat{K}_s^1,\ldots,\hat{K}_s^p)| \\\le C|\bar{Y}_s| + C|\bar{Z}_s| + C\sum_{i=1}^p|\bar{K}_s^i|\sqrt{\lambda_s^i} + |\bar{g}_s|.
    \end{multline*}

    For all $y,z, a, k^1,\lambda^1,\ldots,k^p,\lambda^p$ and $\epsilon>0$ we have the elementary inequalities,
    \begin{align*}
        2y\left(Cz + C\sum_{i=1}^pk^i\sqrt{\lambda^i} + a\right) &\le \frac{y^2}{\epsilon^2} + \epsilon^2\left(Cz+C\sum_{i=1}^pk^i\sqrt{\lambda^i} +a\right)^2\\
        &\le \frac{y^2}{\epsilon^2} + (p+2)\epsilon^2\left(C^2z^2 + C^2\sum_{i=1}^p(k^i)^2\lambda^i + a^2\right).
    \end{align*}

    Thus,
    \begin{align}\label{eq:peProof6}
        \begin{aligned}
            &\int_t^T e^{\beta s}2\bar{Y}_{s-}(g(s,Y_s,Z_s,K_s^1,\ldots,K_s^p) - \hat{g}(s,\hat{Y}_s,\hat{Z}_s,\hat{K}_s^1,\ldots,\hat{K}_s^p))ds \\&\le \left(2C+\frac{1}{\epsilon^2}\right)\int_t^Te^{\beta s}\bar{Y}_{s-}^2ds + (p+2)C^2\epsilon^2\int_t^Te^{\beta s}\left(\bar{Z}_s^2 + \sum_{i=1}^p(\bar{K}_s^i)^2\lambda_s^i\right)ds  \\&\quad+(p+2)\epsilon^2\int_t^Te^{\beta s}\bar{g}_s^2ds.
        \end{aligned}
    \end{align}
    Setting $\xi\coloneqq(p+2)\epsilon^2>0$, and using inequality \eqref{eq:peProof6} in \eqref{eq:interm99}, we have,
    \begin{align}\label{eq:peProof7}
        \begin{aligned}
            &e^{\beta t}\bar{Y}_t^2 \le \mathbb{E}[e^{\beta T}\bar{\eta}^2|\mathcal{G}_t] + \mathbb{E}\left[\left(2C + \frac{p+2}{\xi} - \beta\right)\int_t^Te^{\beta s}\bar{Y}_{s-}^2ds\middle|\mathcal{G}_t\right] +\\
            &\quad\mathbb{E}\left[(C^2\xi-1)\int_t^Te^{\beta s}\left(\bar{Z}_s^2+\sum_{i=1}^p(\bar{K}_s^i)^2\lambda_s^i\right)ds\middle|\mathcal{G}_t\right] + \mathbb{E}\left[\xi\int_t^Te^{\beta s}\bar{g}_s^2ds\middle|\mathcal{G}_t\right].
        \end{aligned}
    \end{align}
    Then for each $\xi,\beta>0$ such that $\beta \ge 2C + \frac{p+2}{\xi}$ and $\xi\le\frac{1}{C^2}$ we obtain the desired inequality  \eqref{eq:peProp1}.
    

    Using inequality \eqref{eq:peProof7}, integrating from $0$ to $T$, and taking the expectation, we get,
    \begin{align*}
        \|\bar{Y}\|_\beta^2&\le T\left[e^{\beta T}\mathbb{E}[\bar{\eta}^2]+\xi\|\bar{g}\|_\beta^2\right] + \mathbb{E}\left[\int_0^T\mathbb{E}\left[(C^2\xi - 1)\int_t^Te^{\beta s}(\bar{Z}_s^2+\sum_{i=1}^p(\bar{K}_s^i)^2)ds\middle|\mathcal{G}_t\right]dt\right]\\
        &\le T\left[e^{\beta T}\mathbb{E}[\bar{\eta}^2]+\xi\|\bar{g}\|_\beta^2\right] + (C^2\xi-1)\int_0^T\mathbb{E}\left[\int_0^Te^{\beta s}(\bar{Z}_s^2+\sum_{i=1}^p(\bar{K}_s^i)^2ds\right]dt\\
        &=T\left[e^{\beta T}\mathbb{E}[\bar{\eta}^2]+\xi\|\bar{g}\|_\beta^2\right] + T(C^2\xi - 1)(\|\bar{Z}\|_\beta^2+\|\bar{K}^{(p)}\|_{\lambda^{(p)},\beta}^2).
    \end{align*}
    By rearranging, we get
    \begin{align*}
        \|\bar{Y}\|_\beta^2 + T(1 - C^2\xi)(\|\bar{Z}\|_\beta^2+\|\bar{K}^{(p)}\|_{\lambda^{(p)},\beta}^2) \le T\left[e^{\beta T}\mathbb{E}[\bar{\eta}^2]+\xi\|\bar{g}\|_\beta^2\right] 
    \end{align*}
    Since $\xi\le\frac{1}{C^2}$,  we get  the inequality from \eqref{eq:peProp2}.\\
   Since $\|\bar{Y}\|_\beta^2\ge0$, we get,
    \begin{align*}
        \|\bar{Z}\|_\beta^2+\|\bar{K}^{(p)}\|_{\lambda^{(p)},\beta}^2 \le \frac{1}{1 - C^2\xi}\left[e^{\beta T}\mathbb{E}[\bar{\eta}^2]+\xi\|\bar{g}\|_\beta^2\right],
    \end{align*}
    which leads to the inequality \eqref{eq:peProp3} for $\xi<\frac{1}{C^2}$
    
\end{proof}

\begin{remark}\label{remark:PE}
   In the case of a $\lambda^{(p)}$-constant $C=0$, \eqref{eq:peProp1} and \eqref{eq:peProp2} hold for all $\xi,\beta>0$ such that $\beta\ge\frac{p+2}{\xi}$. Inequality \eqref{eq:peProp3} holds for all $\xi>0$ when $C=0$.
\end{remark}

\subsubsection{Existence and Uniqueness for BSDEs with Multiple Default Jumps}

With the \emph{a priori} estimates established for BSDEs with $p$ default jumps, we can now prove the existence and uniqueness of the solution. To do so, we make use of the representation property of square-integrable  $\mathbb{G}$-martingales (Theorem \ref{theorem:MRP}) and the \emph{a priori} estimates established in Proposition \ref{prop:PE}.\\
For $\beta>0$, we denote by $\mathcal{H}_\beta^{2,(p)}$ the space $\mathcal{S}^2\times\mathcal{H}_\beta^2\times\mathcal{H}_{\lambda^1,\beta}^2\times\cdots\times\mathcal{H}_{\lambda^p,\beta}^2$ equipped with the norm $\|\left(Y,Z,K^1,\ldots,K^p\right)\|_\beta^{2,(p)}\coloneqq\|Y\|_\beta^2+\|Z\|_\beta^2+\|K^1\|_{\lambda^1,\beta}^2+\cdots+\|K^p\|_{\lambda^p,\beta}^2$.

\begin{proposition}[\textbf{Exsitence and Uniqueness}]\label{prop:EU}
    Let $g$ be a $\lambda^{(p)}$-admissible driver, $\eta\in L^2(\mathcal{G}_T)$ and $D$ be an optional process in $\mathcal{A}_T^2$. There exists a unique solution $(Y,Z,K^1,\ldots,K^p)$ in $\mathcal{S}_T^2\times\mathcal{H}_T^2\times\mathcal{H}_{\lambda^1,T}^2\times\cdots\times\mathcal{H}_{\lambda^p,T}^2$ of the BSDE with multiple default jumps from Definition \ref{defintion:2_GenLambdaDriver}.
\end{proposition}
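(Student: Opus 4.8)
The plan is to obtain existence by a Banach fixed-point (Picard) argument on the space $\mathcal{H}_\beta^{2,(p)}$ for a well-chosen $\beta$, using the martingale representation property (Theorem \ref{theorem:MRP}) to solve the subproblems with a driver frozen at the current iterate, and using the \emph{a priori} estimates of Proposition \ref{prop:PE} to control the iteration. Uniqueness is immediate from Proposition \ref{prop:PE} itself: if $(Y,Z,K^{(p)})$ and $(\hat Y,\hat Z,\hat K^{(p)})$ both solve the BSDE of Definition \ref{defintion:2_GenLambdaDriver} with the same $g$, $\eta$ and $D$, then applying the proposition with $\hat g=g$, $\hat\eta=\eta$ and any admissible pair $(\xi,\beta)$ with $\xi<1/C^2$ gives $\bar g\equiv 0$ and $\bar\eta=0$, whence \eqref{eq:peProp2} and \eqref{eq:peProp3} force $\|\bar Y\|_\beta^2=\|\bar Z\|_\beta^2=\|\bar K^{(p)}\|_{\lambda^{(p)},\beta}^2=0$.

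For existence, the first step is the case of a driver that does not depend on $(y,z,k^1,\dots,k^p)$. Fix $\phi\in\mathcal{H}_T^2$ and consider the $\mathbb{G}$-martingale $m_t:=\mathbb{E}\!\left[\eta+\int_0^T\phi_s\,ds+D_T\mid\mathcal{G}_t\right]$. It is square integrable, since $\eta\in L^2(\mathcal{G}_T)$, $\big(\int_0^T\phi_s\,ds\big)^2\le T\int_0^T\phi_s^2\,ds\in L^1$, and $|D_T|\le\int_0^T|dD_s|\in L^2$ because $D\in\mathcal{A}_T^2$. Theorem \ref{theorem:MRP} then yields unique $Z\in\mathcal{H}_T^2$ and $K^i\in\mathcal{H}_{\lambda^i,T}^2$ with $m_t=m_0+\int_0^tZ_s\,dW_s+\sum_{i=1}^p\int_0^tK_s^i\,dM_s^i$. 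Setting $Y_t:=m_t-\int_0^t\phi_s\,ds-D_t$, one checks directly that $Y$ is rcll, $\mathbb{G}$-adapted, with $\mathbb{E}[\sup_{t\le T}|Y_t|^2]<\infty$ (Doob's inequality together with the bounds above), that $Y_T=\eta$, and that $-dY_t=\phi_t\,dt+dD_t-Z_t\,dW_t-\sum_{i=1}^pK_t^i\,dM_t^i$; so $(Y,Z,K^1,\dots,K^p)\in\mathcal{S}_T^2\times\mathcal{H}_T^2\times\mathcal{H}_{\lambda^1,T}^2\times\cdots\times\mathcal{H}_{\lambda^p,T}^2$ solves the BSDE with generalized driver $\phi_t\,dt+dD_t$.

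The second step sets up the contraction. Define $\Phi:\mathcal{H}_\beta^{2,(p)}\to\mathcal{H}_\beta^{2,(p)}$ by letting $\Phi(U,V,L^1,\dots,L^p)$ be the solution constructed in Step 1 with $\phi_t:=g(t,U_t,V_t,L_t^1,\dots,L_t^p)$ (which lies in $\mathcal{H}_T^2$ by $\lambda^{(p)}$-admissibility and $g(\cdot,\cdot,0,\dots,0)\in\mathcal{H}_T^2$); by Step 1 its image lies in $\mathcal{S}_T^2\times\mathcal{H}_T^2\times\cdots$. Let $(Y,Z,K^{(p)})$ and $(\hat Y,\hat Z,\hat K^{(p)})$ be the images of two inputs $(U,V,L^{(p)})$ and $(\hat U,\hat V,\hat L^{(p)})$. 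The two corresponding frozen BSDEs have driver Lipschitz constant $0$, the same terminal condition and the same $D$, so by Remark \ref{remark:PE} the estimates \eqref{eq:peProp2}--\eqref{eq:peProp3} hold for every $\xi>0$ once $\beta\ge(p+2)/\xi$, giving
\[
\|\bar Y\|_\beta^2+\|\bar Z\|_\beta^2+\|\bar K^{(p)}\|_{\lambda^{(p)},\beta}^2\le (T+1)\,\xi\,\|\bar g\|_\beta^2,
\]
where $\bar g_s=g(s,U_s,V_s,L_s^{(p)})-g(s,\hat U_s,\hat V_s,\hat L_s^{(p)})$. By \eqref{eq:2-LAD} and $(a_0+\cdots+a_{p+1})^2\le(p+2)\sum_{j}a_j^2$ one obtains $\|\bar g\|_\beta^2\le C^2(p+2)\,\|(\bar U,\bar V,\bar L^{(p)})\|_\beta^{2,(p)}$, hence
\[
\|\Phi(U,V,L^{(p)})-\Phi(\hat U,\hat V,\hat L^{(p)})\|_\beta^{2,(p)}\le (T+1)(p+2)C^2\,\xi\;\|(U,V,L^{(p)})-(\hat U,\hat V,\hat L^{(p)})\|_\beta^{2,(p)}.
\]
Choosing $\xi\in\bigl(0,[(T+1)(p+2)C^2]^{-1}\bigr)$ (any $\xi>0$ if $C=0$) and then $\beta\ge(p+2)/\xi$ makes $\Phi$ a strict contraction; its unique fixed point is, by Step 1, in $\mathcal{S}_T^2\times\mathcal{H}_T^2\times\cdots$ and solves the BSDE of Definition \ref{defintion:2_GenLambdaDriver}.

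The only delicate point — and the one I would emphasize — is forcing the contraction factor to be strictly below $1$: a direct use of Proposition \ref{prop:PE} would drag the Lipschitz constant $C$ of $g$ into the bound with no room to absorb it. This is circumvented precisely because the subproblems solved by $\Phi$ have drivers independent of $(y,z,k^{(p)})$, so Remark \ref{remark:PE} lets $\xi$ be taken as small as we wish (at the price of a large $\beta$), after which $(T+1)(p+2)C^2\xi<1$. Everything else — the $L^2$ integrability checks in Step 1, and the fact that the $\beta$-weighted norms are equivalent to the unweighted ones on $[0,T]$ (so that the fixed point may be obtained on the complete space $\mathcal{H}_\beta^2\times\mathcal{H}_{\lambda^1,\beta}^2\times\cdots\times\mathcal{H}_{\lambda^p,\beta}^2$, its $Y$-component automatically lying in $\mathcal{S}_T^2$ since it equals its own image under $\Phi$) — is routine.
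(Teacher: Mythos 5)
Your proposal is correct and follows essentially the same two-step argument as the paper: first solving the BSDE with a frozen driver via the martingale representation property, then obtaining a strict contraction on $\mathcal{H}_\beta^{2,(p)}$ by applying the \emph{a priori} estimates with $\lambda^{(p)}$-constant $0$ (Remark \ref{remark:PE}) so that $\xi$ can be taken small enough to beat the factor $(T+1)(p+2)C^2$. The only cosmetic difference is that you derive uniqueness directly from Proposition \ref{prop:PE} rather than from the uniqueness of the fixed point, which is an equally valid route.
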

\begin{proof}
	The proof follows a standard two-step argument, where the second step relies on the \emph{a priori} estimates from Proposition \ref{prop:PE}.
	
	 We first consider the case where the driver is a driver process $g(t)$ which does not depend on $(y,z,k^1,\ldots,k^p)$. In this case,  the first component of the solution is, $Y_t = \mathbb{E}[\eta + \int_t^Tg(s)ds + D_T - D_t|\mathcal{G}_t]$. Applying the $\mathbb{G}$-martingale representation property to the square-integrable martingale $\mathbb{E}[\eta + \int_0^Tg(s)ds + D_T|\mathcal{G}_t]$ we get the processes $Z\in\mathcal{H}_T^2$ and $K^i\in\mathcal{H}_{\lambda^i,T}^2$ for $i\in\{1,\ldots,p\}$. These processes are unique due to the uniqueness in the $\mathbb{G}$-martingale representation result from Theorem \ref{theorem:MRP}. Hence there exists a unique solution to the BSDE with driver $g(s)ds + dD_s$, terminal time $T$ and terminal condition $\eta\in L^2(\mathcal{G}_T)$.
    
    We now focus on the case of a $\lambda^{(p)}$-admissible driver $g(t,y,z,k^1,\ldots,k^p)$.  We define a mapping $\mathbf{\Phi}$ from $\mathcal{H}_\beta^{2,(p)}$ to $\mathcal{H}_\beta^{2,(p)}$ as follows: for $(U,V,J^1,\ldots,J^p)\in\mathcal{H}_\beta^{2,(p)}$, $(Y,Z,K^1,\ldots,K^p)=\mathbf{\Phi}(U,V,J^1,\ldots,J^p)$ is the solution of the BSDE with driver $g(t,U_t,V_t,J_t^1,\ldots,J_t^p)dt + dD_t$, terminal time $T$ and terminal condition $\eta\in L^2(\mathcal{G}_T)$.
    The mapping is well-defined due to the first step of the proof.
    We  show that the mapping $\mathbf{\Phi}$ is a \textit{strict contraction}.  Let $(\hat{U},\hat{V},\hat{J}^1,\ldots,\hat{J}^p)\in\mathcal{H}_\beta^{2,p}$ and let $(\hat{Y},\hat{Z},\hat{K}^1,\ldots,\hat{K}^p)\coloneqq\mathbf{\Phi}(\hat{U},\hat{V},\hat{J}^1,\ldots,\hat{J}^p)$ be the solution of the BSDE with the driver $g(t,\hat{U}_t,\hat{V}_t,\hat{J}_t^1,\ldots,\hat{J}_t^p)dt + dD_t$, terminal time $T$ and terminal condition $\eta\in L^2(\mathcal{G}_T)$.

    We set $\bar{U}_t\coloneqq U_t-\hat{U}_t$, $\bar{V}_t\coloneqq V_t - \hat{V}_t$, $\bar{Y}_t\coloneqq Y_t - \hat{Y}_t$, $\bar{Z}_t\coloneqq Z_t - \hat{Z}_t$, and for each $i\in\{1,\ldots,p\}$ $\bar{J}_t^i\coloneqq J_t^i - \hat{J}_t^i$ and $\bar{K}_t^i\coloneqq K_t^i - \hat{K}_t^i$. We set $\Delta g_t\coloneqq g(t,U_t,V_t,J_t^1,\ldots,J_t^p) - g(t,\hat{U}_t,\hat{V}_t,\hat{J}_t^1,\ldots,\hat{J}_t^p)$.
    Then by the \emph{a priori} estimates from Proposition \ref{prop:PE} and Remark \ref{remark:PE}, applied to the driver processes $g_1(t)\coloneqq g(t,U_t,V_t,J_t^1,\ldots,J_t^p)$ and $g_2(t)\coloneqq g(t,\hat{U}_t,\hat{V}_t,\hat{J}_t^1,\ldots,\hat{J}_t^p)$ (where the driver $g_1(t)$ admits $C_1=0$ as a $\lambda^{(p)}$-constant since $g_1$ only depends on $(t,\omega)$), we have that,  for all $\xi,\beta>0$ such that $\beta\ge\frac{p+2}{\xi}$,
    \begin{align}\label{eq:EUproof1}
        \begin{aligned}
            \|\bar{Y}\|_\beta^2+\|\bar{Z}\|_\beta^2+\sum_{i=1}^p\|\bar{K}^i\|_{\lambda^i,\beta}^2 &\le \xi T\|\Delta g\|_\beta^2 + \xi \|\Delta g\|_\beta^2\\
            &=\xi(T+1)\|\Delta g\|_\beta^2.
        \end{aligned}
    \end{align}
 As by definition $g$ is a $\lambda^{(p)}$-admissible driver with $\lambda^{(p)}$ constant $C>0$ we get,
    \begin{align*}
        e^{\beta s}(\Delta g_s)^2 &\le e^{\beta s}C^2(|\bar{U}_s| + |\bar{V}_s| + \sum_{i=1}^p|\bar{J}_s^i|\sqrt{\lambda_s^i})^2\\
        &\le C^2(p+2)e^{\beta s}[\bar{U}_s^2+\bar{V}_s^2+\sum_{i=1}^p(\bar{J}_s^i)^2\lambda_s^i].
    \end{align*}
    Thus,
    \begin{align}\label{eq:EUproof2}
        \begin{aligned}
        \|\Delta g\|_\beta^2 &\le C^2(p+2)\mathbb{E}\left[\int_0^Te^{\beta s}(\bar{U}_s^2+\bar{V}_s^2+\sum_{i=1}^p(\bar{J}_s^i)^2\lambda_s^i)ds\right]\\
        &=C^2(p+2)(\|\bar{U}\|_\beta^2+\|\bar{V}\|_\beta^2+\sum_{i=1}^p\|\bar{J}^i\|_{\lambda^i,\beta}^2).
        \end{aligned}
    \end{align}
    Using inequalities \eqref{eq:EUproof2} and \eqref{eq:EUproof1}, we get,
    \begin{equation}\label{eq:EUproof3}
        \|\bar{Y}\|_\beta^2+\|\bar{Z}\|_\beta^2+\sum_{i=1}^p\|\bar{K}^i\|_{\lambda^i,\beta}^2 \le C^2(p+2)\xi(T+1)(\|\bar{U}\|_\beta^2+\|\bar{V}\|_\beta^2+\sum_{i=1}^p\|\bar{J}^i\|_{\lambda^i,\beta}^2),
    \end{equation}
    for all $\xi,\beta>0$ such that $\beta \ge \frac{p+2}{\xi}$. Choosing $\xi=\frac{1}{2(T+1)(p+2)C^2}$ and $\beta\ge2(p+2)^2(T+1)C^2$, we derive $\|\bar{Y},\bar{Z},\bar{K}^1,\ldots,\bar{K}^p\|_\beta^{2,(p)}\le\frac{1}{2}\|\bar{U},\bar{V},\bar{J}^1,\ldots,\bar{J}^p\|_\beta^{2,(p)}$.

    Hence for $\beta\ge2(p+2)^2(T+1)C^2$ we have that $\mathbf{\Phi}$ is a (strict) contraction from $\mathcal{H}_\beta^{2,(p)}$ to $\mathcal{H}_\beta^{2,(p)}$ and thus admits a unique fixed point $(Y,Z,K^1,\ldots,K^p)$ in the Banach space $\mathcal{H}_\beta^{2,(p)}$, which is the unique solution to the BSDE with driver $g(t,Y_t,Z_t,K_t^1,\ldots,K_t^p)dt + dD_t$, terminal time $T$ and terminal condition $\eta\in L^2(\mathcal{G}_T)$.
\end{proof}

\subsubsection{\texorpdfstring{Generalized $\lambda^{(p)}$-Linear BSDEs with Multiple Default Jumps}{lambda-p-Linear BSDEs with Multiple Default Jumps}}

We study the particular case of $\lambda^{(p)}$-\textbf{linear} BSDEs with multiple default jumps.
\begin{definition}[\textbf{$\lambda^{(p)}$-Linear Driver and Generalized $\lambda^{(p)}$-Linear Driver}]
    A driver $g$ is $\lambda^{(p)}$-linear if it is of the form
    \begin{equation}\label{eq:defLambdaLinear}
        g(t,y,z,k^1,\ldots,k^p)=\alpha_ty + \beta_tz + \sum_{i=1}^p\gamma_t^ik^i\lambda_t^i + \delta_t,
    \end{equation}
    where $\delta\coloneqq(\delta_t)_{t\in[0,T]}\in\mathcal{H}_T^2$ and $(\alpha_t)$,$(\beta_t)$ and $(\gamma_t^i)$ for $i\in\{1,\ldots,p\}$, are $\mathbb{R}$-valued predictable processes such that $(\alpha_t)$,$(\beta_t)$ and $(\gamma_t^i\sqrt{\lambda_t^i})$ for $i\in\{1,\ldots,p\}$, are bounded.
    For $D\in\mathcal{A}_T^2$ given, we define the \textbf{generalized} $\lambda^{(p)}$-linear driver as,
    \begin{equation}\label{eq:defLambdaLinear2}
        (\alpha_ty + \beta_tz + \sum_{i=1}^p\gamma_t^ik^i\lambda_t^i)dt + dD_t.
    \end{equation}
    
\end{definition}

\begin{remark}
    If $g$ is given by \eqref{eq:defLambdaLinear}, then using the transformation $\nu_t^i\coloneqq\gamma_t^i\sqrt{\lambda_t^i}$ for each $i\in\{1,\ldots,p\}$, we have that each $(\nu^i)$ is a bounded predictable process and,
    \begin{equation}
        g(t,y,z,k^1,\ldots,k^p) = \alpha_ty + \beta_tz + \sum_{i=1}^p\nu_t^ik^i\sqrt{\lambda_t^i} + \delta_t.
    \end{equation}
    Hence,  a $\lambda^{(p)}$-linear driver is also a $\lambda^{(p)}$-admissible driver.
\end{remark}

We are interested in finding explicitly the solution of a generalized $\lambda^{(p)}$-linear BSDE. To do so, we first need a preliminary result on exponential local martingales in our framework.

\begin{remark}\label{remark:2.12}
 Let $\Gamma\coloneqq(\Gamma_t)_{t\in[0,T]}$ be the process satisfying the SDE,
    \begin{equation}\label{eq:defESpDefault}
        d\Gamma_s = \Gamma_{s-}(\beta_sdW_s + \sum_{i=1}^p\gamma_s^idM_s^i),\quad\Gamma_0=1.
    \end{equation}

    From Lemma \ref{lemma:expSemiMartingale}, we have: for all $s\ge0$,
    \begin{equation}\label{eq:defESpDefault2}
        \Gamma_s = \exp\left(\int_0^s\beta_rdW_r - \frac{1}{2}\int_0^s\beta_r^2dr\right)\exp\left(-\int_0^s\sum_{i=1}^p\gamma_r^i\lambda_r^idr\right)\prod_{i=1}^p(1+\gamma_{\tau_i}^i\mathbbm{1}_{\{s\ge\tau_i\}}),\text{  a.s.}
    \end{equation}
    If for all $i\in\{1,\ldots,p\}$  $\gamma_{\tau_i}^i\ge-1$ (respectively $\gamma_{\tau_i}^i>-1$) a.s., then $\Gamma_s\ge0$ (respectively $\Gamma_s>0$) for all $s\ge0$ a.s.
\end{remark}

\begin{proposition}\label{prop:ESM-M}
    Let $T>0$. If the random variable $\int_0^T(\beta_s^2+\sum_{i=1}^p(\gamma_s^i)^2\lambda_s^i)ds$ is bounded, then the exponential local martingale $(\Gamma_t)_{ t\in[0, T]}$, defined by \eqref{eq:defESpDefault}, is a martingale and satisfies $\mathbb{E}[\sup_{0\le t\le T}\Gamma_t^2]<+\infty$.
\end{proposition}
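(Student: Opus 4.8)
The plan is to show that $(\Gamma_t)$ is a true martingale by exhibiting a uniform $L^2$-bound on the family $\{\Gamma_t : t \in [0,T]\}$, which upgrades the local martingale to a martingale and simultaneously gives the claimed $\mathbb{E}[\sup_{0\le t\le T}\Gamma_t^2] < \infty$. First I would localize: let $(S_n)$ be a sequence of stopping times announcing the explosion, so that each stopped process $(\Gamma_{t\wedge S_n})$ is a genuine martingale. The goal is then to bound $\mathbb{E}[\Gamma_{t\wedge S_n}^2]$ uniformly in $n$ and $t$, and pass to the limit via Fatou.

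The key computation is to apply It\^o's formula to $\Gamma^2$. Writing $\Gamma_s = \Gamma_{s-}(\beta_s dW_s + \sum_i \gamma_s^i dM_s^i)$ and using that the $\tau_i$ are a.s.\ distinct (so there are no simultaneous jumps and the bracket terms decouple), one gets
\begin{equation*}
d(\Gamma_s^2) = \Gamma_{s-}^2\Big(\beta_s^2 + \sum_{i=1}^p (\gamma_s^i)^2 \lambda_s^i\Big)ds + d(\text{local martingale}),
\end{equation*}
where the drift comes from $d\langle \Gamma^c\rangle_s = \Gamma_{s-}^2\beta_s^2 ds$ together with the compensator of the jumps $\sum_s (\Delta\Gamma_s)^2 = \sum_i \Gamma_{\tau_i-}^2(\gamma_{\tau_i}^i)^2 N^i$, whose $\mathbb{G}$-predictable compensator is $\int \Gamma_{s-}^2 (\gamma_s^i)^2 \lambda_s^i\, ds$. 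Stopping at $S_n$ and taking expectations kills the martingale part, giving
\begin{equation*}
\mathbb{E}[\Gamma_{t\wedge S_n}^2] = 1 + \mathbb{E}\!\left[\int_0^{t\wedge S_n} \Gamma_{s-}^2\Big(\beta_s^2 + \sum_{i=1}^p (\gamma_s^i)^2\lambda_s^i\Big)ds\right] \le 1 + \mathbb{E}\!\left[\int_0^{t} \Gamma_{s\wedge S_n}^2\, h_s\, ds\right],
\end{equation*}
where $h_s \coloneqq \beta_s^2 + \sum_i (\gamma_s^i)^2\lambda_s^i$. By hypothesis $\int_0^T h_s\, ds \le R$ for some constant $R$; more importantly, since $h \ge 0$ and $\int_0^T h_s ds$ is bounded, an application of Gr\"onwall's lemma (in the version for the function $t \mapsto \mathbb{E}[\Gamma_{t\wedge S_n}^2]$ against the finite measure $h_s\,ds$) yields $\mathbb{E}[\Gamma_{t\wedge S_n}^2] \le e^{R}$ for all $t$ and $n$. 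Letting $n\to\infty$ and using Fatou gives $\sup_{t\in[0,T]}\mathbb{E}[\Gamma_t^2] \le e^R < \infty$; in particular $(\Gamma_t)$ is uniformly integrable, hence a true martingale on $[0,T]$. Finally, applying Doob's $L^2$-maximal inequality to the martingale $(\Gamma_t)$ gives $\mathbb{E}[\sup_{0\le t\le T}\Gamma_t^2] \le 4\,\mathbb{E}[\Gamma_T^2] \le 4e^R < \infty$.

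The main obstacle is making the Gr\"onwall step fully rigorous in the presence of the localizing sequence: one must justify that the compensator of the jump part is indeed $\int \Gamma_{s-}^2(\gamma_s^i)^2\lambda_s^i\,ds$ (using that $(\Gamma_{s-}^2(\gamma_s^i)^2)$ is predictable and that $\lambda^i$ is the $N^i$-intensity), and that after stopping at $S_n$ the stochastic integrals against $dW$ and $dM^i$ are true martingales — this is where the boundedness of $\int_0^T h_s ds$ and the finiteness of $\Gamma_{\cdot\wedge S_n}$ are used, together with the fact that $\Gamma$ has at most $p$ jumps so the jump integrand is square-integrable once $\Gamma$ is stopped. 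A cosmetic point is that $\Delta\Gamma_{\tau_i} = \Gamma_{\tau_i-}\gamma_{\tau_i}^i$ need not have a sign, but this is irrelevant for the $L^2$-estimate since everything is squared; the sign condition $\gamma_{\tau_i}^i \ge -1$ from Remark \ref{remark:2.12} is not needed here.
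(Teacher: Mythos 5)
Your overall strategy (It\^o's formula on $\Gamma^2$, identification of the drift $\Gamma_{s-}^2\bigl(\beta_s^2+\sum_{i=1}^p(\gamma_s^i)^2\lambda_s^i\bigr)ds$ via the bracket of the continuous part and the compensator of the jumps, then an $L^2$-bound uniform over the localizing sequence) is the same as the paper's up to the key estimate, and your observation that no sign condition on $\gamma^i$ is needed is correct. However, the Gr\"onwall step as you state it has a genuine gap. You want to apply Gr\"onwall to $u_n(t)\coloneqq\mathbb{E}[\Gamma_{t\wedge S_n}^2]$ ``against the finite measure $h_s\,ds$'', but $h_s=\beta_s^2+\sum_i(\gamma_s^i)^2\lambda_s^i$ is a \emph{random} process, and the hypothesis bounds only the pathwise integral $\int_0^T h_s\,ds\le R$, not $h_s$ pointwise (recall $\lambda^i$ is not assumed bounded). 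Consequently you cannot pass from $\mathbb{E}\bigl[\int_0^t\Gamma_{s\wedge S_n}^2h_s\,ds\bigr]$ to $\int_0^t u_n(s)\,d\mu(s)$ for any deterministic finite measure $\mu$: the only bound available is $\mathbb{E}\bigl[\int_0^t\Gamma_{s\wedge S_n}^2h_s\,ds\bigr]\le R\,\mathbb{E}\bigl[\sup_{s\le t}\Gamma_{s\wedge S_n}^2\bigr]$, which reintroduces the quantity you are trying to control and makes the argument circular. A deterministic Gr\"onwall inequality simply does not apply here, and the stochastic Gr\"onwall lemmas that handle a random non-decreasing integrator only yield $L^p$-bounds for $p<1$.

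The fix --- and this is what the paper actually does --- is to absorb the drift pathwise rather than in expectation: from $d\Gamma_t^2=\Gamma_{t-}^2h_t\,dt+d(\text{loc.\ mart.})$ one checks that $e^{-\int_0^t h_s\,ds}\,\Gamma_t^2$ is a non-negative local martingale (the paper phrases this as the Yoeurp--Yor factorization $\Gamma_t^2=\zeta_t\exp\bigl(\int_0^t h_s\,ds\bigr)$ with $\zeta=\mathcal{E}(Y^{(2)})$, which is non-negative because its jumps satisfy $1+2\gamma_{\tau_i}^i+(\gamma_{\tau_i}^i)^2=(1+\gamma_{\tau_i}^i)^2\ge0$). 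Being a non-negative local martingale it is a supermartingale, so $\mathbb{E}\bigl[e^{-\int_0^t h_s\,ds}\Gamma_t^2\bigr]\le1$ and hence $\mathbb{E}[\Gamma_t^2]\le e^{R}$ for every $t$, with no localization or Gr\"onwall argument needed. From there your concluding steps (uniform integrability of the localized family to upgrade the local martingale to a martingale, then Doob's $L^2$ maximal inequality) go through as you wrote them.
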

\begin{proof}
    From \eqref{eq:defESpDefault} the process $\Gamma$ is a local martingale. We  show that $\mathbb{E}[\sup_{0\le t\le T}\Gamma_t^2]<\infty$. 
    Let $dX_t = \beta_tdW_t + \sum_{i=1}^p\gamma_t^idM_t^i$. We have \footnote{For this, we  use  $\Delta \Gamma_t = \Gamma_t-\Gamma_{t-}$  and $\Gamma_t = \Gamma_{t-} + \Gamma_{t-}\Delta X_t$ ((by \eqref{eq:defESpDefault2}); we get $\Delta \Gamma_t = \Gamma_{t-}\Delta X_t$.} $d\Gamma_t = \Gamma_{t-}dX_t$ and $\Delta \Gamma_t = \Gamma_{t-}\Delta X_t$. Using this, we get, 
    \begin{align*}
        d[\Gamma]_t &= d\langle\Gamma^c\rangle_t + d\left(\sum_{0<s\le t}(\Delta \Gamma_s)^2\right)
        =\Gamma_{t-}^2d\langle X^c\rangle_t + d\left(\sum_{0< s\le t}\Gamma_{s-}^2(\Delta X_s)^2\right)\\
        &= \Gamma_{t-}^2\beta_t^2dt + d\left(\sum_{i=1}^p\int_0^t\Gamma_{s-}^2(\gamma_s^i)^2dN_s^i\right)= \Gamma_{t-}^2\beta_t^2dt + \Gamma_{t-}^2\sum_{i=1}^p(\gamma_t^i)^2dN_t^i.
    \end{align*}
    Using It\^o's formula applied to $(\Gamma_t^2)$ and  the fact that  $dN_t^i = dM_t^i + \lambda_t^idt$, we get,
    \begin{align}\label{eq:ExpSemiProof1}
        \begin{aligned}
            &d\Gamma_t^2 =2\Gamma_{t-}d\Gamma_t + d[\Gamma]_t= \Gamma_{t-}^2(2\beta_tdW_t + 2\sum_{i=1}^p\gamma_t^idM_t^i + \beta_t^2dt + \sum_{i=1}^p(\gamma_t^i)^2dN_t^i)\\
            &=\Gamma_{t-}^2\left[2\beta_tdW_t + \sum_{i=1}^p(2\gamma_t^i + (\gamma_t^i)^2)dM_t^i + \left(\beta_t^2+\sum_{i=1}^p(\gamma_t^i)^2\lambda_t^i\right)dt\right].
        \end{aligned}
    \end{align}
 This can be written in the form $d\Gamma_t^2 = \Gamma_{t-}^2dY_t$, where,
\begin{equation*}
	dY_t \coloneqq \left(\beta_t^2+\sum_{i=1}^p(\gamma_t^i)^2\lambda_t^i\right)dt + 2\beta_tdW_t + \sum_{i=1}^p(2\gamma_t^i + (\gamma_t^i)^2)dM_t^i.
\end{equation*}
We have  $dY_t\coloneqq dY_t^{(1)}+dY_t^{(2)}$, where $Y_t^{(1)}\coloneqq\int_0^t\left(\beta_s^2+\sum_{i=1}^p(\gamma_s^i)^2\lambda_s^i\right)ds$ and $Y_t^{(2)}\coloneqq\int_0^t2\beta_sdW_s+\int_0^t\sum_{i=1}^p(2\gamma_s^i + (\gamma_s^i)^2)dM_s^i$. We have, 
    \begin{equation}
        \mathcal{E}(Y^{(1)})_t = \exp\left(\int_0^t\left(\beta_s^2+\sum_{i=1}^p(\gamma_s^i)^2\lambda_s^i\right)ds\right).
    \end{equation}
    Using Lemma \ref{lemma:expSemiMartingale}, applied to $Y^{(2)}$, we get,
    \begin{multline}
        \mathcal{E}(Y^{(2)})_t = \exp\left(\int_0^t2\beta_sdW_s - \int_0^t\sum_{i=1}^p(2\gamma_s^i+(\gamma_s^i)^2)\lambda_s^ids - \int_0^t2\beta_s^2ds\right)\\\times\prod_{i=1}^p\left(1+(2\gamma_{\tau_i}^i + (\gamma_{\tau_i}^i)^2)\mathbbm{1}_{\{\tau_i\le t\}}\right).
    \end{multline}

    Using the identity $\mathcal{E}(Y^{(1)} + Y^{(2)} + [Y^{(1)},Y^{(2)}])_t = \mathcal{E}(Y^{(1)})_t\mathcal{E}(Y^{(2)})_t$ and the fact that $[Y^{(1)},Y^{(2)}]_t=0$ for all $t$ a.s. we get,
    \begin{equation}\label{eq:PropositionEXM1}
        \begin{aligned}
            \mathcal{E}(Y^{(1)})_t\mathcal{E}(Y^{(2)})_t = \Gamma_t^2&= \exp\left(\int_0^t\left(\beta_s^2+\sum_{i=1}^p(\gamma_s^i)^2\lambda_s^i\right)ds\right)\\
            &\times\exp\left(\int_0^t2\beta_sdW_s - \int_0^t\sum_{i=1}^p(2\gamma_s^i+(\gamma_s^i)^2)\lambda_s^ids - \int_0^t2\beta_s^2ds\right)\\
            &\times\prod_{i=1}^p\left(1+(2\gamma_{\tau_i}^i + (\gamma_{\tau_i}^i)^2)\mathbbm{1}_{\{\tau_i\le t\}}\right).
        \end{aligned}
    \end{equation}
Setting $\zeta_t\coloneqq\mathcal{E}(Y^{(2)})_t$, we have that $\zeta$ is an exponential local martingale with dynamics, $d\zeta_t = \zeta_{t-}dY_t^{(2)}$; more specifically,  
    \begin{equation}\label{eq:PropositionEXM1.5}
        d\zeta_t = \zeta_{t-}\left[2\beta_tdW_t + \sum_{i=1}^p(2\gamma_t^i + (\gamma_t^i)^2)dM_t^i\right],\quad\zeta_0=1.
    \end{equation}
    
    Thus, the exponential local martingale $\Gamma^2$ from \eqref{eq:PropositionEXM1} becomes,
    \begin{equation}\label{eq:PropositionEXM2}
        \Gamma_t^2 = \zeta_t\exp\left(\int_0^t\left(\beta_s^2+\sum_{i=1}^p(\gamma_s^i)^2\lambda_s^i\right)ds\right).
    \end{equation}
    By \eqref{eq:PropositionEXM1.5}, the local martingale $\zeta$ is non-negative. This implies that $\zeta$ is a supermartingale and hence $\mathbb{E}[\zeta_T] \le 1$. Now by the assumption that $\int_0^T(\beta_t^2 + \sum_{i=1}^p(\gamma_t^i)^2\lambda_t^i)dt$ is bounded, we get,
    \begin{equation*}
        \mathbb{E}[\Gamma_T^2]\le\mathbb{E}[\zeta_T]K \le K,
    \end{equation*}
    where $K>0$ is a constant depending on $\int_0^T(\beta_t^2 + \sum_{i=1}^p(\gamma_t^i)^2\lambda_t^i)dt$.
    By martingale inequalities, we get $\mathbb{E}[\sup_{0\le t \le T}\Gamma_t^2]<\infty$. We conclude that $\Gamma$ is a martingale.
\end{proof}

We now establish the explicit form of the (first component of the solution) solution of the  BSDE with a generalized $\lambda^{(p)}$-linear driver. We begin with the case where the finite variational process $D$ is \textbf{predictable}.

\begin{theorem}[\textbf{Explicit Solution of the Generalized $\lambda^{(p)}$-Linear BSDE with $D$ Predictable}]\label{theorem:RRGLB}

    Let $(\alpha_t)$, $(\beta_t)$ and $(\gamma_t^i)$, for  $i\in\{1,\ldots,p\}$, be $\mathbb{R}$-valued predictable processes such that $(\alpha_t)$, $(\beta_t)$ and $(\gamma_t^i\sqrt{\lambda_t^i})$, for $i\in\{1,\ldots,p\}$, are bounded.
    Let $\eta\in L^2(\mathcal{G}_T)$ and let $D$ be a (predictable) process  in  $\mathcal{A}_{p,T}^2$.
    Let $(Y,Z,K^1,\ldots,K^p)$ be the solution in $\mathcal{S}^2\times\mathcal{H}^2\times\mathcal{H}_{\lambda^1}^2\times\cdots\times\mathcal{H}_{\lambda^p}^2$ of the following  BSDE  with the generalized $\lambda^{(p)}$-linear driver $(\alpha_ty + \beta_tz+\sum_{i=1}^p\gamma_t^ik^i\lambda_t^i)dt + dD_t$, terminal time $T$ and terminal condition $\eta$,
    \begin{equation}\label{eq:ThmDpred1}
        -dY_t = \left(\alpha_tY_t + \beta_tZ_t + \sum_{i=1}^p\gamma_t^iK_t^i\lambda_t^i\right)dt + dD_t - Z_tdW_t - \sum_{i=1}^pK_t^idM_t^i,\quad Y_T = \eta.
    \end{equation}

    For each $t\in[0,T]$, let $(\Gamma_{t,s})_{s\ge t}$ be the unique solution of the following \textbf{adjoint forward SDE},
    \begin{equation}\label{eq:ThmDpred2}
        d\Gamma_{t,s} = \Gamma_{t,s-}\left(\alpha_sds + \beta_sdW_s + \sum_{i=1}^p\gamma_s^idM_s^i\right),\quad\Gamma_{t,t}=1.
    \end{equation}
    
    Then, the process $(Y_t)$ has the explicit form:
    \begin{equation}\label{eq:ThmDpred3}
        Y_t = \mathbb{E}\left[\Gamma_{t,T}\eta + \int_t^T\Gamma_{t,s-}dD_s\middle|\mathcal{G}_t\right],\quad0\le t\le T,\quad\text{a.s.}
    \end{equation}
\end{theorem}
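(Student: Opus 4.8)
The plan is to apply the integration‑by‑parts (It\^o product) formula to the process $s\mapsto \Gamma_{t,s}Y_s$ on $[t,T]$, show that after all drift terms cancel one is left with $-\Gamma_{t,s-}\,dD_s$ plus a local martingale $L$ with $L_t=0$, and then conclude $\mathbb{E}[L_T\mid\mathcal{G}_t]=0$ by a localization‑plus‑dominated‑convergence argument, which yields \eqref{eq:ThmDpred3} upon taking $\mathcal{G}_t$‑conditional expectations.

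First I would record the a priori integrability of the adjoint process. The solution $(Y,Z,K^1,\dots,K^p)$ exists and is unique by Proposition \ref{prop:EU}, since a $\lambda^{(p)}$-linear driver is $\lambda^{(p)}$-admissible. Since $\beta$ and the $\gamma^i\sqrt{\lambda^i}$ are bounded, $\int_t^T(\beta_s^2+\sum_{i=1}^p(\gamma_s^i)^2\lambda_s^i)\,ds$ is bounded; writing $\Gamma_{t,s}=\exp(\int_t^s\alpha_r\,dr)\,\Gamma^0_{t,s}$ with $\Gamma^0_{t,\cdot}$ the solution on $[t,T]$ of the driftless SDE \eqref{eq:defESpDefault} and $\exp(\int_t^s\alpha_r\,dr)$ bounded (as $\alpha$ is bounded), Proposition \ref{prop:ESM-M} gives $\mathbb{E}[\sup_{t\le s\le T}\Gamma_{t,s}^2]<\infty$; in particular $\Gamma_{t,\cdot}$ is a square-integrable $\mathbb{G}$-martingale on $[t,T]$, and by Cauchy--Schwarz $\sup_{r}|\Gamma_{t,r}|\,\sup_{r}|Y_r|$ and $\sup_{r}|\Gamma_{t,r}|\int_0^T|dD_r|$ both lie in $L^1$ (using $Y\in\mathcal{S}_T^2$ and $D\in\mathcal{A}_{p,T}^2\subset\mathcal{A}_T^2$); this also makes the right-hand side of \eqref{eq:ThmDpred3} well defined.

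Next, the product rule gives $d(\Gamma_{t,s}Y_s)=\Gamma_{t,s-}\,dY_s+Y_{s-}\,d\Gamma_{t,s}+d[\Gamma_{t,\cdot},Y]_s$. Substituting \eqref{eq:ThmDpred1} and \eqref{eq:ThmDpred2}, and computing the covariation — its continuous part is $\Gamma_{t,s-}\beta_sZ_s\,ds$, and its purely discontinuous part is $\sum_{i=1}^p\Gamma_{t,s-}\gamma_s^iK_s^i\,dN_s^i=\sum_{i=1}^p\Gamma_{t,s-}\gamma_s^iK_s^i\,dM_s^i+\sum_{i=1}^p\Gamma_{t,s-}\gamma_s^iK_s^i\lambda_s^i\,ds$, where one uses $P(\tau_i\ne\tau_j)=1$ together with the fact that the predictable finite-variation process $D$ does not jump at the totally inaccessible times $\tau_i$ — one checks that the $\alpha_sY_s\,ds$ drift cancels the $\alpha$-term of $d\Gamma_{t,\cdot}$ (up to the $ds$-null set of jump times of $Y$), the $\beta_sZ_s\,ds$ drift cancels the continuous covariation, and the $\sum_i\gamma_s^iK_s^i\lambda_s^i\,ds$ drift cancels the compensator part of the jump covariation. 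What remains is
\[
  d(\Gamma_{t,s}Y_s)=-\Gamma_{t,s-}\,dD_s+dL_s,\qquad
  L_s=\int_t^s\Gamma_{t,r-}(Z_r+\beta_rY_{r-})\,dW_r+\sum_{i=1}^p\int_t^s\Gamma_{t,r-}(K_r^i+\gamma_r^iY_{r-}+\gamma_r^iK_r^i)\,dM_r^i,
\]
so $L$ is a local martingale with $L_t=0$; integrating from $t$ to $T$ and using $\Gamma_{t,t}=1$, $Y_T=\eta$ gives $Y_t=\Gamma_{t,T}\eta+\int_t^T\Gamma_{t,r-}\,dD_r-L_T$.

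Finally I would take $\mathcal{G}_t$-conditional expectations, and the only delicate point — the main obstacle — is that $L$ is a priori only a local martingale given merely the square-integrability of $Y$ and $\Gamma_{t,\cdot}$, so one cannot directly assert $\mathbb{E}[L_T\mid\mathcal{G}_t]=0$. I would fix a localizing sequence $(\rho_n)$ with $\rho_n\ge t$ along which $L^{\rho_n}$ is a true martingale, so that $\mathbb{E}[L_{T\wedge\rho_n}\mid\mathcal{G}_t]=L_t=0$ for every $n$; then let $n\to\infty$. Right-continuity gives $L_{T\wedge\rho_n}\to L_T$ a.s., and the family is dominated by $\Xi:=\sup_{t\le r\le T}|\Gamma_{t,r}|\big(\sup_{t\le r\le T}|Y_r|+\int_0^T|dD_r|\big)+|Y_t|$, which lies in $L^1$ by the Cauchy--Schwarz bounds from the second paragraph. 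The conditional dominated convergence theorem then yields $\mathbb{E}[L_T\mid\mathcal{G}_t]=0$, hence $\mathbb{E}[\Gamma_{t,T}\eta+\int_t^T\Gamma_{t,r-}\,dD_r-Y_t\mid\mathcal{G}_t]=0$; since $Y_t$ is $\mathcal{G}_t$-measurable, this is exactly \eqref{eq:ThmDpred3}. The routine parts are the bookkeeping in the product-rule computation; the genuinely delicate part is the $L^1$-domination that legitimizes passing from the local martingale $L$ to the identity $\mathbb{E}[L_T\mid\mathcal{G}_t]=0$.
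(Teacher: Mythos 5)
Your proof is correct and follows essentially the same route as the paper: It\^o's product rule applied to $\Gamma_{t,\cdot}Y$, cancellation of the drift terms, vanishing of $[\Gamma_{t,\cdot},D]$ because the predictable process $D$ cannot jump at the totally inaccessible times $\tau_i$, and then conditional expectation. The only cosmetic difference is that you justify $\mathbb{E}[L_T\mid\mathcal{G}_t]=0$ by localization plus conditional dominated convergence (dominating via the integrated identity), whereas the paper asserts directly that the local martingale is a true martingale from the square-integrability of $\Gamma_{t,\cdot}$, $Y$, $Z$, $K^i$ and the boundedness of $\beta$ and $\gamma^i\sqrt{\lambda^i}$; both arguments are valid.
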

\begin{remark}\label{remark:thmRR_Dpred}
    Applying Lemma \ref{lemma:expSemiMartingale} to the process $(\Gamma_{t,s})_{s\ge t}$ gives that $(\Gamma_{t,s})_{s\ge t}$ satisfies,
    \begin{align}\label{eq:remarkAdjoint1}
        \begin{aligned}
            \Gamma_{t,s} &= \exp\left(\int_t^s\alpha_rdr\right)\mathcal{E}\left(\int_t^.\beta_rdW_r + \int_t^.\sum_{i=1}^p\gamma_r^idM_r\right)_s\\
            &= \exp\left(\int_t^s\alpha_rdr+\int_t^s\beta_rdW_r - \frac{1}{2}\int_t^s\beta_r^2dr - \sum_{i=1}^p\gamma_r^i\lambda_r^idr\right)\prod_{i=1}^p(1+\gamma_{\tau_i}^i\mathbbm{1}_{\{t<\tau_i\le s\}}),
        \end{aligned}
    \end{align}
    for all $ t \le s\le T$ a.s. The process $(e^{\int_t^s\alpha_rdr})_{t\le s\le T}$ is positive and \textbf{bounded} (as $\alpha$ is bounded), and using Proposition \ref{prop:ESM-M} (since $\beta$ and $\gamma^i\sqrt{\lambda^i}$ for each $i\in\{1,\ldots,p\}$ are bounded), we have that $(\Gamma_{t,s})_{t\le s\le T}$ is a martingale and satisfies $\mathbb{E}[\sup_{t\le s\le T}\Gamma_{t,s}^2]<+\infty$.
\end{remark}
\begin{proof}
    Fix $t\in[0,T]$. Since $D\in\mathcal{A}_{p,T}^2$ is \emph{predictable} and the process $\Gamma_{t,.}$ admits at most $p$ jumps and only at the \emph{totally inaccessible} times $\tau_1,\ldots,\tau_p$, we have that $[\Gamma_{t,.},D]_s=0$ for $s\ge t$ a.s. By applying It\^o's product rule to $(Y_s\Gamma_{t,s})$, we get,
    \begin{equation}\label{eq:proofThmDpred1}
        -d(Y_s\Gamma_{t,s}) = -Y_{s-}d\Gamma_{t,s} - \Gamma_{t,s-}dY_s - d[Y,\Gamma_{t,.}]_s.
    \end{equation}
    Moreover,
    \begin{align}\label{eq:proofThmDpred2}
        \begin{aligned}
            d[Y,\Gamma_{t,.}]_s &= d\left[\int_t^\cdot\left(\alpha_rY_r+\beta_rZ_r+\sum_{i=1}^p\gamma_r^iK_r^i\lambda_r^i\right)dr,\Gamma_{t,\cdot}\right]_s + d[D,\Gamma_{t,\cdot}]_s \\
            &\quad\quad+ d[Z\bullet W, \Gamma_{t,\cdot}]_s + d\left(\sum_{i=1}^p[K^i\bullet M^i,\Gamma_{t,\cdot}]_s\right)\\
            &=d[D,\Gamma_{t,\cdot}]_s + d\left[\int_t^\cdot Z_rdW_r, \int_t^\cdot\Gamma_{t,r-}\beta_rdW_r\right]_s\\
	      &\quad\quad+ d\left(\sum_{i=1}^p\sum_{j=1}^p\left[\int_t^\cdot K_r^idM_r^i,\int_t^\cdot\Gamma_{t,r-}\gamma_r^jdM_r^j\right]_s\right)\\
            &= d\left(\int_t^s\Gamma_{t,r-}\beta_rZ_rdr\right) + d\left(\int_t^s\sum_{i=1}^p\sum_{j=1}^pK_r^i\Gamma_{t,r-}\gamma_r^jd[M^i,M^j]_r\right)\\
            &= \Gamma_{t,s-}\beta_sZ_sds + d\left(\int_t^s\sum_{i=1}^p\Gamma_{t,r-}K_r^i\gamma_r^idN_r^i\right)\\
            &= \Gamma_{t,s-}\beta_sZ_sds + \Gamma_{t,s-}\sum_{i=1}^pK_s^i\gamma_s^idN_s^i,
        \end{aligned}
    \end{align}
    where we have used that $d[M^i,M^j]_s = 0$, for $i\ne j$, since $P(\tau_i = \tau_j)=0$, $i\ne j$, and, for the case $i=j$, $d[M^i]_s = dN_s^i$.
    
    Plugging \eqref{eq:proofThmDpred2} into \eqref{eq:proofThmDpred1} and using $dN_s^i = dM_s^i + \lambda_s^ids$, we get,
    \begin{equation}\label{eq:proofThmDpred3}
        -d(Y_s\Gamma_{t,s}) = -\Gamma_{t,s-}(Y_s\beta_s + Z_s)dW_s - \Gamma_{t,s-}\left(\sum_{i=1}^p (Y_{s-}\gamma_s^i + K_s^i(1+\gamma_s^i))dM_s^i\right)\\
        +\Gamma_{t,s-}dD_s.
    \end{equation}
    Setting $dm_s = \Gamma_{t,s-}(Y_s\beta_s + Z_s)dW_s + \Gamma_{t,s-}\left(\sum_{i=1}^p (Y_{s-}\gamma_s^i + K_s^i(1+\gamma_s^i))dM_s^i\right)$, we get $-d(Y_s\Gamma_{t,s}) = -dm_s + \Gamma_{t,s-}dD_s$. Integrating between $t$ and $T$, we derive,
    \begin{equation}\label{eq:proofThmDpred4}
        Y_t = \eta\Gamma_{t,T} + \int_t^T\Gamma_{t,s-}dD_s - (m_T - m_t),\quad\text{a.s.}
    \end{equation}
    By Remark \ref{remark:thmRR_Dpred} we have $(\Gamma_{t,s})_{t\le s\le T}\in\mathcal{S}^2$. Furthermore, $Y\in\mathcal{S}^2$, $Z\in\mathcal{H}^2$ and $K^i\in\mathcal{H}_{\lambda^i}^2$ for each $i\in\{1,\ldots,p\}$, and $\beta$ and $\gamma^i\sqrt{\lambda^i}$ for each $i\in\{1,\ldots,p\}$ are bounded. It follows that the local martingale $m = (m_s)_{t\le s\le T}$ is a martingale.  Taking the conditional expectation in \eqref{eq:proofThmDpred4}, we get the desired equality \eqref{eq:ThmDpred3}.
\end{proof}

We now consider the case where the process $D$ just an \textbf{optional} process (not necessarily predictable). More precisely, $D$ is in $\mathcal{A}_{T}^2$; hence, by Proposition \ref{prop:OptionalpJumps}, it is of the form \eqref{eq:2_D_Optional}.

\begin{theorem}[\textbf{Explicit Solution of the Generalized $\lambda^{(p)}$-Linear BSDE with $D$ Optional}]\label{theorem:RRGLB2}
    Let the assumptions made in Theorem \ref{theorem:RRGLB} all hold, except that $D$ is now in $\mathcal{A}_T^2$ (and not necessarily in $\mathcal{A}_{p,T}^2$). Let $D'\in\mathcal{A}_{p,T}^2$ and $\theta^i\in\mathcal{H}_{\lambda^i,T}^2$, for $i\in\{1,\ldots,p\}$, be the unique predictable processes from Proposition \ref{prop:OptionalpJumps}, such that for all $t\in[0,T]$,
    \begin{equation}\label{eq:ThmDOptRR1}
        D_t = D_t' + \int_0^t\sum_{i=1}^p\theta_s^idN_s^i, \text{  a.s.}
    \end{equation}

    Let $(Y,Z,K^1,\ldots,K^p)$ be the solution in $\mathcal{S}^2\times\mathcal{H}^2\times\mathcal{H}_{\lambda^1}^2\times\cdots\times\mathcal{H}_{\lambda^p}^2$ of the BSDE  with generalized $\lambda^{(p)}$-linear driver $(\alpha_ty+\beta_tz+\sum_{i=1}^p\gamma_t^ik^i\lambda_t^i)dt + dD_t$, terminal time $T$, and terminal condition $\eta\in L^2({\mathcal{G}_T})$.

    Then, a.s. for all $t\in[0,T]$,
    \begin{align}\label{eq:ThmDOptRR2}
        \begin{aligned}
            Y_t &= \mathbb{E}\left[\Gamma_{t,T}\eta + \int_t^T\Gamma_{t,s-}\left(dD_s' + \sum_{i=1}^p\theta_s^i(1+\gamma_s^i)dN_s^i\right)\middle|\mathcal{G}_t\right]\\
            &=\mathbb{E}\left[\Gamma_{t,T}\eta+\int_t^T\Gamma_{t,s-}dD_s' + \sum_{i=1}^p\Gamma_{t,\tau_i}\theta_{\tau_i}^i\mathbbm{1}_{\{t < \tau_i \le T\}}\middle|\mathcal{G}_t\right],
        \end{aligned}
    \end{align}
    where the process $(\Gamma_{t,s})_{t\le s\le T}$ is the solution of  the adjoint forward SDE \eqref{eq:remarkAdjoint1}.
\end{theorem}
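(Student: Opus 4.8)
The plan is to follow the proof of Theorem \ref{theorem:RRGLB} almost verbatim, applying It\^o's product rule to $(Y_s\Gamma_{t,s})_{s\ge t}$ for fixed $t\in[0,T]$, the only new feature being that the covariation $[\Gamma_{t,\cdot},D]$ no longer vanishes. Using \eqref{eq:ThmDOptRR1}, $D=D'+\sum_{i=1}^p\int_0^\cdot\theta_r^i\,dN_r^i$ with $D'\in\mathcal{A}_{p,T}^2$ predictable and $\theta^i\in\mathcal{H}_{\lambda^i}^2$ from Proposition \ref{prop:OptionalpJumps}. Since $\Gamma_{t,\cdot}$ jumps only at the totally inaccessible times $\tau_1,\dots,\tau_p$ one still has $[\Gamma_{t,\cdot},D']_s=0$, while $\Delta\Gamma_{t,\tau_i}=\Gamma_{t,\tau_i-}\gamma_{\tau_i}^i$ and $\Delta N^i_{\tau_i}=1$ give $d\big[\Gamma_{t,\cdot},\int_0^\cdot\theta^i_r\,dN^i_r\big]_s=\Gamma_{t,s-}\gamma_s^i\theta_s^i\,dN_s^i$. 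Hence $d[Y,\Gamma_{t,\cdot}]_s=\Gamma_{t,s-}\beta_sZ_s\,ds+\sum_{i=1}^p\Gamma_{t,s-}\gamma_s^i(K_s^i-\theta_s^i)\,dN_s^i$, the extra $-\sum_i\Gamma_{t,s-}\gamma_s^i\theta_s^i\,dN_s^i$ relative to \eqref{eq:proofThmDpred2} being the whole difference from the predictable case.

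Carrying out $d(Y_s\Gamma_{t,s})=Y_{s-}\,d\Gamma_{t,s}+\Gamma_{t,s-}\,dY_s+d[Y,\Gamma_{t,\cdot}]_s$, substituting the BSDE for $Y$, the adjoint dynamics \eqref{eq:remarkAdjoint1}, and the bracket above, and then, as in \eqref{eq:proofThmDpred3}, rewriting the residual drift $-\sum_i\Gamma_{t,s-}\gamma_s^iK_s^i\lambda_s^i\,ds$ via $\lambda_s^i\,ds=dN_s^i-dM_s^i$, I expect the $\alpha_s$- and $\beta_s$-drift terms to cancel (using $Y_{s-}=Y_s$ $ds$-a.e.) and to be left with
\begin{equation*}
-d(Y_s\Gamma_{t,s}) = -dm_s + \Gamma_{t,s-}\,dD_s' + \sum_{i=1}^p\Gamma_{t,s-}(1+\gamma_s^i)\theta_s^i\,dN_s^i,
\end{equation*}
where $m_s\coloneqq\int_t^s\Gamma_{t,r-}(\beta_rY_{r-}+Z_r)\,dW_r+\sum_{i=1}^p\int_t^s\Gamma_{t,r-}\big(\gamma_r^iY_{r-}+(1+\gamma_r^i)K_r^i\big)\,dM_r^i$ is \emph{exactly} the local martingale $m$ from the proof of Theorem \ref{theorem:RRGLB}: the $\theta^i$'s drop out of the martingale part and survive only in the finite-variation remainder. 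Integrating over $[t,T]$ with $Y_T=\eta$, $\Gamma_{t,t}=1$,
\begin{equation*}
Y_t = \Gamma_{t,T}\eta + \int_t^T\Gamma_{t,s-}\,dD_s' + \sum_{i=1}^p\int_t^T\Gamma_{t,s-}(1+\gamma_s^i)\theta_s^i\,dN_s^i - (m_T-m_t).
\end{equation*}

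It remains to take conditional expectations. That $m$ is a true martingale is literally the argument of Theorem \ref{theorem:RRGLB}: it rests only on $\Gamma_{t,\cdot}\in\mathcal{S}^2$ (Remark \ref{remark:thmRR_Dpred}, via Proposition \ref{prop:ESM-M}), $Y\in\mathcal{S}^2$, $Z\in\mathcal{H}^2$, $K^i\in\mathcal{H}_{\lambda^i}^2$, and boundedness of $\beta$ and $\gamma^i\sqrt{\lambda^i}$, none of which is affected by $D$ being merely optional. For the two remaining terms one checks integrability of the conditioned random variables: $\mathbb{E}[|\int_t^T\Gamma_{t,s-}\,dD_s'|]\le\mathbb{E}[\sup_{s}\Gamma_{t,s}^2]^{1/2}\mathbb{E}[(\int_0^T|dD_s'|)^2]^{1/2}<\infty$, while, using $\Gamma_{t,\tau_i-}(1+\gamma_{\tau_i}^i)=\Gamma_{t,\tau_i}$,
\begin{equation*}
\int_t^T\Gamma_{t,s-}(1+\gamma_s^i)\theta_s^i\,dN_s^i = \Gamma_{t,\tau_i}\theta_{\tau_i}^i\mathbbm{1}_{\{t<\tau_i\le T\}},
\end{equation*}
which lies in $L^1$ by Cauchy--Schwarz since $\mathbb{E}[\sup_s\Gamma_{t,s}^2]<\infty$ and $\mathbb{E}[(\theta_{\tau_i}^i)^2\mathbbm{1}_{\{\tau_i\le T\}}]=\|\theta^i\|_{\lambda^i}^2<\infty$. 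Taking $\mathbb{E}[\,\cdot\mid\mathcal{G}_t]$ then yields the first equality in \eqref{eq:ThmDOptRR2}, and substituting the displayed identity for the $dN^i$-integral gives the second. As an alternative one could set $\tilde K^i\coloneqq K^i-\theta^i\in\mathcal{H}_{\lambda^i}^2$ and $\tilde D'_t\coloneqq D'_t+\int_0^t\sum_i(1+\gamma_s^i)\theta_s^i\lambda_s^i\,ds\in\mathcal{A}_{p,T}^2$, observe that $(Y,Z,\tilde K^1,\dots,\tilde K^p)$ solves the generalized $\lambda^{(p)}$-linear BSDE with predictable driver process $\tilde D'$, and invoke Theorem \ref{theorem:RRGLB} together with the uniqueness of Proposition \ref{prop:EU}.

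The main obstacle is the bookkeeping in the product-rule step: one has to verify that the jump contributions from $-\sum_iK_s^i\,dM_s^i$, from $-dD_s$, and from the new piece of $[Y,\Gamma_{t,\cdot}]$, together with the conversion of the leftover $\gamma^iK^i\lambda^i\,ds$ drift, recombine so that (i) every $ds$-drift cancels, (ii) the $M^i$-martingale coefficient is precisely $\gamma^iY_{-}+(1+\gamma^i)K^i$, exactly as in the predictable case, and (iii) the finite-variation remainder is exactly $dD'+\sum_i(1+\gamma^i)\theta^i\,dN^i$. Everything else reduces to results already available: the martingale property of $m$ and of $\Gamma_{t,\cdot}$, and the integrability furnished by Propositions \ref{prop:ESM-M} and \ref{prop:OptionalpJumps}.
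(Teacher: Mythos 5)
Your proof is correct and follows essentially the same route as the paper's: the new covariation term $d[D,\Gamma_{t,\cdot}]_s=\Gamma_{t,s-}\sum_{i=1}^p\theta_s^i\gamma_s^i\,dN_s^i$, the recombination of the finite-variation part into $dD_s'+\sum_{i=1}^p(1+\gamma_s^i)\theta_s^i\,dN_s^i$, and the identity $\Gamma_{t,\tau_i-}(1+\gamma_{\tau_i}^i)\mathbbm{1}_{\{t<\tau_i\le T\}}=\Gamma_{t,\tau_i}\mathbbm{1}_{\{t<\tau_i\le T\}}$ are exactly the steps of the paper's proof. Your explicit integrability checks before conditioning are a welcome addition (and your signs are right where the paper's intermediate display has a typo), while the closing alternative via $\tilde{K}^i=K^i-\theta^i$ is only a sketch and is not needed for the argument.
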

\begin{proof}
    Since $D$ satisfies \eqref{eq:ThmDOptRR1}, we have,
    \begin{align}\label{eq:proofThmDOptRR1}
        \begin{aligned}
            d[D,\Gamma_{t,\cdot}]_s &= d[D',\Gamma_{t,\cdot}]_s + d\left(\sum_{i=1}^p\sum_{j=1}^p\left[\int_t^\cdot\theta_r^idN_r^i, \int_t^\cdot\Gamma_{t,r-}\gamma_r^jdN_r^j\right]_s\right)\\
            &=d\left(\int_t^s\Gamma_{t,r-}\sum_{i=1}^p\theta_r^i\gamma_r^idN_r^i\right)=\Gamma_{t,s-}\sum_{i=1}^p\theta_s^i\gamma_s^idN_s^i\quad\text{a.s.},
        \end{aligned}
    \end{align}
    where we have used that $\tau_i\neq \tau_j$ a.s. for $i\neq j.$ 
     By applying It\^o's product rule to $(Y_s\Gamma_{t,s})$, using similar computations to those from the proof of Theorem \ref{theorem:RRGLB}, and using \eqref{eq:proofThmDOptRR1}, we get,
    \begin{multline}\label{eq:proofThmDOptRR2}
        -d(Y_s\Gamma_{t,s}) = -\Gamma_{t,s-}(Y_s\beta_s + Z_s)dW_s - \Gamma_{t,s-}\left(\sum_{i=1}^p(Y_{s-}\gamma_s^i + K_s^i(1+\gamma_s^i))dM_s^i\right)\\
        -\Gamma_{t,s-}\left(dD_s + \sum_{i=1}^p\theta_s^i\gamma_s^idN_s^i\right).
    \end{multline}

    Using $\Gamma_{t,s-}(dD_s + \sum_{i=1}^p\theta_s^i\gamma_s^idN_s^i) = \Gamma_{t,s-}(dD_s' + \sum_{i=1}^p\theta_s^i(1+\gamma_s^i)dN_s^i)$ in \eqref{eq:proofThmDOptRR2},  integrating from $t$ to $T$ and taking the conditional expectation, we derive that,
    \begin{equation}\label{eq:proofThmDOptRR3}
        Y_t = \mathbb{E}\left[\Gamma_{t,T}\eta + \int_t^T\Gamma_{t,s-}\left(dD_s' + \sum_{i=1}^p\theta_s^i(1+\gamma_s^i)dN_s^i\right)\middle|\mathcal{G}_t\right],
    \end{equation}
    which is the first equality from \eqref{eq:ThmDOptRR2}. 

    Now, we have,
    \begin{align}\label{eq:proofThmDOptRR4}
        \begin{aligned}
            \mathbb{E}\left[\int_t^T\Gamma_{t,s-}\sum_{i=1}^p\theta_s^i(1+\gamma_s^i)dN_s^i\middle|\mathcal{G}_t\right] &= \mathbb{E}\left[\sum_{i=1}^p\Gamma_{t,\tau_i-}\theta_{\tau_i}^i(1+\gamma_{\tau_i}^i)\mathbbm{1}_{\{t<\tau_i\le T\}}\middle|\mathcal{G}_t\right]\\
            &=\mathbb{E}\left[\sum_{i=1}^p\Gamma_{t,\tau_i}\theta_{\tau_i}^i\mathbbm{1}_{\{t<\tau_i\le T\}}\middle|\mathcal{G}_t\right].
        \end{aligned}
    \end{align}

    The second equality in \eqref{eq:proofThmDOptRR4} is due to $\Gamma_{t,s}$ having the following representation,
    \begin{align*}
        \Gamma_{t,s} = \exp\left(\int_t^s\left(\alpha_r-\frac{1}{2}\beta_r^2-\sum_{i=1}^p\gamma_r^i\lambda_r^i\right)dr+\int_t^s\beta_rdW_r\right)\prod_{i=1}^p(1+\gamma_{\tau_i}^i\mathbbm{1}_{\{t<\tau_i\le s\}}),
    \end{align*}
 	for all $t\le s\le T$ a.s.;  it follows that, for each $i\in\{1,\ldots,p\},$ $\Gamma_{t,\tau_i-}(1+\gamma_{\tau_i}^i)\mathbbm{1}_{\{t<\tau_i\le T\}} = \Gamma_{t,\tau_i}\mathbbm{1}_{\{t<\tau_i\le T\}}$ (where we have used that $\tau_1<\tau_2<...<\tau_p$).  

    By replacing \eqref{eq:proofThmDOptRR4} in \eqref{eq:proofThmDOptRR3}, we get the following representation,
    \begin{equation*}
        Y_t = \mathbb{E}\left[\Gamma_{t,T}\eta + \int_t^T\Gamma_{t,s-}dD_s' + \sum_{i=1}^p\Gamma_{t,\tau_i}\theta_{\tau_i}^i\mathbbm{1}_{\{t<\tau_i\le T\}}\middle|\mathcal{G}_t\right],
    \end{equation*}
    which is the second equality in \eqref{eq:ThmDOptRR2}.
\end{proof}

\subsection{Comparison Theorems for BSDEs with Multiple Default Jumps}\label{subsection_comparison}

We now provide a comparison and a strict comparison results for BSDEs with generalized $\lambda^{(p)}$-admissible drivers associated with finite variational rcll adapted processes in $\mathcal{A}_T^2$.  \\
For convenience, we define the following sets:
\begin{equation} \label{sets}
\begin{aligned}
&A_0:=\{\tau_1>T\}, A_1:=\{\tau_1\leq T,\tau_2>T\},..., A_k:=\{\tau_k\leq T,\tau_{k+1}> T\},...,\\
&A_{p-1}:=\{\tau_{p-1}\leq T,\tau_{p}> T\}, \text { and } A_p:=\{\tau_{p}\leq T\}.
\end{aligned}
\end{equation}
 As $\tau_1<\tau_2<...<\tau_p$, the above sets  form a partition of $\Omega$.

\begin{theorem}[\textbf{Comparison and Strict Comparison for BSDEs with Multiple Default Jumps}]\label{thm:CompThm}
    Let $\eta$ and $\hat{\eta}$ be in $L^2(\mathcal{G}_T)$. Let $g$ and $\hat{g}$ be $\lambda^{(p)}$-admissible drivers. Let $D$ and $\hat{D}$ be  processes in $\mathcal{A}_{T}^2$. Let $(Y,Z,K^1,\ldots,K^p)$ be the solution in $\mathcal{S}^2\times\mathcal{H}_T^2\times\mathcal{H}_{\lambda^1,T}^2\times\cdots\times\mathcal{H}_{\lambda^p,T}^2$ to the BSDE,
    \begin{equation*}
        -dY_t = g(t,Y_t,Z_t,K_t^1,\ldots,K_t^p)dt + dD_t - Z_tdW_t - \sum_{i=1}^pK_t^idM_t^i,\quad Y_T =\eta.
    \end{equation*}
    Let $(\hat{Y},\hat{Z},\hat{K}^1,\ldots,\hat{K}^p)$ be the solution in $\mathcal{S}^2\times\mathcal{H}_T^2\times\mathcal{H}_{\lambda^1,T}^2\times\cdots\times\mathcal{H}_{\lambda^p,T}^2$ to the BSDE,
    \begin{equation*}
        -d\hat{Y}_t = \hat{g}(t,\hat{Y}_t,\hat{Z}_t,\hat{K}_t^1,\ldots,\hat{K}_t^p)dt + d\hat{D}_t - \hat{Z}_tdW_t - \sum_{i=1}^p\hat{K}_t^idM_t^i,\quad \hat{Y}_T =\hat{\eta}.
    \end{equation*}
    Then, the following two statements hold true:
    \begin{enumerate}[label=(\roman*)]
        \item \textbf{Comparison:} Assume that there exist $p$ predictable processes $(\gamma_t^i)$ (where $i\in\{1,\ldots,p\}$) with      $(\gamma_t^i\sqrt{\lambda_t^i})\text{  bounded }dP\otimes dt$ \text{-a.e.} (for  $i\in\{1,\ldots,p\}$) such that,
        \begin{equation}\label{eq:CompThm1}
       \text{ for each  } k\in\{1,...,p\},  \text{ on } A_k, 1+\gamma_{\tau_i}^i\ge0 \text{ a.s.  for all } i\in\{1,...,k\},
        \end{equation}
        and such that,
        \begin{equation}\label{eq:CompThm2}
            g(t,\hat{Y}_t,\hat{Z}_t,K_t^1,\ldots,K_t^p) - g(t,\hat{Y}_t,\hat{Z}_t,\hat{K}_t^1,\ldots,\hat{K}_t^p) \ge \sum_{i=1}^p\gamma_t^i(K_t^i-\hat{K}_t^i)\lambda_t^i
        \end{equation}
        for $t\in[0,T]$, $dP\otimes dt$-a.e. Suppose that $\eta\ge\hat{\eta}$ a.s.  that $\bar{D}\coloneqq D - \hat{D}$ is non-decreasing, and that
        \begin{equation}\label{eq:CompThm3}
            g(t,\hat{Y}_t,\hat{Z}_t,\hat{K}_t^1,\ldots,\hat{K}_t^p)\ge \hat{g}(t,\hat{Y}_t,\hat{Z}_t,\hat{K}_t^1,\ldots,\hat{K}_t^p)
        \end{equation}
        for $t\in[0,T]$, $dP\otimes dt$-a.e.
        We then have $Y_t\ge\hat{Y}_t$ for all $t\in[0,T]$ a.s.
        \item \textbf{Strict Comparison:} Assume moreover that   $\gamma_{\tau_i}^i>-1$ a.s. for each $i\in\{1,\ldots,p\}$ and  that there exists $t_0\in [0,T]$ such that  $Y_{t_0}=\hat{Y}_{t_0}$ a.s.  Then, $\eta=\hat{\eta}$ a.s. and the inequality in \eqref{eq:CompThm3} is an equality on $[t_0,T]$. Furthermore, $\bar{D}\coloneqq D - \hat{D}$ is constant on $[t_0,T]$ and $Y =\hat{Y}$ on $[t_0,T]$.
    \end{enumerate}
\end{theorem}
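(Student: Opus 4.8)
The plan is to \emph{linearise} the difference of the two BSDEs and then apply the explicit representation of the solution of a generalized $\lambda^{(p)}$-linear BSDE (Theorem~\ref{theorem:RRGLB2}), exploiting that the associated adjoint semimartingale is non-negative under hypothesis \eqref{eq:CompThm1} and strictly positive under the extra hypothesis of part~(ii). For brevity I abbreviate $g(t,y,z,k^1,\ldots,k^p)$ by $g(t,y,z,k^{(p)})$ and write $K_t^{(p)}=(K_t^1,\ldots,K_t^p)$, $\hat K_t^{(p)}=(\hat K_t^1,\ldots,\hat K_t^p)$.

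Set $\bar Y:=Y-\hat Y$, $\bar Z:=Z-\hat Z$, $\bar K^i:=K^i-\hat K^i$, $\bar\eta:=\eta-\hat\eta$, $\bar D:=D-\hat D$. Subtracting the two equations, $\bar Y$ satisfies $-d\bar Y_t=\big(g(t,Y_t,Z_t,K_t^{(p)})-\hat g(t,\hat Y_t,\hat Z_t,\hat K_t^{(p)})\big)dt+d\bar D_t-\bar Z_t\,dW_t-\sum_{i=1}^p\bar K_t^i\,dM_t^i$, $\bar Y_T=\bar\eta$. I would split the driver difference telescopically: the $y$-increment $g(t,Y_t,Z_t,K_t^{(p)})-g(t,\hat Y_t,Z_t,K_t^{(p)})$ equals $\alpha_t\bar Y_t$ for a predictable $\alpha$ with $|\alpha|\le C$ (pass to the left limits of $Y,\hat Y$, which coincide with $Y,\hat Y$ for $dt$-a.e.\ $t$, to secure predictability); the $z$-increment equals $\beta_t\bar Z_t$ with $\beta$ predictable, $|\beta|\le C$; the $k$-increment $g(t,\hat Y_t,\hat Z_t,K_t^{(p)})-g(t,\hat Y_t,\hat Z_t,\hat K_t^{(p)})$ is $\ge\sum_{i=1}^p\gamma_t^i\bar K_t^i\lambda_t^i$ by \eqref{eq:CompThm2}; and $g(t,\hat Y_t,\hat Z_t,\hat K_t^{(p)})-\hat g(t,\hat Y_t,\hat Z_t,\hat K_t^{(p)})\ge0$ by \eqref{eq:CompThm3}. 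Hence $\bar Y$ solves the generalized $\lambda^{(p)}$-linear BSDE with coefficients $(\alpha,\beta,\gamma^1,\ldots,\gamma^p)$, terminal condition $\bar\eta$, and finite-variation process $\tilde D_t:=\int_0^t\psi_s\,ds+\bar D_t$, where $\psi_t:=g(t,Y_t,Z_t,K_t^{(p)})-\hat g(t,\hat Y_t,\hat Z_t,\hat K_t^{(p)})-\alpha_t\bar Y_t-\beta_t\bar Z_t-\sum_{i=1}^p\gamma_t^i\bar K_t^i\lambda_t^i$ is the sum of the non-negative defect in \eqref{eq:CompThm2} and the non-negative defect in \eqref{eq:CompThm3}, so $\psi\ge0$. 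A routine estimate using $\lambda^{(p)}$-admissibility of $g,\hat g$ and boundedness of $\alpha,\beta,\gamma^i\sqrt{\lambda^i}$ gives $\psi\in\mathcal{H}_T^2$, whence $\int_0^\cdot\psi_s\,ds\in\mathcal{A}_{p,T}^2$ and $\tilde D\in\mathcal{A}_T^2$; moreover $\tilde D$ is non-decreasing, since $\psi\ge0$ and $\bar D$ is non-decreasing by hypothesis.

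For (i): by uniqueness (Proposition~\ref{prop:EU}), $(\bar Y,\bar Z,\bar K^1,\ldots,\bar K^p)$ is \emph{the} solution of that generalized $\lambda^{(p)}$-linear BSDE, so Theorem~\ref{theorem:RRGLB2} yields, with $\tilde D=\tilde D'+\sum_{i=1}^p\int_0^\cdot\tilde\theta_s^i\,dN_s^i$ the decomposition of Proposition~\ref{prop:OptionalpJumps}, $\bar Y_t=\mathbb{E}\big[\Gamma_{t,T}\bar\eta+\int_t^T\Gamma_{t,s-}d\tilde D'_s+\sum_{i=1}^p\Gamma_{t,\tau_i}\tilde\theta_{\tau_i}^i\mathbbm{1}_{\{t<\tau_i\le T\}}\,\big|\,\mathcal{G}_t\big]$, where $\Gamma_{t,\cdot}$ solves \eqref{eq:remarkAdjoint1} with the present $\alpha,\beta,\gamma^i$. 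Read over the partition $A_0,\ldots,A_p$ of \eqref{sets}, condition \eqref{eq:CompThm1} is exactly $1+\gamma_{\tau_i}^i\ge0$ for every $i$ with $\tau_i\le T$ a.s.; by the product representation in \eqref{eq:remarkAdjoint1} (cf.\ Remark~\ref{remark:2.12}) this gives $\Gamma_{t,s}\ge0$, hence $\Gamma_{t,s-}\ge0$, for all $t\le s\le T$ a.s. Since $\tilde D$ is non-decreasing, its decomposition has $d\tilde D'_s\ge0$ and $\tilde\theta^i\ge0$ (inherited from the proof of Proposition~\ref{prop:OptionalpJumps}, applied with the trivial canonical decomposition $\tilde D=\tilde D-0$ and Lemma~\ref{lemma:hDecomposition}). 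Together with $\bar\eta\ge0$, all three terms in the representation are non-negative, so $\bar Y_t\ge0$, i.e.\ $Y_t\ge\hat Y_t$, for all $t$.

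For (ii): under the extra assumption $\gamma_{\tau_i}^i>-1$ a.s., the same product representation gives $\Gamma_{t,s}>0$, hence $\Gamma_{t,s-}>0$, for all $t\le s\le T$. Evaluating the representation at $t=t_0$, the three non-negative terms sum to $\bar Y_{t_0}=0$ and have zero $\mathcal{G}_{t_0}$-conditional expectation, so each vanishes a.s.: $\Gamma_{t_0,T}\bar\eta=0$ forces $\eta=\hat\eta$; $\int_{t_0}^T\Gamma_{t_0,s-}d\tilde D'_s=0$ forces $\tilde D'$ constant on $[t_0,T]$; and $\sum_i\Gamma_{t_0,\tau_i}\tilde\theta_{\tau_i}^i\mathbbm{1}_{\{t_0<\tau_i\le T\}}=0$ forces $\tilde\theta_{\tau_i}^i\mathbbm{1}_{\{t_0<\tau_i\le T\}}=0$ for every $i$. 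Hence $\tilde D$ is constant on $[t_0,T]$; being a sum of the two non-decreasing processes $\int_0^\cdot\psi_s\,ds$ and $\bar D$, both are constant on $[t_0,T]$, i.e.\ $\psi\equiv0$ and $\bar D$ is constant on $[t_0,T]$. Since $\psi$ is the sum of the non-negative defects in \eqref{eq:CompThm2} and \eqref{eq:CompThm3}, $\psi\equiv0$ on $[t_0,T]$ forces, in particular, equality in \eqref{eq:CompThm3} there. Finally, re-evaluating the representation at arbitrary $t\in[t_0,T]$ with $\bar\eta=0$, $\tilde D'$ constant on $[t_0,T]$ and $\tilde\theta_{\tau_i}^i\mathbbm{1}_{\{t_0<\tau_i\le T\}}=0$ shows every term vanishes, so $\bar Y_t=0$, i.e.\ $Y=\hat Y$ on $[t_0,T]$. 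I expect the main obstacle to be the bookkeeping that makes the linearised data admissible for Theorem~\ref{theorem:RRGLB2} — in particular, verifying that the Proposition~\ref{prop:OptionalpJumps}-decomposition of the non-decreasing process $\tilde D$ has a non-decreasing predictable part $\tilde D'$ and non-negative jump coefficients $\tilde\theta^i$, which is precisely what turns the representation into a sum of manifestly non-negative terms — together with the care needed to keep $\alpha,\beta$ predictable.
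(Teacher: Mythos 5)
Your proof is correct, and it reaches the conclusion by a genuinely different organization of the same underlying machinery. The paper linearises in the same way (its $\delta_s,\beta_s$ are your $\alpha_t,\beta_t$) but then keeps an \emph{inequality} and redoes the It\^o product-rule computation for $\bar Y_s\Gamma_{t,s}$ from scratch, in two separate steps for $D$ predictable and $D$ optional, using Lemma~\ref{lemma:nonDecreasingDecompositions} to control the bracket $[\bar D,\Gamma_{t,\cdot}]$; this yields only the lower bound $\bar Y_t\ge\mathbb{E}[\Gamma_{t,T}\bar\eta+\cdots\mid\mathcal G_t]$. You instead absorb the two non-negative defects (from \eqref{eq:CompThm2} and \eqref{eq:CompThm3}) into an absolutely continuous non-decreasing process and fold it into $\tilde D=\int_0^\cdot\psi_s\,ds+\bar D$, so that $(\bar Y,\bar Z,\bar K^{(p)})$ is, by uniqueness (Proposition~\ref{prop:EU}), \emph{the} solution of an exact generalized $\lambda^{(p)}$-linear BSDE, and then invoke Theorem~\ref{theorem:RRGLB2} as a black box together with Lemma~\ref{lemma:hDecomposition} (applied to the non-decreasing $\tilde D$, whose Proposition~\ref{prop:OptionalpJumps}-decomposition then has non-decreasing predictable part and non-negative jump coefficients by uniqueness). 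What your route buys is an exact identity for $\bar Y_t$ rather than an inequality, which makes the strict comparison cleaner: all conclusions of (ii), including the equality in \eqref{eq:CompThm3} (via $\psi\equiv0$), the constancy of $\bar D$, and $Y=\hat Y$ on $[t_0,T]$, drop out of forcing each non-negative term in the representation to vanish; what it costs is the extra bookkeeping you flag ($\psi\in\mathcal H^2_T$, predictability of $\alpha,\beta$ via left limits, and the sign structure of the decomposition of $\tilde D$), all of which you handle. Two cosmetic points: the reading of \eqref{eq:CompThm1} as ``$1+\gamma^i_{\tau_i}\ge0$ for every $i$ with $\tau_i\le T$'' is exactly the content of Remark~\ref{remark:CompThm997} and is worth citing explicitly; and the strict positivity of the \emph{left limits} $\Gamma_{t_0,s-}$ in part (ii) should be read off the product representation \eqref{eq:remarkAdjoint1} (with the indicator $\mathbbm{1}_{\{t_0<\tau_i<s\}}$) rather than deduced from positivity of $\Gamma_{t_0,s}$ alone.
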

\begin{remark}\label{remark:CompThm997}
Due to the assumption $\tau_1<\tau_2<...<\tau_p$, the condition from Eq. \eqref{eq:CompThm1}, namely,  for each  $k\in\{1,...,p\}$,   on  $A_k, 1+\gamma_{\tau_i}^i\ge0$  a.s.  for all  $i\in\{1,...,k\}$, is equivalent, in our framework, to the condition:
\begin{equation}\label{old_99876}
\text{ for all } t\in[0,T], \prod_{i=1}^p(1+\gamma_{\tau_i}^i\mathbbm{1}_{\{t<\tau_i\le s\}})\ge0, \text{ for all } s\in[t,T], \text{ a.s.},
\end{equation}
  which ensures the non-negativity of the adjoint process $(\Gamma_{t,\cdot})$ in the proof of the comparison theorem.   To show the equivalence between the two conditions, we proceed as follows: Let the condition from Eq.\eqref{old_99876} hold. Let us take t=0 in this condition, and let $k\in\{1,...,p\}$. Let us place ourselves on the set $A_k$: taking successively $s=\tau_1(\omega)$, $s=\tau_2(\omega)$,..., $s= \tau_k(\omega)$ in Eq.\eqref{old_99876} (and using that  $\tau_1<\tau_2<...<\tau_p$), we get,  $1+\gamma^1_{\tau_1(\omega)}(\omega)\geq 0$, ..., $1+\gamma^k_{\tau_k(\omega)}(\omega)\geq 0$ on $A_k$. Conversely, let the condition from Eq. \eqref{eq:CompThm1} hold true. For $t\in[0,T]$,   for $s\in[t,T]$, we have $\prod_{i=1}^p(1+\gamma_{\tau_i}^i\mathbbm{1}_{\{t<\tau_i\le s\}})=\sum_{k=0}^p \mathbbm 1_{A_k} (\prod_{i=1}^p(1+\gamma_{\tau_i}^i\mathbbm{1}_{\{t<\tau_i\le s\}})).$ Let $k\in\{1,...,p-1\}. $ We consider the set  $A_k\cap \{t<\tau_i\leq s\}= \{\tau_k\leq T,\tau_{k+1}> T\}\cap \{t<\tau_i\leq s\}$: for each $i$ such that  $1 \leq i\leq k$, it holds $\gamma_{\tau_i}^i\geq -1$ on $A_k\cap \{t<\tau_i\leq s\}$ (by condition  \eqref{eq:CompThm1}); for $i\geq k+1$, $A_k\cap \{t<\tau_i\leq s\}=\varnothing$ (as $\tau_{k+1}>T$ on $A_k$ and as the $\tau_i$'s are strictly ordered). On $A_p$, $\gamma_{\tau_i}^i\geq -1,$ for each $i\in\{1,...,p\}.$  Finally, we note that $A_0\cap \{t<\tau_i\leq s\}=\varnothing$, for each $i\in\{1,...,p\}$ (as $\tau_1>T$ on $A_0$).    We  conclude that, for $t\in[0,T]$,   for  $s\in[t,T]$,  $\prod_{i=1}^p(1+\gamma_{\tau_i}^i\mathbbm{1}_{\{t<\tau_i\le s\}})\geq 0.$  
  \end{remark}
\begin{remark}\label{remark:CompThm}
	Assume that  $\gamma_{\tau_i}^i\ge-1$ a.s. for each $i\in\{1,\ldots,p\}$. This implies that  the condition:    for each $k\in\{1,...,p\}$,   on  $A_k$, $1+\gamma_{\tau_i}^i\ge0$  a.s.  for all  $i\in\{1,...,k\}$,  from Eq. \eqref{eq:CompThm1} is satisfied.  If, moreover, $\gamma_{\tau_i}^i>-1$ a.s. for each $i\in\{1,\ldots,p\}$, then  the condition 
	from the strict comparison (ii) is also  satisfied. 
\end{remark}

\begin{proof}
	Setting $\bar{Y}_s \coloneqq Y_s - \hat{Y}_s$, $\bar{Z}_s\coloneqq Z_s - \hat{Z}_s$ and $\bar{K}_s^i\coloneqq K_s^i - \hat{K}_s^i$, for each $i\in\{1,\ldots,p\}$, we have,
    \begin{equation*}
        -d\bar{Y}_s = h_sds + d\bar{D}_s - \bar{Z}_sdW_s - \sum_{i=1}^p\bar{K}_s^idM_s^i,\quad\bar{Y}_T = \eta - \hat{\eta},
    \end{equation*}
    where,
\begin{equation}\label{eq:hequation}
h_s\coloneqq g(s,Y_{s-},Z_s,K_s^1,\ldots,K_s^p)-\hat{g}(s,\hat{Y}_{s-},\hat{Z}_s,\hat{K}_s^1,\ldots,\hat{K}_s^p).
\end{equation}
    We set,
    \begin{align}\label{eq:deltaequation}
	\begin{aligned}
	        \delta_s&\coloneqq\frac{g(s,Y_{s-},Z_s,K_s^1,\ldots,K_s^p) - g(s,\hat{Y}_{s-},Z_s,K_s^1,\ldots,K_s^p)}{\bar{Y}_{s-}}\mathbbm{1}_{\{\bar{Y}_{s-}\ne0\}},\\
	        \beta_s&\coloneqq\frac{g(s,\hat{Y}_{s-},Z_s,K_s^1,\ldots,K_s^p)-g(s,\hat{Y}_{s-},\hat{Z}_s,K_s^1,\ldots, K_s^p)}{\bar{Z}_s}\mathbbm{1}_{\{\bar{Z}_{s}\ne0\}}.
	\end{aligned}
    \end{align}
By definition both $\delta$ and $\beta$ are predictable. Furthermore, since $g$ is a $\lambda^{(p)}$-admissible driver,  it satisfies,
\begin{multline*}
    |g(\omega,t,y,z,k^1,\ldots,k^p) - g(\omega, t, \hat{y}, \hat{z}, \hat{k}^1,\ldots,\hat{k}^p)|
    \\\le C\left(|y-\hat{y}| + |z-\hat{z}| + \sum_{i=1}^p\sqrt{\lambda_t^i(\omega)}|k^i-\hat{k}^i|\right);
\end{multline*}
hence, the processes $\delta$ and $\beta$ are bounded. With the above notation,
\begin{equation*}
    h_s = \delta_s\bar{Y}_{s-} + \beta_s\bar{Z}_s + g(s,\hat{Y}_{s-},\hat{Z}_s,K_s^1,\ldots,K_s^p) - g(s,\hat{Y}_{s-},\hat{Z}_s,\hat{K}_s^1,\ldots,\hat{K}_s^p) + \varphi_s,
\end{equation*}
where,
\begin{equation}\label{eq:varphi1}
	\varphi_s \coloneqq g(s,\hat{Y}_{s-},\hat{Z}_s,\hat{K}_s^1,\ldots,\hat{K}_s^p) - \hat{g}(s,\hat{Y}_{s-},\hat{Z}_s,\hat{K}_s^1,\ldots,\hat{K}_s^p).
\end{equation}

Due to  assumption \eqref{eq:CompThm2} and due to the fact that $Y_t 
= Y_{t-}$ $dP\otimes dt$-a.e.\footnote{This is true as $Y_t(\omega)$ has at most a countable number of jumps.}, we have
\begin{equation}\label{eq:proofCompThm1}
    h_s\ge\delta_s\bar{Y}_s + \beta_s\bar{Z}_s + \sum_{i=1}^p\gamma_s^i\bar{K}_s^i\lambda_s^i + \varphi_s,\quad dP\otimes ds\text{-a.e.}
\end{equation}

We fix $t\in[0,T]$. Let $\Gamma_{t,.}$ be the adjoint process, defined by,
\begin{equation*}
    d\Gamma_{t,s} = \Gamma_{t,s-}\left(\delta_sds+\beta_sdW_s+\sum_{i=1}^p\gamma_s^idM_s^i\right),\quad\Gamma_{t,t}=1.
\end{equation*}
As $\delta$, $\beta$ and $\gamma^i\sqrt{\lambda^i}$ for each $i\in\{1,\ldots,p\}$ are bounded, we have that $\Gamma_{t,\cdot}\in\mathcal{S}^2$ by Remark \ref{remark:thmRR_Dpred}. Due to the condition from Eq. \eqref{eq:CompThm1}, to Equation \eqref{eq:remarkAdjoint1} and to Remark \ref{remark:CompThm997}, we have $\Gamma_{t,s}\ge0$  for all $t\le s\le T$ a.s.  

\textit{Step 1. } We  consider first the case where  $D$ and $\hat{D}$ are (predictable)   processes in $\mathcal{A}_{p,T}^2$, and prove the comparison and the strict comparison results in this case.\\ 
By applying It\^o's product rule to $(\bar{Y}_s\Gamma_{t,s})$, we get, $-d(\bar{Y}_s\Gamma_{t,s})=-\bar{Y}_{s-}d\Gamma_{t,s} -\Gamma_{t,s-}d\bar{Y}_s - d[\bar{Y},\Gamma_{t,\cdot}]_s$. We have, 
\begin{align}\label{eq:proofCompThm2}
    \begin{aligned}
        d[\bar{Y},\Gamma_{t,\cdot}]_s&= d\left[\int_t^\cdot\bar{Z}_rdW_r,\int_t^\cdot\Gamma_{t,r-}\beta_rdW_r\right]_s + d\left(\sum_{i=1}^p\sum_{j=1}^p\left[\int_t^\cdot\bar{K}_r^idM_r^i,\int_t^\cdot\Gamma_{t,r-}\gamma_r^jdM_r^j\right]_s\right)\\
	&= \Gamma_{t,s-}\beta_s\bar{Z}_sds + \left(\int_t^s\sum_{i=1}^p\sum_{j=1}^p\bar{K}_r^i\Gamma_{t,r-}\gamma_r^jd[M^i,M^j]_r\right)\\
        &=\Gamma_{t,s-}\beta_s\bar{Z}_sds + d\left(\int_t^s\Gamma_{t,r-}\sum_{i=1}^p\bar{K}_r^i\gamma_r^idN_r^i\right)\\
        &= \Gamma_{t,s-}\beta_s\bar{Z}_sds + \Gamma_{t,s-}\sum_{i=1}^p\bar{K}_s^i\gamma_s^i\lambda_s^ids + \Gamma_{t,s-}\sum_{i=1}^p\bar{K}_s^i\gamma_s^idM_s^i,
    \end{aligned}
\end{align}
where we have used that $d[M^i,M^j]_s = 0$, for $i\ne j$, since $P(\tau_i = \tau_j)=0$, for $i\ne j$, and for the case $i=j$,  $d[M^i]_s =  dN_s^i$. This yields,
\begin{align}\label{eq:proofCompThm3}
    \begin{aligned}
        -d(\bar{Y}_s\Gamma_{t,s}) &= -\Gamma_{t,s-}\left(\bar{Y}_{s-}\delta_sds + \bar{Y}_{s-}\beta_sdW_s + \bar{Y}_{s-}\sum_{i=1}^p\gamma_s^idM_s^i\right) \\&\quad+ \Gamma_{t,s-}\left(h_sds + d\bar{D}_s - \bar{Z}_sdW_s - \sum_{i=1}^p\bar{K}_s^idM_s^i\right)\\&\quad-\Gamma_{t,s-}\beta_s\bar{Z}_sds -\Gamma_{t,s-}\sum_{i=1}^p\bar{K}_s^i\gamma_s^i\lambda_s^ids -\Gamma_{t,s-}\sum_{i=1}^p\bar{K}_s^i\gamma_s^idM_s^i\\
        &=\Gamma_{t,s-}\left(h_s-\delta_s\bar{Y}_{s-} - \beta_s\bar{Z}_s-\sum_{i=1}^p\bar{K}_s^i\gamma_s^i\lambda_s^i\right)ds + \Gamma_{t,s-}d\bar{D}_s\\&\quad-\left(\Gamma_{t,s-}(\bar{Y}_{s-}\beta_s+\bar{Z}_s)dW_s + \Gamma_{t,s-}\left(\sum_{i=1}^p(\bar{K}_s^i(1+\gamma_s^i)+\bar{Y}_{s-}\gamma_s^i)dM_s^i\right)\right)\\
        &=\Gamma_{t,s-}\left(h_s-\delta_s\bar{Y}_{s-} - \beta_s\bar{Z}_s-\sum_{i=1}^p\bar{K}_s^i\gamma_s^i\lambda_s^i\right)ds + \Gamma_{t,s-}d\bar{D}_s-dm_s,
    \end{aligned}
\end{align}

where the process $(m_s)_{s\in[0,T]}$ is defined by 
\begin{equation}\label{eq:proofCompThm_m}
	dm_s = \Gamma_{t,s-}(\bar{Y}_s\beta_s+\bar{Z}_s)dW_s + \Gamma_{t,s-}(\sum_{i=1}^p(\bar{K}_s^i(1+\gamma_s^i)+\bar{Y}_{s-}\gamma_s^i)dM_s^i).
\end{equation}
The process $(m_s)$ is a martingale, since $\Gamma_{t,\cdot}\in\mathcal{S}^2$, $\bar{Y}\in\mathcal{S}^2$, $\bar{Z}\in\mathcal{H}^2$, $\bar{K}^i\in\mathcal{H}_{\lambda^i}^2$ for each $i\in\{1,\ldots,p\}$, and since $\beta$ and $\gamma^i\sqrt{\lambda^i}$, for each $i\in\{1,\ldots,p\}$, are bounded.
 Using Equations \eqref{eq:proofCompThm1} and \eqref{eq:proofCompThm3}, and the fact that $\Gamma$ is non-negative, we get,
\begin{equation}\label{eq:proofCompThm4}
    -d(\bar{Y}_s\Gamma_{t,s}) \ge \Gamma_{t,s-}\varphi_sds + \Gamma_{t,s-}d\bar{D}_s - dm_s.
\end{equation}

Integrating \eqref{eq:proofCompThm4} between $t$ and $T$, and taking the conditional expectation, results in,
\begin{equation}\label{eq:proofCompThm5}
    \bar{Y}_t \ge \mathbb{E}\left[\Gamma_{t,T}(\eta-\hat{\eta}) + \int_t^T\Gamma_{t,s-}\varphi_sds + \int_t^T\Gamma_{t,s-}d\bar{D}_s\middle|\mathcal{G}_t\right],\quad0\le t\le T\text{ a.s.}
\end{equation}

From \eqref{eq:CompThm3} we have that $\varphi_t\ge0$ $dP\otimes dt$-a.e. Furthermore, since $\eta-\hat{\eta}\ge0$, since $\bar{D}$ is non-decreasing and the adjoint process $(\Gamma_{t,s})_{s\in [t,T]}$ is non-negative, we have that all terms inside the conditional expectation are non-negative; hence,  $\bar{Y}_t = Y_t - \hat{Y}_t\ge0$ a.s. Since this holds for all $t\in[0,T]$ and since both $Y$ and $\hat Y$ are rcll,  the \emph{comparison} result (i) for $D,\hat{D}\in\mathcal{A}_{p,T}^2$ is proven. \\
Let us prove (ii)  for $D,\hat{D}\in\mathcal{A}_{p,T}^2$. 
Assume  that there exists  $t_0\in[0,T]$ such that $Y_{t_0}=\hat{Y}_{t_0}$ a.s. and such  that $\prod_{i=1}^p(1+\gamma_{\tau_i}^i\mathbbm{1}_{t_0<\tau_i\le s})>0$ for all $s\in[t_0,T]$ a.s. (cf. Eq. \eqref{eq:remarkAdjoint1}).  This implies that  $\Gamma_{t_0,s}>0$ for all $s\in[t_0,T]$ a.s. On the other hand, Equation  \eqref{eq:proofCompThm5} (for $t=t_0$) leads to:
\begin{equation}\label{eq:proofCompThm6}
    0 =\bar{Y}_{t_0}\ge \mathbb{E}\left[\Gamma_{t_0,T}(\eta-\hat{\eta}) + \int_{t_0}^T\Gamma_{t_0,s-}\varphi_sds+\int_{t_0}^T\Gamma_{t_0,s-}d\bar{D}_s\middle|\mathcal{G}_{t_0}\right].
\end{equation}
This, together with the non-negativity of the terms inside the conditional expectation and the positivity of $(\Gamma_{t,s})$, implies that $\eta=\hat{\eta}$ a.s. and $\varphi_t=0$, for all $t\in[t_0,T]$ $dP\otimes dt$-a.e. Let us now set $\Tilde{D}_t\coloneqq\int_{t_0}^t\Gamma_{t_0,s-}d\bar{D}_s$ for each $t\in[t_0,T]$. We have $\Tilde{D}_T\ge0$ a.s. as $\Gamma_{t_0,s}>0$ and as $\bar{D}$ is non-decreasing (by assumption). Using this and \eqref{eq:proofCompThm6}, we get $0 = \mathbb{E}[\Tilde{D}_T|\mathcal{G}_{t_0}]$ a.s., hence $\Tilde{D}_T = 0$ a.s. Since $\Gamma_{t_0,s}>0$ for all $T \ge s\ge t_0$ a.s., we have $\bar{D}_T - \bar{D}_{t_0} = \int_{t_0}^T(\Gamma_{t_0,s-})^{-1}d\Tilde{D}_s$, which implies that $\bar{D}_{t_0} = \bar{D}_T$ a.s. Hence, the \emph{strict comparison} result (ii) is proven fro $D,\hat{D}\in\mathcal{A}_{p,T}^2$.\\
\textit{Step 2.} We now consider the case where the optional processes $D,\hat{D}$ are not necessarily predictable; more precisely, $D,\hat{D}\in\mathcal{A}_T^2$.  By Proposition \ref{prop:OptionalpJumps} (applied to $D$ and to $\hat{D}$) there exist $D'\in\mathcal{A}_{p,T}^2$, $\hat{D}'\in\mathcal{A}_{p,T}^2$ and $\theta^i,\hat{\theta}^i\in\mathcal{H}_{\lambda^i,T}^2$ (where $i\in\{1,\ldots,p\}$), such that $D$ and $\hat{D}$ can be uniquely written as follows,
\begin{align}\label{eq:proofCompThm7}
    \begin{aligned}
        D_t &= D_t' + \int_0^t\sum_{i=1}^p\theta_s^idN_s^i,\quad\text{a.s. and }
        \hat{D}_t = \hat{D}_t' + \int_0^t\sum_{i=1}^p\hat{\theta}_s^idN_s^i,\quad\text{a.s.}
    \end{aligned}
\end{align}
Since $\bar{D}\coloneqq D - \hat{D}$ is non-decreasing, and since $\tau_1<\cdots<\tau_p$, by Lemma \ref{lemma:nonDecreasingDecompositions} we get that $\bar{D}' \coloneqq D' - \hat{D}'$ is non-decreasing and for each $i\in\{1,\ldots,p\}$ $\theta_{\tau_i}^i\ge\hat{\theta}_{\tau_i}^i$ a.s. on $\{\tau_i\le T\}$. By applying It\^o's product rule to $(\bar{Y}_s\Gamma_{t,s})$, we get $-d(\bar{Y}_s\Gamma_{t,s}) = -\bar{Y}_{s-}d\Gamma_{t,s}-\Gamma_{t,s-}d\bar{Y}_s - d[\bar{Y},\Gamma_{t,\cdot}]_s$. Here, $d[\bar{Y},\Gamma_{t,\cdot}]_s$ is equal to the right-hand side of \eqref{eq:proofCompThm2} plus the additional term $d[\bar{D}, \Gamma_{t,\cdot}]_s$. The term $d[\bar{D}, \Gamma_{t,\cdot}]_s$ can be expressed as
\begin{align}\label{eq:proofCompThm8}
    \begin{aligned}
        d[\bar{D},\Gamma_{t,\cdot}]_s &= d[\bar{D}',\Gamma_{t,\cdot}]_s + d\left(\sum_{i=1}^p\sum_{j=1}^p\left[\int_t^\cdot(\theta_r^i-\hat{\theta}_r^i)dN_r^i,\int_t^\cdot\Gamma_{t,r-}\gamma_r^jdN_r^j\right]_s\right)\\
	&=d\left(\int_t^s\Gamma_{t,r-}\sum_{i=1}^p\sum_{j=1}^p(\theta_r^i-\hat{\theta}_r^i)\gamma_r^jd[N^i,N^j]_r\right)\\
        &= d\left(\int_t^s\Gamma_{t,r-}\sum_{i=1}^p(\theta_r^i-\hat{\theta}_r^i)\gamma_r^idN_r^i\right)= \Gamma_{t,s-}\sum_{i=1}^p(\theta_s^i-\hat{\theta}_s^i)\gamma_s^idN_s^i,
    \end{aligned}
\end{align}
where we have used that $d[N^i,N^j]_s = 0$, for $i\ne j$, since $P(\tau_i = \tau_j)=0$, , for $i\ne j$, and $d[N^i]_s = dN_s^i$ (when $i=j$). Hence, we have,
\begin{multline}\label{eq:proofCompThm9}
    -d(\bar{Y}_s\Gamma_{t,s}) = \Gamma_{t,s-}\left(h_s - \delta_s\bar{Y}_s - \beta_s\bar{Z}_s-\sum_{i=1}^p\bar{K}_s^i\gamma_s^i\lambda_s^i\right)ds +\Gamma_{t,s-}d\bar{D}_s-dm_s\\
    +\Gamma_{t,s-}\sum_{i=1}^p(\theta_s^i-\hat{\theta}_s^i)\gamma_s^idN_s^i,
\end{multline}
where $(m_t)$ is the same martingale as the one from Eq. \eqref{eq:proofCompThm_m}, $(h_t)$ is the  process from Eq.  \eqref{eq:hequation}, and $(\delta_t)$ and $(\beta_t)$ are the  processes from Eq. \eqref{eq:deltaequation}. Using inequality \eqref{eq:proofCompThm1} and the fact that $d\bar{D}_t = d\bar{D}_t'+\sum_{i=1}^p(\theta_t^i-\hat{\theta}_t^i)dN_t^i$, integrating between $t$ and $T$ (where $t\in[0,T]$), and taking the conditional expectation, we get,
\begin{equation}\label{eq:proofCompThm10}
    \bar{Y}_t \ge \mathbb{E}\left[\Gamma_{t,T}(\eta-\hat{\eta}) + \int_t^T\Gamma_{t,s-}\left(d\bar{D}_s'+\sum_{i=1}^p(\theta_s^i - \hat{\theta}_s^i)(1+\gamma_s^i)dN_s^i+\varphi_sds\right)\middle|\mathcal{G}_t\right]\text{,  a.s.}
\end{equation}
where the process $(\varphi_t)$ is the same as the one from Eq.  \eqref{eq:varphi1}.  Let us note that, for $i\in\{1,\ldots,p\}$, 
\begin{align}\label{new_eq_9087}
	\int_t^T\Gamma_{t,s-}(\theta_s^i-\hat{\theta}_s^i)(1+\gamma_s^i)dN_s^i = \Gamma_{t,\tau_i-}(\theta_{\tau_i}^i-\hat{\theta}_{\tau_i}^i)(1+\gamma_{\tau_i}^i)\mathbbm{1}_{\{T\ge\tau_i\ge t\}}.
\end{align}

Let $i$ be fixed. We now check that this term is non-negative on each $A_k$, where $k\in\{1,...,p\}$, and where the $A_k$'s are the ones appearing in  assumption \eqref{eq:CompThm1}.  	As noted above,  $\theta_{\tau_i}^i\ge\hat{\theta}_{\tau_i}^i$ a.s. on $\{\tau_i\le T\}$. Furthermore, the adjoint process  $(\Gamma_{t,s})_{s\in[0,T]}$ is non-negative.  Moreover, by definition of $A_k$, we have   $\mathbbm{1}_{\{T\ge\tau_i\ge t\}}\mathbbm{1}_{A_k}=0,$ for $ 0\leq k\leq i-1$, and,   by assumption  \eqref{eq:CompThm1}, we have $(1+\gamma^i_{\tau_i})\mathbbm{1}_{\{T\ge\tau_i\ge t\}}\mathbbm{1}_{A_k}\geq 0$ (for each $k\in\{i,...,p\}$). Hence, the term in Eq. \eqref{new_eq_9087} is non-negative. 
From the assumption \eqref{eq:CompThm3} and from  \eqref{eq:varphi1} we have that $\varphi_t\ge0$ $dP\otimes dt$-a.e. Furthermore, since $\eta-\hat{\eta}\ge0$, since $(\bar{D}_t')$ is non-decreasing and the adjoint process $(\Gamma_{t,s})_{s\in[0,T]}$ is non-negative, we have that all the terms inside the conditional expectation are non-negative; hence, $\bar{Y}_t = Y_t-\hat{Y}_t\ge0$ a.s. Since this holds for all $t\in[0,T]$, and since $(Y_t)$ and $(\hat Y_t)$ are rcll,  the \emph{comparison result} (i) for $D,\hat{D}\in\mathcal{A}_T^2$ is proven. 

Assume now that there exists $t_0\in[0,T]$ such that $Y_{t_0} = \hat{Y}_{t_0}$ a.s. and that for each $i\in\{1,\ldots,p\}$ $\gamma^i_{\tau_i}>-1$ a.s. Thus,  $\Gamma_{t,s}>0$ for all $ s\in[t,T]$ a.s. For $t=t_0$, Eq. \eqref{eq:proofCompThm10} leads to,
\begin{equation}\label{eq:proofCompThm100}
    0 = \bar{Y}_{t_0} \ge \mathbb{E}\left[\Gamma_{t_0,T}(\eta-\hat{\eta}) + \int_{t_0}^T\Gamma_{t,s-}\left(d\bar{D}_s'+\sum_{i=1}^p(\theta_s^i - \hat{\theta}_s^i)(1+\gamma_s^i)dN_s^i+\varphi_sds\right)\middle|\mathcal{G}_{t_0}\right].
\end{equation}
This, together with the non-negativity of the terms inside the conditional expectation and the positivity of $(\Gamma_{t,s})$, implies that $\eta=\hat{\eta}$ a.s., $\varphi_t=0$ for all $t\in[t_0,T]$ $dP\otimes dt$-a.e., and, for each $i\in\{1,\ldots,p\}$, $\theta_{\tau_i}^i=\hat{\theta}_{\tau_i}^i$ on $\{t_0<\tau_i\le T\}$ a.s. We  set $\Tilde{D}_t'\coloneqq\int_{t_0}^t\Gamma_{t_0,s-}d\bar{D}_s'$ for each $t\in[t_0,T]$. We have $\Tilde{D}_T'\ge0$ a.s. as $\Gamma_{t_0,s}>0$ and as $\bar{D}'$ is non-decreasing. Using this and \eqref{eq:proofCompThm100}, we get $\mathbb{E}[\Tilde{D}_T'|\mathcal{G}_{t_0}]=0$ a.s.; hence, $\Tilde{D}_T'=0$ a.s. Since $\Gamma_{t_0,s}>0$ for all $T\ge s\ge t_0$ a.s. we have $\bar{D}_T' - \bar{D}_{t_0}' = \int_{t_0}^T(\Gamma_{t_0,s-})^{-1}d\Tilde{D}_s'$, which implies that $\bar{D}_{t_0}'=\bar{D}_T'$ a.s. The \emph{strict comparison} result (ii) for $D,\hat{D}\in\mathcal{A}_T^2$ is thus proven.

\end{proof}

We now provide an example where the conclusion of the comparison (and strict comparison) result from Theorem \ref{thm:CompThm} does not necessarily hold, if the assumptions of the theorem are not satisfied.

\begin{example}\label{example:exampleCounterExample2}
Assume  that for each $i\in\{1,\ldots,p\}$ the process $\lambda^i$ is bounded. Let $g$ be a $\lambda^{(p)}$-linear driver (cf. Definition \ref{eq:defLambdaLinear}) of the form,
    \begin{equation}\label{eq:exampleComp1}
        g(\omega,t,y,z,k^1,\ldots,k^p) = \alpha_t(\omega)y + \beta_t(\omega)z + \sum_{i=1}^p\gamma^ik^i\lambda_t^i(\omega),
    \end{equation}
    where each $\gamma^i$ is a real constant. The dynamics of the  adjoint process $\Gamma_{0,\cdot}$ are  (cf. \eqref{eq:ThmDpred2}),
    \begin{equation}\label{eq:exampleComp2}
        d\Gamma_{0,s} = \Gamma_{0,s-}\left(\alpha_sds + \beta_sdW_s+\sum_{i=1}^p\gamma^idM_s^i\right),\quad\Gamma_{0,0}=1.
    \end{equation}
    By Remark \ref{remark:thmRR_Dpred},  $\Gamma_{0,T}$ satisfies,
    \begin{equation}\label{eq:exampleComp3}
        \Gamma_{0,T} = H_T\exp\left(-\int_0^T\sum_{i=1}^p\gamma^i\lambda_s^ids\right)\prod_{i=1}^p(1+\gamma^i\mathbbm{1}_{\{0<\tau_i\le T\}}),
    \end{equation}
    where $H$ has the dynamics $dH_t = H_t(\alpha_tdt + \beta_tdW_t)$ with $H_0=1$.\\
We specify $p=2$. We define the terminal condition as,
    \begin{equation}\label{eq:exampleTerminal2}
        \eta^{(1)} \coloneqq \mathbbm{1}_{\{\tau_1\leq T,\tau_2>T\}}.
    \end{equation}
    
  Let $(Y^{(1)})$ be the first component of the solution of the BSDE associated with  driver $g$, terminal time $T$ and terminal condition $\eta^{(1)}$. By the explicit formula  from Theorem \ref{theorem:RRGLB}, we get,

  $$Y_0^{(1)} = (1+\gamma^1)\mathbb{E}\left[H_Te^{-\sum_{j=1}^2\gamma^j\int_0^T\lambda_s^jds}\mathbbm{1}_{\{\tau_1\leq T,\tau_2>T\}}\right].$$
  
  Under the assumption $P(\tau_1\leq T,\tau_2>T)>0$, if $1+\gamma^1<0$, then  $Y_0^{(1)}<0$. However,  $\eta^{(1)}\ge0$ a.s. Hence, the comparison result does not hold.\\
We now define the terminal condition as,
    \begin{equation}\label{eq:exampleTerminal2}
        \eta^{(2)} \coloneqq \mathbbm{1}_{\{\tau_2\leq T\}}.
    \end{equation}
    
    Let $(Y^{(2)})$ be the solution of the BSDE associated with  driver $g$, terminal time $T$ and terminal condition $\eta^{(2)}$. 

    By the explicit formula  from Theorem \ref{theorem:RRGLB},
    \begin{equation}\label{eq:exampleRepResult2}
        Y_0 = (1+\gamma^1)(1+\gamma^2)\mathbb{E}\left[H_Te^{-\sum_{j=1}^2\gamma^j\int_0^T\lambda_s^jds}\mathbbm{1}_{\{T\ge\tau_2\}}\right],
    \end{equation}
where we have used that $\tau_1<\tau_2$.
    Under the assumption $P(\tau_2\leq T)>0$, if $(1+\gamma^1)(1+\gamma^2)<0$, then \eqref{eq:exampleRepResult2} leads to $Y_0<0$. However,  $\eta^{(2)}\ge0$ a.s. Hence, the comparison result does not hold.\\
    The reader can generalize this reasoning to the case where $p>2$, by using terminal conditions based on the sets  from Eq.\eqref{sets}.

    If either (or both)  $\gamma^1$ and $\gamma^2$ are equal to $-1$, then \eqref{eq:exampleRepResult2} gives $Y_0=0$. Under the assumption that $P(\tau_2\leq T)>0$, we have $P(\eta^{(2)}>0)>0$, while $Y_0=0$, hence the strict comparison result does not hold.
\end{example}

\section{Pricing of European Options in Markets with Multiple Defaults}\label{sec:Pricing}

\subsection{ Pricing in a Linear Financial Market with two Defaultable Risky Assets}\label{subsec:LinearPricing1}

We consider a market model where the primary assets are a risk-free savings account with price process $B$, a default-free asset with price process $S^0$, and two assets with price processes  $S^1$ and $S^2$,  which are subject to default or to some other credit event at times $\tau^1$ and $\tau^2$, respectively.

More precisely, we place ourselves in the probabilistic setting of Section \ref{sec:PPS}, where we set $p=2$. The times $\tau_1$ and $\tau_2$ model here the times of default (or the times of some other extraneous credit events, provided they are ordered) of the risky assets $S^1$ and $S^2$, respectively. As before $M_t^1 = N_t^1 - \int_0^t\lambda_s^1ds$ and $M_t^2 = N_t^2 - \int_0^t\lambda_s^2ds$. \\
We consider the following dynamics for the asset prices,
\begin{align}\label{eq:3.1}
    \begin{aligned}
    dB_t &= B_tr_tdt,\quad B_0=1;\\
    dS_t^0 &= S_t^0[\mu_t^0dt + \sigma_t^0dW_t],\quad S_0^0>0;\\
    dS_t^1 &= S_{t-}^1[\mu_t^1dt + \sigma_t^1dW_t + \beta_t^1dM_t^1],\quad S_0^1>0;\\
    dS_t^2 &= S_{t-}^2[\mu_t^2dt + \sigma_t^2dW_t + \beta_t^2dM_t^2],\quad S_0^2>0.
    \end{aligned}
\end{align}The process $r$, and the processes $\mu^i$ and $\sigma^i$ (for $i\in\{0,1,2\}$) are predictable, such that, $\sigma^i>0$ for $i\in\{0,1,2\}$, and $r,\mu^i,\sigma^i$ and $(\sigma^i)^{-1}$ are bounded (for $i\in\{0,1,2\}$). We note that there is no requirement for the intensity process $\lambda^i$ to be bounded. We assume that $\beta_t^1\neq 0$,  $\beta_t^2\neq 0$, and $\mu_t^0\neq r_t$. We assume moreover that $\beta_t^i\ge-1$ for $i\in\{1,2\}$.

\begin{remark}\label{remark:PriceProcess}

    By Remark \ref{remark:2.12}, the  explicit formula for $S^i$, where $i\in\{1,2\}$, is: for $t\in[0,T]$,
    \begin{equation}\label{eq:remarkPriceProcess}
        S_t^i = \exp\left(\int_0^t\left(\mu_s^i- \frac{1}{2}(\sigma_s^i)^2-\beta_s^i\lambda_s^i\right)ds+\int_0^t\sigma_s^idW_s \right)(1+\beta_{\tau_i}^i\mathbbm{1}_{\{t\ge\tau_i\}}),\text{  a.s.}
    \end{equation}

    If $\beta^i_{\tau_i}=-1$,  then  the $i$-th asset's price jumps to zero at $\tau_i$.
    
\end{remark} 

We consider an investor who at time $0$ invests an amount $x\in\mathbb{R}$ in the market. For $i\in\{0,1,2\}$ we use $\phi_t^i$ to denote the amount of money in asset $S_t^i$ at time $t\in[0,T]$.\\
 If $\beta_{\tau_i}^i=-1$ a.s., then, on the set $\{T\ge t\ge\tau_i\}$, $S_t^i=0$ and  the investor will no longer invest in this asset; thus, $\phi_t^i=0$ on the set $\{T\ge t>\tau_i\}$. \\
In the case where $p=1$ and $\beta^1=-1$, this model has been considered in  \cite{bielecki2} and \cite{dumitrescu2018bsdes}}. 

The process $\phi = (\phi^0,\phi^1,\phi^2)\in(\mathcal{H}_T^2,\mathcal{H}_{\lambda^1,T}^2,\mathcal{H}_{\lambda^2,T})$ is called the risky-asset strategy (or the strategy). Let now $(C_t)_{t\in[0,T]}$ be a finite variational optional process in $\mathcal{A}_T^2$ which represents the cumulative  cash `withdrawals' from the portfolio. The value of the portfolio at time $t$ associated with the initial value $x$, trading strategy $\phi$ and `withdrawal' process $C$ is denoted by $V_t^{x,\phi,C}$. If $\phi$ is the strategy in the risky assets $S^0, S^1, S^2$, then the amount invested in the risk-free bank account is: $V_t^{x,\phi,C}-\sum_{i=0}^2\phi^i_t$.


The self-financing condition for the wealth process $V^{x,\phi,C}=V$ leads to the following dynamics:

\begin{align}\label{eq:3_SFC}
    \begin{aligned}
        dV_t &= \left(\frac{V_t-\sum_{i=0}^2\phi_t^i}{B_t^0}\right)dB_t^0 + \sum_{i=0}^2\frac{\phi_t^idS_t^i}{S_{t-}^i} - dC_t\\
        &=(V_t-\phi_t^0-\phi_t^1-\phi_t^2)r_tdt + \phi_t^0(\mu_t^0dt+\sigma_t^0dW_t) \\&\quad\quad+ \phi_t^1(\mu_t^1dt+\sigma_t^1dW_t+\beta_t^1dM_t) + \phi_t^2(\mu_t^2dt+\sigma_t^2dW_t+\beta_t^2dM_t^2) - dC_t\\
        &=(V_tr_t+\phi_t^0(\mu_t^0-r_t) + \phi_t^1(\mu_t^1-r_t)+\phi_t^2(\mu_t^2-r_t))dt\\
        &\quad\quad+(\phi_t^0\sigma_t^0+\phi_t^1\sigma_t^1+\phi_t^2\sigma_t^2)dW_t + \phi_t^1\beta_t^1dM_t^1+\phi_t^2\beta_t^2dM_t^2 - dC_t\\
        &=(V_tr_t + \phi_t'\sigma_t\Theta_t^0+\phi_t^1\Theta_t^1\beta_t^1\lambda_t^1 + \phi_t^2\Theta_t^2\beta_t^2\lambda_t^2)dt - dC_t + \phi_t'\sigma_tdW_t \\
	 &\quad\quad+ \phi_t^1\beta_t^1dM_t^1 + \phi_t^2\beta_t^2dM_t^2,
    \end{aligned}
\end{align}

where $\phi_t'\sigma_t = \sum_{i=0}^2\phi_t^i\sigma_t^i$, and
\begin{equation*}
    \Theta_t^0 = \frac{\mu_t^0-r_t}{\sigma_t^0},\quad\Theta_t^1=\frac{\mu_t^1-r_t-\sigma_t^1\Theta_t^0}{\beta_t^1\lambda_t^1}\mathbbm{1}_{\{\beta_t^1\lambda_t^1\ne0\}},\quad\Theta_t^2=\frac{\mu_t^2-r_t-\sigma_t^2\Theta_t^0}{\beta_t^2\lambda_t^2}\mathbbm{1}_{\{\beta_t^2\lambda_t^2\ne0\}}.
\end{equation*}

\begin{assumption}\label{assumption:LinearCompleteMarket}
    We assume that the processes $\Theta^0$, $\Theta^1\sqrt{\lambda^1}$ and $\Theta^2\sqrt{\lambda^2}$ are bounded.
\end{assumption}

Let $T>0$. Let $\eta\in L^2(\mathcal{G}_T)$ and let $D$ be a finite variational optional process in $\mathcal{A}_T^2$. We consider a European option with terminal time $T$ which generates a terminal payoff $\eta$ and intermediate cashflows, commonly referred to as `dividends' (which need not be strictly positive, c.f., e.g., \cite{crepey2015bilateral}). For each $t\in[0,T]$, $D_t$ represents the cumulative intermediate cashflows generated by the option between $[0,t]$. As the  `dividends' are not necessarily positive, the process $D$ is not necessarily non-decreasing.

We place ourselves from the point of view of an agent who wants to sell this option at time $t=0$. With the proceeds from the sale, they wish to construct a (self-financing)  portfolio which allows them to pay the buyer of the contract the amount $\eta$ at time $T$ as well as the intermediate `dividends' $D$.

Setting $Z_t\coloneqq\phi_t'\sigma_t$ and $K_t^i=\phi_t^i\beta_t^i$, for $i\in\{1,2\}$, and using \eqref{eq:3_SFC} (with $C=D$), we get that the process $(V, Z, K^1, K^2)$ satisfies the following dynamics,
\begin{equation}\label{eq:3_BSDE}
    -dV_t = -(r_tV_t+\Theta_t^0Z_t + \Theta_t^1K_t^1\lambda_t^1+\Theta_t^2K_t^2\lambda_t^2)dt +dD_t - Z_tdW_t - K_t^1dM_t^1 - K_t^2dM_t^2.
\end{equation}
For each $(\omega, t, y, z, k^1, k^2)$ we define,
\begin{equation}\label{eq:3_LinearDriver}
    g(\omega,t,y,z,k^1,k^2) \coloneqq -r_t(\omega)y - \Theta_t^0(\omega)z-\Theta_t^1(\omega)k^1\lambda_t^1(\omega)-\Theta_t^2(\omega)k^2\lambda_t^2(\omega).
\end{equation}
By our previous assumptions  and Assumption \ref{assumption:LinearCompleteMarket}, we have that $r,\Theta^0$ and $\Theta^i\sqrt{\lambda^i}$ (for $i\in\{1,2\}$) are predictable and bounded. It  follows that the driver $g$ is a $\lambda^{(p)}$-linear driver (cf. Eq. \eqref{eq:defLambdaLinear}). By Proposition \ref{prop:EU},  there exists a unique solution $(X,Z,K^1,K^2)\in\mathcal{S}^2\times\mathcal{H}^2\times\mathcal{H}_{\lambda^1}^2\times\mathcal{H}_{\lambda^2}^2$ of the BSDE associated with terminal time $T$, generalized $\lambda^{(p)}$-linear driver $g(t,y,z,k^1,k^2)dt+dD_t$ and terminal condition $\eta\in L^2(\mathcal{G}_T)$.

The solution of the BSDE $(X,Z,K^1,K^2)$ provides a replicating portfolio, where the seller chooses a risky-asset strategy $\phi$ according to the following change of variables: 
\begin{equation}\label{eq:3_LinearStrategy}
    \mathbf{\Phi}:\mathcal{H}_T^2\times\mathcal{H}_{\lambda^1,T}^2\times\mathcal{H}_{\lambda^2,T}^2\rightarrow\mathcal{H}_T^2\times\mathcal{H}_{\lambda^1,T}^2\times\mathcal{H}_{\lambda^2,T}^2; (Z,K^1, K^2)\mapsto\mathbf{\Phi}(Z,K^1,K^2)\coloneqq\phi,
\end{equation}
where $\phi\coloneqq(\phi^0,\phi^1,\phi^2)$, and the amount $\phi^i$ invested in the $i$-th asset (where $i\in\{0,1,2\}$) is,
\begin{equation}\label{eq:3_VarChange}
    \phi_t^0 = \frac{Z_t-\frac{K_t^1\sigma_t^1}{\beta_t^1}-\frac{K_t^2\sigma_t^2}{\beta_t^2}}{\sigma_t^0},\quad\phi_t^1=\frac{K_t^1}{\beta_t^1},\quad\phi_t^2=\frac{K_t^2}{\beta_t^2}.
\end{equation}

Here, the process $D$ corresponds to the cumulative cash `withdrawn' by the seller from their hedging (replicating) portfolio. The above portfolio is a replicating portfolio for the seller of the European contingent claim, since the seller is able to reinvest all proceeds from the sale into the market and pay $\eta$ at the option expiration date of $T$, as well as the intermediate `dividends' of the option.

The amount $X_0$ (the first component of the BSDE  at time zero) is the hedging price (or price by replication) of the option at time $t=0$ and we denote it with $X_{0,T}(\eta, D)$. For $t\in[0,T]$, the hedging price (or price by replication) $X_t$ is denoted by $X_{t,T}(\eta, D)$.

\subsubsection{The Case Where $D$ is Predictable}

Let the cumulative `dividend' process $D$ be a (\textbf{predictable}) process  in  $\mathcal{A}_{p}^2$. Since the driver from \eqref{eq:3_LinearDriver} is $\lambda^{(p)}$-\textbf{linear}, we have, by Theorem \ref{theorem:RRGLB}, an explicit formula for $X_{t,T}(\eta,D)$. For each $t\in[0,T]$ the adjoint process $(\Gamma_{t,s})_{s\in[t,T]}$ is the unique solution of the following SDE,
\begin{equation*}\label{eq:3_AdjointProcess}
    d\Gamma_{t,s} = \Gamma_{t,s-}(-r_sds - \Theta_s^0dW_s - \Theta_s^1dM_s^1 - \Theta_s^2dM_s^2),\quad\Gamma_{t,t} = 1.
\end{equation*}
By Remark \ref{theorem:RRGLB}, $(\Gamma_{t,s})_{s\in[t,T]}$ is written,
\begin{align*}\label{eq:3_AdjointProcess2}
    \begin{aligned}
        \Gamma_{t,s} &= \exp\left(-\int_t^sr_udu\right)\exp\left(-\int_t^s\Theta_u^0dW_s -\frac{1}{2}\int_t^s(\Theta_u^0)^2-\Theta_u^1\lambda_u^1-\Theta_u^2\lambda_u^2 du\right)\\&\quad\times(1-\Theta_{\tau_1}^1\mathbbm{1}_{\{t<\tau_1\le s\}})(1-\Theta_{\tau_2}^2\mathbbm{1}_{\{t<\tau_2\le s\}})\\
        &=e^{-\int_t^sr_udu}\zeta_{t,s},
    \end{aligned}
\end{align*}
where the process $(\zeta_{t,s})_{s\in[t,T]}$ satisfies the dynamics,
\begin{equation}\label{eq:3_Zeta1}
    d\zeta_{t,s} = \zeta_{t,s-}(-\Theta_s^0dW_s - \Theta_s^1dM_s^1-\Theta_s^2dM_s^2),\quad\zeta_{t,t}=1.
\end{equation}

Hence, by the explicit formula (cf. Theorem \ref{theorem:RRGLB} where $D$ is predictable),
\begin{equation}\label{eq:3_RepresentationP1}
    X_{t,T}(\eta,D) = X_t = \mathbb{E}\left[e^{-\int_t^Tr_sds}\zeta_{t,T}\eta + \int_t^Te^{-\int_t^sr_udu}\zeta_{t,s-}dD_s\middle|  \mathcal{G}_t\right].
\end{equation}

\subsubsection{The Case Where $D$ is Optional}

Let us now consider the case where $D$ is not necessarily predictable, but only optional; more precisely,  $D\in\mathcal{A}^2$. By Proposition \ref{prop:OptionalpJumps}, there exist a unique process $D'\in\mathcal{A}_p^2$ and unique processes $\psi^1\in\mathcal{H}_{\lambda^1}^2,\psi^2\in\mathcal{H}_{\lambda^2}^2$, such that for all $t\in[0,T]$,
\begin{equation}\label{eq:3_DividendProcess}
    D_t = D_t' + \int_0^t\sum_{i=1}^2\psi_s^idN_s^i,\quad\text{a.s.}
\end{equation}
From a financial point of view, the random variable $\psi_{\tau_i}^i$ represents the cash flow generated by the contingent claim (the option) at the $i$-th default time $\tau_i$ (see also \cite{bielecki2004hedging} Part I for contingent claims where the cash flow depends on the default times). By Theorem \ref{theorem:RRGLB2}, the hedging price at time $t$, $X_{t,T}(\eta,D)$, is equal to:
\begin{multline}\label{eq:3_RepresentationP2}
    X_{t,T}(\eta,D) = \mathbb{E}\left[e^{-\int_t^Tr_sds}\zeta_{t,T}\eta \right.+ \int_t^Te^{-\int_t^sr_udu}\zeta_{t,s-}dD_s'\\\left.+e^{-\int_t^{\tau_1}r_sds}\zeta_{t,\tau_1}\psi_{\tau_1}^1\mathbbm{1}_{\{t<\tau_1\le T\}}+e^{-\int_t^{\tau_2}r_sds}\zeta_{t,\tau_2}\psi_{\tau_2}^2\mathbbm{1}_{\{t<\tau_2\le T\}}\middle|\mathcal{G}_t\right].
\end{multline}

\subsubsection{Change of measure}
The change of measure technique is often used in linear market models in financial mathematics. In this sub-subsection, we will make the following assumption on the `Sharpe ratios' $\Theta^0$, $\Theta^1$ and $\Theta^2$.

\begin{assumption}\label{assumption:PriceOfRiskProduct}
	We assume that $\prod_{i=1}^2(1 - \Theta_{\tau_i}^i\mathbbm{1}_{\{t<\tau_i\le s\}})>0$ for all $0\le t\le s\le T$ a.s.
\end{assumption}
By Remark \ref{remark:2.12} and Assumption \ref{assumption:PriceOfRiskProduct} we have that $\zeta_{t,s}>0$ for all $s\in[t,T]$. By Assumption \ref{assumption:LinearCompleteMarket} we have that $\int_0^T((\Theta_s^0)^2 + (\Theta_s^1)^2\lambda_s^1 + (\Theta_s^2)^2\lambda_s^2)ds$ is bounded; hence, by Proposition \ref{prop:ESM-M} $(\zeta_{t,s})_{s\in[t,T]}$ is a square-integrable martingale. Let $Q$ be a new probability measure, defined by the Radon-Nikodym derivative with respect to $P$ on $\mathcal{G}_T$:
\begin{equation}\label{eq:3_RNd}
    \left.\frac{dQ}{dP}\right|_{\mathcal{G}_T} = \zeta_{0,T} = \mathcal{E}\left(-\int_0^\cdot\Theta_s^0dW_s - \int_0^\cdot\Theta_s^1dM_s^1 - \int_0^\cdot\Theta_s^2dM_s^2\right)_T.
\end{equation}

\paragraph{The Case Where $D$ is Predictable}
In the case where the `dividend' process $D$ is \textbf{predictable}, we  use Bayes formula to perform a change of measure in the conditional expectation of \eqref{eq:3_RepresentationP1} (cf., e.g.,  Proposition 1.7.1.5 from \cite{jeanblanc2009mathematical}) to get:
\begin{equation}\label{eq:3_RPQ}
    X_{t,T}(\eta,D) = X_t = \mathbb{E}^Q\left[e^{-\int_t^Tr_sds}\eta + \int_t^Te^{-\int_t^sr_udu}dD_s\middle|\mathcal{G}_t\right]\quad\text{a.s.}
\end{equation}
\paragraph{The Case Where $D$ is Optional}
If the `dividend' process $D$ is \textbf{optional}, the price of the European option at time $t$ under the probability measure $Q$ can be written:
\begin{multline}\label{eq:3_RepresentationP3}
    X_{t,T}(\eta,D) = \mathbb{E}^Q\left[e^{-\int_t^Tr_sds}\eta \right.+ \int_t^Te^{-\int_t^sr_udu}dD_s'\\\left.+e^{-\int_t^{\tau_1}r_sds}\psi_{\tau_1}^1\mathbbm{1}_{\{t<\tau_1\le T\}}+e^{-\int_t^{\tau_2}r_sds}\psi_{\tau_2}^2\mathbbm{1}_{\{t<\tau_2\le T\}}\middle|\mathcal{G}_t\right]
\end{multline}

We note that the pricing system for this market is linear.

\subsection{\texorpdfstring{Pricing in a Non-linear Complete Market with $p$ Defaultable Assets}{Pricing in a Non-linear Complete Market with p Defaultable Assets}}\label{subsec:non-linearPricingCompleteMarket}

We now assume that there are imperfections in the market, which are incorporated via the non-linearity of the driver in the dynamics of the wealth process. We consider the case where there are $p$ defaultable assets.

We introduce the following notation for the price processes of the primary assets: $B, S^0, S^1, \ldots, S^p$, where $B$ and $S^0$ represent the price process of a risk-free savings account and a default-free risky asset, respectively, while for each $i \in \{1, \ldots, p\}$, $S^i$ is the price process of the $i$-th defaultable asset (or $i$-th credit risk bearing asset).  The underlying probabilistic framework is the same as that introduced at the beginning of Section \ref{sec:PPS}, and we continue to work under the same assumptions. The price processes of $B$ and $S^0$ remain unchanged from \eqref{eq:3.1}. For each $i \in \{1, \ldots, p\}$, the price process of the $i$-th defaultable asset is given by
\begin{equation}\label{eq:3_SPriceProcess1}
    dS_t^i = S_{t-}^i\left[\mu_t^i \, dt + \sigma_t^i \, dW_t + \beta_t^i \, dM_t^i\right], \quad S_0^i > 0.
\end{equation}
For each $i \in \{1, \ldots, p\}$, the processes $\mu^i$ and $\sigma^i$ are assumed to be predictable, with $\sigma^i > 0$, and such that $\mu^i$, $\sigma^i$, and $(\sigma^i)^{-1}$ are bounded. The interest rate process $r$ is assumed to be predictable and bounded. We assume that, for each $i \in \{1, \ldots, p\}$, $\beta_t^i\neq 0$.  We recall that, for each $i \in \{1, \ldots, p\}$, the process $M^i = N^i - \int_0^\cdot \lambda_s^i \, ds$ is the $\mathbb{G}$-compensated default martingale.

We again consider an investor who, at time $0$, invests an initial amount $x \in \mathbb{R}$ in this market. For $i \in \{1, \ldots, p\}$, we let $\phi_t^i$ denote the amount of money invested in $S_t^i$ at time $t \in [0, T]$. If, for a given $i \in \{1, \ldots, p\}$, we have $\beta_{\tau_i}^i = -1$ a.s., then on the set $\{t \ge \tau_i\}$, the price of the $i$-th asset becomes $0$, and hence the investor will no longer invest in this asset. If $\beta_{\tau_i}=-1$, we set $\phi_t^i = 0$ on $\{t \ge \tau_i\}$.

Similarly to the linear framework, for a given risky-asset strategy denoted $\phi = (\phi^0,\phi^1,\ldots,\phi^p) \in \mathcal{H}^2 \times \mathcal{H}_{\lambda^1}^2 \times \cdots \times \mathcal{H}_{\lambda^p}^2$, a given cash withdrawal finite variation optional process $C \in \mathcal{A}_T^2$, and a given initial wealth (capital) $x \in \mathbb{R}$, the wealth process at time $t \in [0,T]$, denoted by $V_t^{x,\phi,C}$ (or simply $V_t$ if there is no ambiguity), satisfies the self-financing condition:
\begin{equation}\label{eq:3_SelfFinancingnon-linear1}
   -dV_t = g(t,V_t, \phi_t'\sigma_t, \phi_t^1\beta_t^1,\ldots,\phi_t^p\beta_t^p)dt - \phi_t'\sigma_t\,dW_t - \sum_{i=1}^p \phi_t^i \beta_t^i\,dM_t^i + dC_t; \quad V_0 = x,
\end{equation}
where $g$ is a possibly non-linear $\lambda^{(p)}$-admissible driver. Equivalently, setting $Z_t = \phi_t'\sigma_t$ and, for each $i \in \{1, \ldots, p\}$, $K_t^i = \phi_t^i\beta_t^i$, we have
\begin{equation}\label{eq:3_SelfFinancingnon-linear2}
   -dV_t = g(t,V_t, Z_t, K_t^1,\ldots,K_t^p)dt - Z_t\,dW_t - \sum_{i=1}^p K_t^i\,dM_t^i + dC_t; \quad V_0 = x.
\end{equation}

We consider a European contingent claim with maturity $T$, terminal payoff  $\eta \in L^2(\mathcal{G}_T)$, and optional 'dividend' process $D \in \mathcal{A}_T^2$. \\Let $(X_{\cdot,T}^g(\eta,D), Z_{\cdot,T}^g(\eta,D), K_{\cdot,T}^{g,1}(\eta,D), \ldots, K_{\cdot,T}^{g,p}(\eta,D))$, or simply $(X, Z, K^1, \ldots, K^p)$, denote the solution of the BSDE with terminal time $T$, terminal condition $\eta$, and generalized driver $g(t,y,z,k^1,\ldots,k^p)\,dt + dD_t$, that is, the BSDE satisfying the following dynamics:
\begin{equation}\label{eq:3_BSDEnon-linear1}
    -dX_t = g(t,X_t,Z_t,K_t^1,\ldots,K_t^p)dt + dD_t - Z_t\,dW_t - \sum_{i=1}^p K_t^i\,dM_t^i, \quad X_T = \eta.
\end{equation}

Hence, the process $X = X_{\cdot,T}^g(\eta,D)$ coincides with the wealth process corresponding to initial wealth $x = X_0$, cumulative cash withdrawal process $D$, and risky-asset strategy $\phi = \mathbf{\Phi}(Z,K^1,\ldots,K^p)$, where $\mathbf{\Phi}$ is the following generalization of \eqref{eq:3_LinearStrategy}:
\begin{align*}
    \mathbf{\Phi} : \mathcal{H}^2 \times \mathcal{H}_{\lambda^1}^2 \times \cdots \times \mathcal{H}_{\lambda^p}^2 &\rightarrow \mathcal{H}^2 \times \mathcal{H}_{\lambda^1}^2 \times \cdots \times \mathcal{H}_{\lambda^p}^2, \\
    (Z, K^1, \ldots, K^p) &\mapsto \mathbf{\Phi}(Z, K^1, \ldots, K^p) \coloneqq \phi,
\end{align*}
where $\phi\coloneqq(\phi^0,\phi^1,\ldots,\phi^p)$ and the amount $\phi^i$ invested in the $i$-th asset (where $i\in\{0,1,\ldots,p\}$) is:
\begin{equation}
    \phi_t^0 = \frac{Z_t-\frac{K_t^1\sigma_t^1}{\beta_t^1}-\frac{K_t^2\sigma_t^2}{\beta_t^2}}{\sigma_t^0},\quad\phi_t^i=\frac{K_t^i}{\beta_t^i},\quad\text{for }i\in\{1,\ldots,p\}.
\end{equation}
Thus,  $X = V^{X_0,\phi,D}$.

Starting from the initial wealth $X_0 = X_{0,T}^g(\eta,D)$, the seller can construct a risky-asset strategy $\phi$ that allows them to pay the intermediate 'dividends' $D$ and the final payoff $\eta$. We therefore call the initial wealth $X_0$ the hedging price (or replicating price) at time $t = 0$ of the option, and the process $\phi$ the hedging strategy (or replicating strategy).

More generally, let us  consider a maturity time $S \in [0,T]$. For each $S \in [0,T]$ and for each payoff-dividend pair $(\eta,D) \in L^2(\mathcal{G}_S) \times \mathcal{A}_S^2$, the process $X_{\cdot,S}^g(\eta,D)$ is called the hedging price of the option with maturity $S$ and payoff-dividend pair $(\eta,D)$. This yields the following pricing system for the $p$-defaultable non-linear market model:
\begin{equation}\label{eq:3_non-linearPricingSystem1}
    \mathbf{X}^g : (S, \eta, D) \mapsto X_{\cdot,S}^g(\eta,D),
\end{equation}
which is generally non-linear with respect to the pair $(\eta,D)$ (as the driver $g$ is in general non-linear).

We now state some properties of this  pricing system (cf.  \cite{dumitrescu2018bsdes} for the case of single default, and  \cite{el1997non} for the case without jumps).

 \subsubsection{\texorpdfstring{Properties of the Non-linear Pricing System  $\mathbf{X}^g$  in the Case of $p$ Defaultable Assets}{Properties of the Non-linear Pricing System Xg in the Case of p Defaultable Assets}}\label{subsec:PropertiesofPricingSystem}

 \begin{itemize}
     \item \textbf{Consistency:} By the flow property of the BSDEs with default jumps, the pricing system $\mathbf{X}^g$ is consistent. More precisely, let  $S'\in[0,T]$, $\eta\in L^2(\mathcal{G}_{S'})$, $D\in\mathcal{A}_{S'}^2$ and $S\in[0,S']$. Then the hedging price of the European contingent claim associated with terminal payoff $\eta$, cumulative dividend process $D$ and maturity $S'$ coincides with the hedging price of the European option with terminal time $S$, payoff $X_{S,S'}^g(\eta,D)$ and dividend process $(D_t)_{t\le S}$, that is,
     \begin{equation*}
         X_{\cdot, S'}^g(\eta,D) = X_{\cdot,S}^g\left(X_{S,S'}^g(\eta,D),D\right).
     \end{equation*}
     \begin{remark}\label{remark:3_g00}
         Note that when $g(t,0,0,\ldots,0)=0$, we get that the price of a European option with null payoff and no dividends is equal to $0$ for all maturity times $S\in[0,T]$, hence $X_{\cdot,S}^g(0,0)=0$.
     \end{remark}
 \end{itemize}

Due to the (possible) presence of default jumps the non-linear pricing system is not necessarily monotone with respect to the payoff and dividend process. We introduce the following assumption (cf.  the comparison Theorem \ref{thm:CompThm}).

 \begin{assumption}\label{assumption:JumpPricingSystemMap}
     We assume that for each $i\in\{1,\ldots,p\}$ there exists a map,
     \begin{equation*}
         \gamma^i:\Omega\times[0,T]\times\mathbb{R}^4\rightarrow\mathbb{R};\text{  }(\omega,t,y,z,k^i,\hat{k}^i)\mapsto\gamma_t^{y,z,k^i,\hat{k}^i}(\omega)
     \end{equation*}
     which is $\mathcal{P}\otimes\mathcal{B}(\mathbb{R}^4)$-measurable, satisfying $dP\otimes dt$-a.e. for each $(y,z,k^i,\hat{k}^i)\in\mathbb{R}^4$,
     \begin{equation}\label{eq:3_MapCondition1}
         \left|\gamma_t^{y,z,k^i,\hat{k}^i}\sqrt{\lambda_t^i}\right|\le C\quad\text{and}\quad\gamma_t^{y,z,k^i,\hat{k}^i}\ge-1,
     \end{equation}
     and such that
     \begin{equation}\label{eq:3_MapCondition2}
         g(t,y,z,k^1,\ldots,k^p) - g(t,y,z,\hat{k}^1,\ldots,\hat{k}^p) \ge \sum_{i=1}^p\gamma_t^{y,z,k^i,\hat{k}^i}(k^i-\hat{k}^i)\lambda_t^i.
     \end{equation}
 \end{assumption}


 We now introduce the following partial order relation (cf. also \cite{dumitrescu2018bsdes}). Let $S\in[0,T]$ be given. For $(\eta,D),(\hat{\eta},\hat{D})\in L^2(\mathcal{G}_S)\times\mathcal{A}_S^2$, we say that $(\eta,D)$ dominates $(\hat{\eta},\hat{D})$ and we write the following relation,
 \begin{equation*}
    (\eta,D)\succ(\hat{\eta},\hat{D})\quad \text{if}\quad \eta\ge\hat{\eta}\text{ a.s.  and }D - \hat{D}\text{ is non-decreasing}.
 \end{equation*}

\begin{proposition}\label{proposition:Non-linearPricingSystemProperties}
	Under Assumption \ref{assumption:JumpPricingSystemMap}, the non-linear pricing system $\mathbf{X}^g$ has the following properties:
	\begin{enumerate}[label=(\alph*)]
		\item \textbf{Monotonicity:} The non-linear pricing system $\mathbf{X}^g$ is non-decreasing with respect to the payoff-dividend pair. More precisely, for all maturity times $S\in[0,T]$, for all payoffs $\eta,\hat{\eta}\in L^2(\mathcal{G}_S)$ and all cumulative dividend processes $D,\hat{D}\in\mathcal{A}_S^2$, the following implication  holds:
		    \begin{equation*}
		        \text{If }(\eta,D)\succ(\hat{\eta},\hat{D}),\text{ then we have }X_{t,S}^g(\eta,D)\ge X_{t,S}^g(\hat{\eta},\hat{D}),\quad t\in[0,S]\text{ a.s.}
		    \end{equation*}
		\item \textbf{Convexity:} If $g$ is convex with respect to the vector $(y,z,k^1,\ldots,k^p)$, then the non-linear pricing system $\mathbf{X}^g$ is convex with respect to the payoff-dividend pair $(\eta,D)$, that is, for any $\alpha\in[0,1]$, $S\in[0,T]$, $\eta,\hat{\eta}\in L^2(\mathcal{G}_S)$ and $D,\hat{D}\in\mathcal{A}_S^2$, we have: for all $t\in[0,S]$,
		    \begin{equation*}
		        X_{t,S}^g(\alpha\eta + (1-\alpha)\hat{\eta}, \alpha D + (1-\alpha)\hat{D})\le \alpha X_{t,S}^g(\eta,D) + (1-\alpha)X_{t,S}^g(\hat{\eta},\hat{D})\quad\text{a.s.}
		    \end{equation*}
		\item \textbf{Non-Negativity:} When $g(t,0,0,0,\ldots,0)=0$, the non-linear pricing system $\mathbf{X}^g$ is non-negative, that is, for all $S\in[0,T]$, for all non-negative terminal payoffs $\eta\in L^2(\mathcal{G}_S)$ and for all non-decreasing optional dividend processes $D\in\mathcal{A}_S^2$, we have that $X_{t,S}^g(\eta,D)\ge0$ for all $t\in[0,S]$ a.s.
		\item \textbf{No Arbitrage:} We assume  the additional condition that,  for each $i\in\{1,\ldots,p\}$, $\gamma_{t}^{y,z,k^i,\hat{k}^i}>-1,$ $dP\otimes dt$-a.e.  Then,  the non-linear pricing system $\mathbf{X}^g$ satisfies the no arbitrage property. That is, for all maturities $S\in[0,T]$, for all terminal payoffs $\eta,\hat{\eta}\in L^2(\mathcal{G}_S)$ and for all optional cumulative dividend processes $D,\hat{D}\in\mathcal{A}_S^2$, the following holds:
    If $(\eta,D)\succ(\hat{\eta},\hat{D})$ and if at time $t_0\in[0,S]$ we have $X_{t_0,S}^g(\eta,D)=X_{t_0,S}^g(\hat{\eta},\hat{D})$ a.s., then $\eta=\hat{\eta}$ a.s. and $(D_t-\hat{D}_t)_{t_0\le t\le S}$ is constant.
	\end{enumerate}
\end{proposition}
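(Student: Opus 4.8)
The plan is to derive all four properties from the comparison and strict comparison results of Theorem \ref{thm:CompThm}, applied with the \emph{same} driver $g$ for both BSDEs on the interval $[0,S]$ (the theorem is stated for horizon $T$, but holds verbatim for any $S\le T$). Given two price processes $(Y,Z,K^1,\dots,K^p)=X^g_{\cdot,S}(\eta,D)$ (components $Z,K^i$) and $(\hat Y,\hat Z,\hat K^1,\dots,\hat K^p)=X^g_{\cdot,S}(\hat\eta,\hat D)$, I would define, for $i\in\{1,\dots,p\}$, the processes $\gamma^i_t:=\gamma_t^{\hat Y_{t-},\hat Z_t,K^i_t,\hat K^i_t}$. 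These are predictable, since the map $\gamma^i$ is $\mathcal P\otimes\mathcal B(\mathbb R^4)$-measurable and $\hat Y_{-},\hat Z,K^i,\hat K^i$ are predictable; by \eqref{eq:3_MapCondition1} each $\gamma^i_t\sqrt{\lambda^i_t}$ is bounded, and by \eqref{eq:3_MapCondition2} (taking $y=\hat Y_{t-}$, $z=\hat Z_t$, $k^i=K^i_t$, $\hat k^i=\hat K^i_t$, and using $Y_t=Y_{t-}$ $dP\otimes dt$-a.e.) the $\gamma^i$ satisfy inequality \eqref{eq:CompThm2}. One technical point recurs: we must pass from the $dP\otimes dt$-a.e.\ bound $\gamma^i_t\ge-1$ (resp.\ $>-1$) supplied by \eqref{eq:3_MapCondition1} (resp.\ by the extra assumption in (d)) to the statement $\gamma^i_{\tau_i}\ge-1$ (resp.\ $>-1$) a.s.\ on $\{\tau_i\le S\}$, which is what \eqref{eq:CompThm1} needs (via Remark \ref{remark:CompThm}) and what part (ii) of Theorem \ref{thm:CompThm} needs. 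This follows because $\int_0^\cdot\lambda^i_s\,ds$ is the $\mathbb G$-compensator of $N^i$: the set $B^i:=\{(\omega,t):\gamma^i_t(\omega)<-1\}$ is predictable and $dP\otimes dt$-null, so $\mathbb E[\int_0^S\mathbbm{1}_{B^i}\,dN^i_s]=\mathbb E[\int_0^S\mathbbm{1}_{B^i}\lambda^i_s\,ds]=0$, whence $\mathbbm{1}_{B^i}(\tau_i)\mathbbm{1}_{\{\tau_i\le S\}}=0$ a.s. (and similarly with strict inequality).

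For \textbf{(a) Monotonicity}, since $\hat g=g$, condition \eqref{eq:CompThm3} holds with equality; the hypotheses $\eta\ge\hat\eta$ and $D-\hat D$ non-decreasing are exactly $(\eta,D)\succ(\hat\eta,\hat D)$; and the $\gamma^i$ above fulfil the remaining hypotheses of Theorem \ref{thm:CompThm}(i). Hence $X^g_{t,S}(\eta,D)=Y_t\ge\hat Y_t=X^g_{t,S}(\hat\eta,\hat D)$, $t\in[0,S]$, a.s. For \textbf{(c) Non-negativity}, Remark \ref{remark:3_g00} gives $X^g_{\cdot,S}(0,0)=0$ when $g(t,0,\dots,0)=0$; since $\eta\ge0$ and $D-0=D$ is non-decreasing, $(\eta,D)\succ(0,0)$, and (a) yields $X^g_{t,S}(\eta,D)\ge0$. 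For \textbf{(d) No arbitrage}, the extra assumption $\gamma_t^{y,z,k^i,\hat k^i}>-1$ $dP\otimes dt$-a.e., combined with the compensator argument above, gives $\gamma^i_{\tau_i}>-1$ a.s.\ on $\{\tau_i\le S\}$; with $(\eta,D)\succ(\hat\eta,\hat D)$ and $X^g_{t_0,S}(\eta,D)=X^g_{t_0,S}(\hat\eta,\hat D)$ a.s.\ (i.e.\ $Y_{t_0}=\hat Y_{t_0}$), Theorem \ref{thm:CompThm}(ii) gives $\eta=\hat\eta$ a.s.\ and $(D_t-\hat D_t)_{t_0\le t\le S}$ constant.

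The only part requiring more than a direct appeal to Theorem \ref{thm:CompThm} is \textbf{(b) Convexity}, which I regard as the main obstacle. Write $(P,Z^P,K^{P,\cdot})=X^g_{\cdot,S}(\eta,D)$ and $(\hat P,Z^{\hat P},K^{\hat P,\cdot})=X^g_{\cdot,S}(\hat\eta,\hat D)$, and set $\tilde\eta:=\alpha\eta+(1-\alpha)\hat\eta$, $\tilde D:=\alpha D+(1-\alpha)\hat D$, $\tilde P:=\alpha P+(1-\alpha)\hat P$, $\tilde Z:=\alpha Z^P+(1-\alpha)Z^{\hat P}$, $\tilde K^i:=\alpha K^{P,i}+(1-\alpha)K^{\hat P,i}$. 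By linearity of the BSDE, $\tilde P$ solves the BSDE with terminal condition $\tilde\eta$, dividend $\tilde D$ and driver process $\ell_t:=\alpha g(t,P_{t-},Z^P_t,K^{P,\cdot}_t)+(1-\alpha)g(t,\hat P_{t-},Z^{\hat P}_t,K^{\hat P,\cdot}_t)$, which lies in $\mathcal H^2$ by $\lambda^{(p)}$-admissibility of $g$ (using $g(\cdot,0,\dots,0)\in\mathcal H^2$, $P,\hat P\in\mathcal S^2\subset\mathcal H^2$, $Z^P,Z^{\hat P}\in\mathcal H^2$, $\sqrt{\lambda^i}K^{P,i},\sqrt{\lambda^i}K^{\hat P,i}\in L^2(dP\otimes dt)$). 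Convexity of $g$ in $(y,z,k^1,\dots,k^p)$ gives $\ell_t\ge g(t,\tilde P_t,\tilde Z_t,\tilde K^1_t,\dots,\tilde K^p_t)$, so $\tilde D':=\tilde D+\int_0^\cdot\big(\ell_s-g(s,\tilde P_s,\tilde Z_s,\tilde K^\cdot_s)\big)\,ds$ lies in $\mathcal A_S^2$ (the integrand is in $\mathcal H^2$, so Cauchy--Schwarz bounds the squared total variation) and $\tilde D'-\tilde D$ is non-decreasing. Rewriting the dynamics, $\tilde P$ solves the BSDE with terminal condition $\tilde\eta$, dividend $\tilde D'$ and driver $g$; by uniqueness (Proposition \ref{prop:EU}), $\tilde P=X^g_{\cdot,S}(\tilde\eta,\tilde D')$. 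Since $(\tilde\eta,\tilde D')\succ(\tilde\eta,\tilde D)$, part (a) gives $\alpha X^g_{t,S}(\eta,D)+(1-\alpha)X^g_{t,S}(\hat\eta,\hat D)=\tilde P_t=X^g_{t,S}(\tilde\eta,\tilde D')\ge X^g_{t,S}(\tilde\eta,\tilde D)=X^g_{t,S}(\alpha\eta+(1-\alpha)\hat\eta,\alpha D+(1-\alpha)\hat D)$, which is the asserted convexity. The delicate step here is recognising that the averaged price process is itself the price of the averaged claim with an \emph{augmented} (non-decreasing perturbation of the) dividend stream, after which monotonicity closes the argument; the only other subtlety worth care is the measure-theoretic transfer of the $dP\otimes dt$-a.e.\ bounds on $\gamma^i$ to the jump times $\tau_i$, handled above via the compensator.
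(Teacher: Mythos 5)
Your proof is correct and follows essentially the same route as the paper: parts (a), (c) and (d) are direct applications of Theorem \ref{thm:CompThm} with $\hat g = g$ under Assumption \ref{assumption:JumpPricingSystemMap}, and your convexity argument --- absorbing the non-negative gap $\ell_t - g(t,\tilde P_t,\tilde Z_t,\tilde K_t)$ into an augmented dividend $\tilde D'$ and then invoking monotonicity --- is precisely the ``standard argument'' the paper alludes to without writing out. Your additions are welcome ones, in particular the compensator argument transferring the $dP\otimes dt$-a.e.\ bounds $\gamma^i_t\ge -1$ (resp.\ $>-1$) to the jump times $\tau_i$, a measure-theoretic point the paper's proof does not address explicitly.
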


\begin{proof}For  the \textbf{monotonicity} of the non-linear pricing system $\mathbf{X}^g$ we use the comparison result from Theorem \ref{thm:CompThm} (which is applicable under  Assumption \ref{assumption:JumpPricingSystemMap}) and set $g=\hat{g}$ in this theorem.\\ 
		 For the \textbf{ convexity} of the non-linear pricing system $\mathbf{X}^g$,  we use again the comparison result from Theorem  \ref{thm:CompThm}. The proof follows standard arguments.\\
		The \textbf{non-negativity} is a direct  consequence of the monotonicity property. If $\hat{\eta} = 0$, $\hat{D}=0$, $\eta\ge0$ a.s., and $D$ non-decreasing, then by the definition of the partial order relation $\succ$ we have, $(\eta,D)\succ(\hat{\eta},\hat{D})$. By the comparison result from Theorem \ref{thm:CompThm} and Remark \ref{remark:3_g00} we have that $X_{t,S}^g(\eta,D)\ge X_{t,S}^g(\hat{\eta},\hat{D})=X_{t,S}^g(0,0)=0$.\\
 The proof of the \textbf{no arbitrage property} is a direct consequence of  the strict comparison result from Theorem \ref{thm:CompThm}, where we set $\hat{g}=g$.
\end{proof}

\subsubsection{\texorpdfstring{The $(g,D)$-Conditional Evaluation $\mathscr{E}^{g,D}$  for a $\lambda^{(p)}$-Admissible Driver}{The (g,D)-Conditional Evaluation EgD and EgD-Martingale for a lambda-p-Admissible Driver g}}\label{subsec:ThegdConditionalExpectation}

Let $g$ be a $\lambda^{(p)}$-admissible driver and let $D$ be a given  optional dividend process belonging to $\mathcal{A}_T^2$. For each $S\in[0,T]$ and each $\eta\in L^2(\mathcal{G}_S)$, we define the $(g,D)$-conditional evaluation of $\eta$ by,
\begin{equation*}
    \mathscr{E}_{t,S}^{g,D}(\eta)\coloneqq X_{t,S}^g(\eta,D),\quad0\le t\le S.
\end{equation*}
The $(g,D)$-conditional evaluation $\mathscr{E}_{\cdot,S}^{g,D}(\eta)$ is the first component of the solution of the BSDE associated with terminal time $S$, generalized driver $g(t,y,z,k^1,\ldots,k^p)dt + dD_t$ and terminal condition $\eta$, where we have fixed $D$ in the space $\mathcal{A}_T^2$.

\begin{remark}
    In the case where $D=0$, our $(g,D)$-conditional evaluation reduces to the $g$-conditional evaluation for the case of $p$ default times (which we denote by  $\mathscr{E}^g$).
    If, furthermore,  $g=0$, then the $(g,D)$-conditional evaluation reduces to the standard conditional expectation under $P$, that is $\mathscr{E}_{t,S}^{0,0}=\mathbb{E}^P[\eta|\mathcal{G}_t]$, for $t\in[0,S]$.
\end{remark}

\begin{remark}
    Note that we can in fact define the $(g,D)$-conditional evaluation $\mathscr{E}_{\cdot,S}^{g,D}(\eta)$ on the entire interval $[0,T]$ by setting,
    \begin{equation*}
        \mathscr{E}_{t,S}^{g,D}(\eta)\coloneqq\mathscr{E}_{t,T}^{g^S,D^S}(\eta)\text{ for }t\ge S,
    \end{equation*}
where we have set $g^S(t,\cdot)\coloneqq g(t,\cdot)\mathbbm{1}_{t\le S}$ and $D_t^S\coloneqq D_{t\wedge S}$.
\end{remark}

Let now $\mathcal{T}_0$ be the set of all stopping times. We extend the definition of the $(g,D)$-conditional evaluation for each terminal stopping time $\tau\in\mathcal{T}_0$ and each $\eta\in L^2(\mathcal{G}_\tau)$ as the first component of the solution of the BSDE associated with  terminal time $T$, $\lambda^{(p)}$-admissible driver $g^\tau(t,\cdot)\coloneqq g(t,\cdot)\mathbbm{1}_{t\le \tau}$ and optional process $D_t^\tau\coloneqq D_{t\wedge\tau}$.

Some properties of the non-linear $(g,D)$-evaluations are as follows (cf.  \cite{dumitrescu2018bsdes} for the single default jump case):   

\begin{itemize}
    \item \textbf{Consistency:} Let $\tau,\tau'\in\mathcal{T}_0$ be such that $\tau\le\tau'$ a.s. and let $\eta\in L^2(\mathcal{G}_{\tau'})$. Then, $\mathscr{E}_{t,\tau'}^{g,D}(\eta) = \mathscr{E}_{t,\tau}^{g,D}(\mathscr{E}_{\tau,\tau'}^{g,D}(\eta))$ a.s.
    \item \textbf{Generalized Zero-One Law:} Let $\tau\in\mathcal{T}_0$, let  $\eta\in L^2(\mathcal{G}_\tau)$. For $t\in[0,T]$ and for  $A\in\mathcal{F}_t,$ we have,
    \begin{equation*}
        \mathscr{E}_{t,\tau}^{g^A,D^A}(\mathbbm{1}_{A}\eta) = \mathbbm{1}_{A}\mathscr{E}_{t,\tau}^{g,D}(\eta)\text{ a.s.,}
    \end{equation*}
    where $g^A(s,\cdot)\coloneqq g(s,\cdot)\mathbbm{1}_A\mathbbm{1}_{(t,T]}(s)$ and $D_s^A\coloneqq(D_s-D_t)\mathbbm{1}_A\mathbbm{1}_{s\ge t}.$   In the case where $D=0$, this property has been established  in \cite{grigorova2017optimal} (in the case of a Brownian-Poisson filtration). 
    \item \textbf{Monotonicity:} Using the comparison theorem (Theorem \ref{thm:CompThm}), under Assumption \ref{assumption:JumpPricingSystemMap}, the $(g,D)$-conditional evaluation $\mathscr{E}^{g,D}(\cdot)$ is monotone with respect to the terminal payoff. 
    \item \textbf{Convexity:} Under Assumption \ref{assumption:JumpPricingSystemMap}, if we further assume that $g$ is  convex  with respect to the vector $(y,z,k^1,\ldots,k^p)$, then $\mathscr{E}^{g,D}(\cdot)$ is convex with respect to the terminal payoff. 
    \item \textbf{No Arbitrage Property:} Under Assumption \ref{assumption:JumpPricingSystemMap}, if we  further assume that for each $i\in\{1,\ldots,p\}$, $\gamma_{t}^{y,z,k^i,\hat{k}^i}>-1,$ $dP\otimes dt$-a.e.,  then by the strict comparison Theorem (Theorem \ref{thm:CompThm} (ii)), $\mathscr{E}^{g,D}$ has the no arbitrage property.
\end{itemize}

We now present two examples.  

\subsection{\texorpdfstring{Example: Large Seller who Affects the $i$-th Default Probability}{Example: Large Seller who Affects the ith Default Probability}}\label{subsection_large1}

We place ourselves in the same probabilistic framework as in Subsection \ref{subsec:non-linearPricingCompleteMarket}. We consider a European option with maturity $T>0$, terminal payoff $\eta\in L^2(\mathcal{G}_T)$ and an optional dividend process $D\in\mathcal{A}_T^2$. We consider the situation where  the seller of this European option is a large trader.  The hedging strategy of the trader (and its associated wealth process) may affect  the  default probabilities of the assets. For this example we assume that the large seller only affects the $i$-th default probability (where $i$ is a fixed index). We also assume that the $i$-th default intensity is bounded. The large seller takes this feedback effect into consideration for their market model.

 Let $i$ be a \textbf{fixed} index in $\{1,\ldots,p\}$ (in the whole sub-section). We are given a family of probability measures parametrized by  the risky-asset strategy $\phi$ and the (self-financing)  wealth process $V$. More precisely, let  $\phi\in\mathcal{H}^2\times\mathcal{H}_{\lambda^1}^2\times\cdots\times\mathcal{H}_{\lambda^p}^2$ and let $V\in\mathcal{S}^2$.  Let $Q^{V,\phi,i}$ be the probability measure  defined by the Radon-Nikodym density process  (with respect to $P$):
\begin{equation}\label{eq:3_RadonNikodymExample1}
    \left.\frac{dQ^{V,\phi,i}}{dP}\right|_{\mathcal{G}_t} = L_t^{V,\phi,i},
\end{equation}
where $L^{V,\phi,i}$ is the solution of the SDE,
\begin{equation}\label{eq:3_DensitySDE1}
    dL_t^{V,\phi,i} = L_{t-}^{V,\phi,i}\gamma^i(t,V_{t-},\phi_t)dM_t^i;\quad L_0^{V,\phi,i}=1.
\end{equation}

We introduce  the following assumption on  the function $\gamma^i$. 

\begin{assumption}\label{assumption:Gammai}
  The function  $\gamma^i:(\omega, t, y, \phi^0,\phi^1,\ldots,\phi^p)\mapsto\gamma^i(\omega,t,y,\phi^0,\phi^1,\ldots,\phi^p)$ is a $\mathcal{P}\otimes\mathcal{B}(\mathbb{R}^{p+2})$-measurable function defined on $\Omega\times[0,T]\times\mathbb{R}^{p+2}$, bounded, and such that the map $y\mapsto\gamma^i(\omega,t,y,\phi^0,\phi^1,\ldots,\phi^p)/\phi^i$ is uniformly Lipschitz. In addition we  assume that $\gamma^i(t,\cdot)>-1$, $dt\otimes dP$-a.e. 
\end{assumption}

In the financial context, we use  the function $\gamma^i$ to represent the influence of the seller's strategy on the default intensity of the $i$-th asset, where $\phi$ is the seller's risky-asset strategy and $V$ is the value of their portfolio.

By Assumption \ref{assumption:Gammai}, Remark \ref{remark:2.12} and Proposition \ref{prop:ESM-M}, the process $L^{V,\phi,i}$ is positive and belongs to $\mathcal{S}^2$.

Using  Girsanov's theorem (and our assumptions on $\lambda_i$ and $\gamma^i$)  the process   $(W_t^{V,\phi,i})$ hereafter is a $Q^{V,\phi,i}$-Brownian motion and the process $(M_t^{V,\phi,i})$ is a $Q^{V,\phi,i}$- martingale, where 
\begin{align}
    &\begin{aligned}
        W_t^{V,\phi,i} &\coloneqq W_t - \int_0^t\frac{d\langle W,L^{V,\phi,i}\rangle_s}{L_{s-}^{V,\phi,i}}= W_t;
    \end{aligned}\label{eq:3_WQMartingale1}\\
    &\begin{aligned}
        M_t^{V,\phi,i} &\coloneqq M_t^i - \int_0^t\frac{d\langle M^i,L^{V,\phi,i}\rangle_s}{L_{s-}^{V,\phi,i}}
        = M_t^i - \int_0^t\gamma^i(s,V_{s-},\phi_s)\lambda_s^ids
    \end{aligned}\label{eq:3_MQLocalMartingale1}
\end{align}

 Hence under $Q^{V,\phi,i}$, the $i$-th $\mathbb{G}$-default intensity process is equal to $\lambda_t^i(1+\gamma^i(t,V_{t-},\phi_t))$ since we can rewrite \eqref{eq:3_MQLocalMartingale1} as,
\begin{equation*}
    M_t^{V,\phi,i}\coloneqq M_t^i - \int_0^t\gamma^i(s,V_{-},\phi_s)\lambda_s^ids=N_t^i-\int_0^t\lambda_s^i(1+\gamma^i(s,V_{s-},\phi_s))ds.
\end{equation*}

\begin{remark}\label{remark:3_Girsanov}
    For the case $j\ne i$ (where $j\in\{1,\ldots,p\}$), we have $M_t^{V,\phi,j}:= M_t^j$ is a $Q^{V,\phi,i}$-martingale,  by Girsanov's theorem. This is true as  $P(\tau_j = \tau_i)=0$ for all $j\in\{1,\ldots,p\}$ such that $j\ne i$ (hence,  $\langle M^j,M^i\rangle_s=0$ for $j\ne i$). 
\end{remark}

The large seller then considers the following pricing model, which takes into account their impact on the market. For a fixed pair $(V,\phi)\in\mathcal{S}^2\times\mathcal{H}^2\times\mathcal{H}_{\lambda^1}^2\times\cdots\times\mathcal{H}_{\lambda^p}^2$, which we call the wealth/risky-asset strategy pair, we have the following dynamics of the $p+1$ risky assets under the probability $Q^{V,\phi,i}$,
\begin{align*}
    dS_t^0 &= S_t^0[\mu_t^0dt + \sigma_t^0dW_t],\\
    dS_t^j &= S_{t-}^j[\mu_t^jdt+\sigma_t^jdW_t+\beta_t^jdM_t^j],\quad j\ne i,\text{ }j\in\{1,\ldots,p\}\\
    dS_t^i &=  S_{t-}^i[\mu_t^idt+\sigma_t^idW_t + \beta_t^idM_t^{V,\phi,i}].
\end{align*}

The value process $(V_t)_{t\in[0,T]}$ of the seller's portfolio associated with an initial wealth $x\in\mathbb{R}$, a risky-asset strategy $\phi$, and  an optional cumulative withdrawal process (which  the seller will choose to be equal to the optional dividend process $D$ of the option), satisfies the following dynamics,
\begin{multline}\label{eq:3_ValueProcess1}
    -dV_t = -\left(r_tV_t +\phi_t'\sigma_t\Theta_t^0 + \sum_{j=1}^p\phi_t^j\beta_t^j\Theta_t^j\lambda_t^j\right)dt + dD_t - \phi_t'\sigma_tdW_t \\- \sum_{\substack{j=1\\j\ne i}}^p\phi_t^j\beta_t^jdM_t^j - \phi_t^i\beta_t^idM_t^{V,\phi,i},\quad V_0=x,
\end{multline}
where $\phi_t'\sigma_t = \sum_{j=0}^p\phi_t^j\sigma_t^j$ and,
\begin{equation*}
    \Theta_t^0 = \frac{\mu_t^0-r_t}{\sigma_t^0},\quad\Theta_t^j = \frac{\mu_t^j-r_t-\sigma_t^j\Theta_t^0}{\beta_t^j\lambda_t^j}\mathbbm{1}_{\{\beta_t^j\lambda_t^j\ne0\}},\text{  for }j\in\{1,\ldots,p\}.
\end{equation*}

Using the expression for $M_t^{V,\phi,i}$ from \eqref{eq:3_MQLocalMartingale1}, we obtain,
\begin{multline}\label{eq:3_ValueProcess2}
    -dV_t = -\left(r_tV_t + \phi_t'\sigma_t\Theta_t^0 + \sum_{j=1}^p\phi_t^j\beta_t^j\Theta_t^j\lambda_t^j + \gamma^i(t,V_{t-},\phi_t)\lambda_t^i\phi_t^i\beta_t^i\right)dt + dD_t \\- \phi_t'\sigma_tdW_t - \sum_{j=1}^p\phi_t^j\beta_t^jdM_t^j,\quad V_0=x.
\end{multline}

By Assumption \ref{assumption:Gammai} on $\gamma^i$, for a given risky-asset strategy $\phi$,  there exists a unique process $V^{x,\phi}$ satisfying the forward SDE \eqref{eq:3_ValueProcess2} with initial condition $V_0^{x,\phi}=x$, where $x$ is the initial wealth of the investor.

We set $Z_t \coloneq \phi_t'\sigma_t$ and,  for each $j\in\{1,\ldots,p\}$,  $K_t^j\coloneq\beta_t^j\phi_t^j$. The dynamics of \eqref{eq:3_ValueProcess2} can be rewritten as,
\begin{equation}\label{eq:3_ValueProcess3}
    -dV_t = g(t,V_t,Z_t,K_t^1,\ldots,K_t^p)dt + dD_t - Z_tdW_t - \sum_{j=1}^pK_t^jdM_t^j,\quad V_0=x,
\end{equation}
where the function $g$ is defined  by,
\begin{multline*}
    g(t,y,z,k^1,\ldots,k^p) \coloneqq -r_ty - \Theta_t^0z - \sum_{j=1}^p\Theta_t^j\lambda_t^jk^j\\ - \gamma^i\Big(t,y,\frac{z-\sum_{j=1}^p\frac{k^j\sigma_t^j}{\beta_t^j}}{\sigma_t^0},\frac{k^1}{\beta_t^1},\ldots,\frac{k^p}{\beta_t^p}\Big)\lambda_t^ik^i.
\end{multline*}

If we assume that that there exists $C>0$ such that $g$ satisfies \eqref{eq:2-LAD}, then $g$ is  a $\lambda^{(p)}$-admissible driver (Definition \ref{defintion:2_LambdaDriver}).

Hence, for an option with pay-off $\eta$ at time $T$ and intermediate optional process $D$,  we have  a particular case of the pricing system $\mathbf{X}^g(\eta,D)$ from Subsection \ref{subsec:PropertiesofPricingSystem}, where $g$ is the above non-linear driver. 
From Subsection \ref{subsec:non-linearPricingCompleteMarket}, the seller's price process is  equal to $X$, where $X$ is the first component of the solution $(X,Z,K^1,\ldots,K^p)$ to the BSDE,
\begin{equation}\label{eq:3_non-linearBSDE1}
    -dX_t = g(t,X_t,Z_t,K_t^1,\ldots,K_t^p)dt + dD_t - Z_tdW_t - \sum_{j=1}^pK_t^jdM_t^j,\quad X_T = \eta.
\end{equation}

Furthermore the seller's hedging  strategy $\phi$ is obtained by the change of variables formula, 
\begin{align}\label{eq:3_non-linearChangeVariables1}
    \begin{aligned}
        \mathbf{\Phi}:\mathcal{H}_T^2\times\mathcal{H}_{\lambda^1,T}^2\times\cdots\times\mathcal{H}_{\lambda^p,T}^2&\rightarrow\mathcal{H}_T^2\times\mathcal{H}_{\lambda^1,T}^2\times\cdots\times\mathcal{H}_{\lambda^p,T}^2,\\
        (Z,K^1,\ldots,K^p)&\mapsto\mathbf{\Phi}(Z,K^1,\ldots,K^p)\coloneqq\phi=(\phi^0,\phi^1,\ldots,\phi^p),
    \end{aligned}
\end{align}
where 
\begin{equation*}
    \phi_t^0=\frac{Z_t^0 -\sum_{j=1}^p\frac{K_t^j\sigma_t^j}{\beta_t^j}}{\sigma_t^0};\quad\phi_t^j=\frac{K_t^j}{\beta_t^j}\text{ for all }j\in\{1,\ldots,p\}.
\end{equation*}

\subsection{\texorpdfstring{Example: Large Seller who Affects all $p$ Default Probabilities}{Example: Large Seller who Affects all p Default Probabilities}}\label{subsection_large2}

In this example,  we  assume that the large seller affects all $p$ default probabilities. We also assume in this example that, for each $i\in\{1,...,p\}$,  the $i$-th default intensity is bounded.

Let $\mathcal{Q}^{V,\phi}$ be the probability measure, defined by the Radon-Nikodym density process (with respect to $P$):
\begin{equation}\label{eq:3_RadonNikodym2}
    \left.\frac{d\mathcal{Q}^{V,\phi}}{dP}\right|_{\mathcal{G}_t}=\mathscr{L}_t^{V,\phi},
\end{equation}
where $\mathscr{L}^{V,\phi}$ is the solution to the SDE:
\begin{equation}\label{eq:3_DensityProcess2}
    d\mathscr{L}_t^{V,\phi} = \mathscr{L}_{t-}^{V,\phi}\left(\sum_{i=1}^p\gamma^i(t,V_{t-},\phi_t)dM_t^i\right);\quad \mathscr{L}_0^{V,\phi}=1.
\end{equation}

\begin{assumption}\label{assumption:Gammai2}
	For each $i\in\{1,\ldots,p\}$, the function $\gamma^i$ satisfies Assumption \ref{assumption:Gammai}.
\end{assumption}
By Assumption \ref{assumption:Gammai2}, Remark \ref{remark:2.12} and Proposition \ref{prop:ESM-M},  the process $\mathscr{L}^{V,\phi}$ is positive and belongs to the space $\mathcal{S}^2$.

By  Girsanov's theorem (and our assumptions),   the process $(\mathscr{W}_t^{V,\phi})$  hereafter is a $\mathcal{Q}^{V,\phi}$-Brownian motion and, for each  $i\in\{1,\ldots,p\}$, the process $(\mathscr{M}_t^{V,\phi,i})$ is a $\mathcal Q^{V,\phi}$- martingale, where 
\begin{align}
    &\begin{aligned}
        \mathscr{W}_t^{V,\phi}&\coloneqq W_t - \int_0^t\frac{d\langle W,\mathscr{L}^{V,\phi}\rangle_s}{\mathscr{L}_{s-}^{V,\phi}}= W_t, \text { and,}
    \end{aligned}\label{eq:3_WQMartingale2}
\end{align}
\begin{align}
    &\begin{aligned}
        \mathscr{M}_t^{V,\phi,i}&\coloneqq  M_t^i - \int_0^t\sum_{j=1}^p\gamma^j(s,V_{s-},\phi_s)d\langle M^j,M^i\rangle_s= M_t^i-\int_0^t\gamma^i(s,V_{s-},\phi_s)\lambda_s^ids.    \end{aligned}\label{eq:3_MQLocalMartingale2}
\end{align}
For the final equality in \eqref{eq:3_MQLocalMartingale2}, we have used that $P(\tau_i=\tau_j)=0$ for all $i,j\in\{1,\ldots,p\}$,  such that $i\ne j$. 

Hence, under the measure $\mathcal{Q}^{V,\phi}$, for each $i\in\{1,\ldots,p\}$; the $i$-th $\mathbb{G}$-default intensity process is equal to $\lambda_t^i(1+\gamma^i(t,V_{t-},\phi_t))$ since we can rewrite \eqref{eq:3_MQLocalMartingale2} as,
\begin{equation*}
    \mathscr{M}_t^{V,\phi,i}\coloneqq M_t^i - \int_0^t\gamma^i(s,V_{s-},\phi_s)\lambda_s^ids = N_t^i - \int_0^t\lambda_s^i(1+\gamma^i(s,V_{s-},\phi_s))ds.
\end{equation*}

For a given  wealth/risky-asset strategy pair $(V,\phi)\in\mathcal{S}^2\times\mathcal{H}^2\times\mathcal{H}_{\lambda^1}^2\times\cdots\times\mathcal{H}_{\lambda^p}^2$,  the $p+1$ risky assets have the   following dynamics  under $\mathcal{Q}^{V,\phi}$:
\begin{align*}
    dS_t^0 &= S_t^0[\mu_t^0dt + \sigma_t^0dW_t],\\
    dS_t^i &= S_{t-}^i[\mu_t^idt + \sigma_t^idW_t + \beta_t^id\mathscr{M}_t^{V,\phi,i}],\quad i\in\{1,\ldots,p\}.
\end{align*}

The value process $(V_t)_{t\in[0,T]}$ of the seller's portfolio associated with an initial wealth $x\in\mathbb{R}$, a risky-asset strategy $\phi$, and a cumulative withdrawal optional process  (which  the seller chooses in such a way as to be equal to the optional dividend process $D$ of the option,) satisfies the following,
\begin{multline}\label{eq:3_ValueProcess4}
    -dV_t = -\left(r_tV_t + \phi_t'\sigma_t\Theta_t^0 + \sum_{i=1}^p\phi_t^i\beta_t^i\Theta_t^i\lambda_t^i\right)dt + dD_t \\- \phi_t'\sigma_tdW_t - \sum_{i=1}^p\phi_t^i\beta_t^id\mathscr{M}_t^{V,\phi,i},\quad V_0=x,
\end{multline}
where we have $\phi_t'\sigma_t \coloneqq\sum_{i=0}^p\phi_t^i\sigma_t^i$ and,
\begin{equation*}
    \Theta_t^0=\frac{\mu_t^0-r_t}{\sigma_t^0},\quad\Theta_t^i=\frac{\mu_t^i-r_t-\sigma_t^i\Theta_t^0}{\beta_t^i\lambda_t^i}\mathbbm{1}_{\{\beta_t^i\lambda_t^i\ne0\}},\text{ for }i\in\{1,\ldots,p\}.
\end{equation*}

We  use the expression for $\mathscr{M}_t^{V,\phi,i}$ from \eqref{eq:3_MQLocalMartingale2} to obtain, 
\begin{multline}\label{eq:3_ValueProcess5}
    -dV_t = -\left(r_tV_t + \phi_t'\sigma_t\Theta_t^0 + \sum_{i=1}^p\phi_t^i\beta_t^i\Theta_t^i\lambda_t^i + \sum_{i=1}^p\gamma^i(t,V_{t-},\phi_t)\lambda_t^i\phi_t^i\beta_t^i\right)dt + dD_t \\- \phi_t'\sigma_tdW_t - \sum_{i=1}^p\phi_t^i\beta_t^idM_t^i,\quad V_0=x.
\end{multline}

By Assumption \ref{assumption:Gammai2}, for a given strategy  $\phi$,  there exists a unique process $V^{x,\phi}$ satisfying \eqref{eq:3_ValueProcess5} with initial condition $V_0^{x,\phi}=x$, where $x$ is the initial wealth of the trader.
Using \eqref{eq:3_ValueProcess5} and setting $Z_t\coloneq\phi_t'\sigma_t$ and $K_t^i\coloneq \beta_t^i\phi_t^i$, for each $i\in\{1,\ldots,p\}$,   we get,
\begin{equation}\label{eq:3_ValueProcess6}
    -dV_t = g(t,V_t,Z_t,K_t^1,\ldots,K_t^p)dt + dD_t - Z_tdW_t - \sum_{i=1}^pK_t^idM_t^i,\quad V_0=x,
\end{equation}
where the function $g$ is defined by,
\begin{multline*}
    g(t,y,z,k^1,\ldots,k^p)\coloneqq -r_ty - \Theta_t^0z - \sum_{i=1}^p\Theta_t^i\lambda_t^ik^i\\ - \sum_{i=1}^p\gamma^i\Big(t,y,\frac{z-\sum_{j=1}^p\frac{k^j\sigma_t^j}{\beta_t^j}}{\sigma_t^0},\frac{k^1}{\beta_t^1},\ldots,\frac{k^p}{\beta_t^p}\Big)\lambda_t^ik^i.
\end{multline*}
If there exists $C>0$ such that the function $g$ satisfies  condition \eqref{eq:2-LAD}, then we have another example of the pricing system $\mathbf{X}^g$ from Subsection \ref{subsec:PropertiesofPricingSystem}, where the non-linear driver $g$ is the one from above.  


\newpage


\begin{appendices}

\section{Some Technical Lemmas}\label{secA1}

\begin{lemma}\label{lemma:hDecomposition}
    Let $h$ be a non-decreasing optional rcll process, with $h_0=0$ and $\mathbb{E}[h_T^2]<\infty$, that is, $h$ is a non-decreasing process in $\mathcal{A}_T^2$. Then $h$ has at most $p$ inaccessible jumps and these jumps occur at $\tau_1,\ldots,\tau_p$. Moreover $h$ can be uniquely decomposed as follows $h_t = B_t + \Delta h_{\tau_1}\mathbbm{1}_{\{\tau_1\le t\}} + \cdots + \Delta h_{\tau_p}\mathbbm{1}_{\{\tau_p\le t\}} = B_t + \sum_{i=1}^p\int_0^t\psi_t^idN_t^i$, where $(B_t)_{t\in[0,T]}$ is a (predictable) process in $\mathcal{A}_{p,T}^2$ and for each $i\in\{1,\ldots,p\}$, $\psi^i\in\mathcal{H}_{\lambda^i,T}^2$.
\end{lemma}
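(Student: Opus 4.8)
The plan is to build the decomposition from the $\mathbb{G}$-dual predictable projection of $h$ together with the martingale representation property (Theorem~\ref{theorem:MRP}). First I would let $h^p$ denote the $\mathbb{G}$-compensator (dual predictable projection) of $h$; since $h$ is non-decreasing with $\mathbb{E}[h_T]\le(\mathbb{E}[h_T^2])^{1/2}<\infty$, the process $h^p$ exists and is predictable, non-decreasing, with $h^p_0=0$, and by the classical estimate for dual predictable projections of square-integrable increasing processes one also has $\mathbb{E}[(h^p_T)^2]<\infty$. Hence $m:=h-h^p$ is a square-integrable $\mathbb{G}$-martingale with $m_0=0$, and Theorem~\ref{theorem:MRP} yields unique $z\in\mathcal{H}^2_T$ and $k^i\in\mathcal{H}^2_{\lambda^i,T}$, $i\in\{1,\ldots,p\}$, with
\[
m_t=\int_0^t z_s\,dW_s+\sum_{i=1}^p\int_0^t k^i_s\,dM^i_s .
\]

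Next I would argue $z\equiv 0$. The processes $h$, $h^p$, $\int_0^\cdot k^i_s\,dN^i_s$ (finitely many jumps), and $\int_0^\cdot k^i_s\lambda^i_s\,ds$ (finite, by Cauchy--Schwarz using $\int_0^T\lambda^i_s\,ds=\Lambda^i_T<\infty$ a.s.\ and $k^i\in\mathcal{H}^2_{\lambda^i,T}$) are all of finite variation, so $\int_0^\cdot z_s\,dW_s=h-h^p-\sum_i\int_0^\cdot k^i_s\,dM^i_s$ is a continuous local martingale of finite variation starting at $0$, hence identically zero; thus $z=0$ ($dP\otimes dt$-a.e.). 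Consequently
\[
h_t=\Big(h^p_t-\int_0^t\textstyle\sum_{i=1}^p k^i_s\lambda^i_s\,ds\Big)+\sum_{i=1}^p\int_0^t k^i_s\,dN^i_s ,
\]
and I would set $B_t:=h^p_t-\int_0^t\sum_i k^i_s\lambda^i_s\,ds$ and $\psi^i:=k^i\in\mathcal{H}^2_{\lambda^i,T}$, so that $B$ is predictable (both summands are) and $B_0=0$.

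To identify the jumps and obtain the claimed properties of $B$, I would use that $h^p$ is predictable and each $\tau_i$ is totally inaccessible, whence $\Delta h^p_{\tau_i}=0$ a.s.; since the $\tau_i$ are a.s.\ distinct and $\int k^i\lambda^i\,ds$ is continuous, $\Delta h_{\tau_i}=\Delta m_{\tau_i}=k^i_{\tau_i}$ on $\{\tau_i\le T\}$. Therefore $\int_0^t k^i_s\,dN^i_s=\Delta h_{\tau_i}\mathbbm{1}_{\{\tau_i\le t\}}$ and $B_t=h_t-\sum_{i=1}^p\Delta h_{\tau_i}\mathbbm{1}_{\{\tau_i\le t\}}$; that is, $B$ is $h$ with its finitely many non-negative jumps at $\tau_1,\ldots,\tau_p$ removed, hence $B$ is non-decreasing and $0\le B_T\le h_T$, so $\mathbb{E}[B_T^2]<\infty$ and $B\in\mathcal{A}^2_{p,T}$. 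Since $B$ is predictable its jumps occur at accessible times, so the totally inaccessible jumps of $h$ all lie in $\{\tau_1,\ldots,\tau_p\}$, at most $p$ of them.

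Finally, for uniqueness, suppose $h=B+\sum_i\int_0^\cdot\psi^i_s\,dN^i_s=\widetilde B+\sum_i\int_0^\cdot\widetilde\psi^i_s\,dN^i_s$ with $B,\widetilde B\in\mathcal{A}^2_{p,T}$ and $\psi^i,\widetilde\psi^i\in\mathcal{H}^2_{\lambda^i,T}$. Then $B-\widetilde B=\sum_i\int_0^\cdot(\widetilde\psi^i_s-\psi^i_s)\,dN^i_s$ is predictable, of finite variation, and can jump only at the totally inaccessible times $\tau_i$; a predictable process has zero jump at a totally inaccessible time, so $B-\widetilde B$ is continuous, and being also purely discontinuous (a sum of integrals against the $N^i$) it is identically zero. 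Hence $B=\widetilde B$ and, using that the $\tau_i$ are distinct, $(\psi^i_{\tau_i}-\widetilde\psi^i_{\tau_i})\mathbbm{1}_{\{\tau_i\le T\}}=0$ a.s.\ for each $i$; since for a predictable $u$ one has $\mathbb{E}[\int_0^T u_s^2\lambda^i_s\,ds]=\mathbb{E}[\int_0^T u_s^2\,dN^i_s]=\mathbb{E}[u_{\tau_i}^2\mathbbm{1}_{\{\tau_i\le T\}}]$, this gives $\|\psi^i-\widetilde\psi^i\|_{\lambda^i}=0$, i.e.\ $\psi^i=\widetilde\psi^i$ in $\mathcal{H}^2_{\lambda^i,T}$. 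The step I expect to be most delicate is the "finite variation forces $z=0$" argument together with the accompanying integrability bookkeeping — in particular the square-integrability $\mathbb{E}[(h^p_T)^2]<\infty$ of the compensator and the finiteness of $\int_0^\cdot k^i_s\lambda^i_s\,ds$ — and, secondarily, the clean identification $\Delta h_{\tau_i}=k^i_{\tau_i}$.
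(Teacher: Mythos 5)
Your proof is correct and follows essentially the same route as the paper's: the compensator/Doob--Meyer decomposition of the increasing process $h$ followed by the martingale representation property of Theorem~\ref{theorem:MRP}. You are in fact more explicit than the paper on two points --- the argument that the Brownian component $z$ must vanish (the paper absorbs $\int_0^\cdot z_s\,dW_s$ into $B$ and only implicitly uses that finite variation forces $z=0$) and the uniqueness of $(B,\psi^1,\ldots,\psi^p)$ --- but these are refinements of the same argument, not a different approach.
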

\begin{proof}
    Since $h$ is a square-integrable non-decreasing optional rcll process, $h$ is a square-integrable (rcll) submartingale. Thus, by the Doob-Meyer decomposition applied to $h$, there exists a unique predictable process $a\in\mathcal{A}_{p,T}^2$ and a unique square-integrable martingale $m$ with $m_0=0$ such that $h_t = a_t + m_t$. Using the martingale representation property from Theorem \ref{theorem:MRP} Eq. \eqref{eq:2-MRP}, the $\mathbb{G}$-martingale $(m_t)_{t\in[0,T]}$ can be uniquely represented as $m_t = \int_0^tz_sdW_s + \sum_{i=1}^p\int_0^t\psi_s^idM_s^i$, where $z\in\mathcal{H}_T^2$ and $\psi^1\in\mathcal{H}_{\lambda^1,T}^2,\ldots,\psi^p\in\mathcal{H}_{\lambda^p,T}^2$. Using $dM_s^i = dN_s^i - \lambda_s^ids$ (from \eqref{eq:2.1}), we get,
    \begin{equation*}
        m_t = \int_0^tz_sdW_s - \sum_{i=1}^p\int_0^t\psi_s^i\lambda_s^ids + \sum_{i=1}^p\int_0^t\psi_s^idN_s^i.
    \end{equation*}
    Thus the process $h$ is uniquely written $h_t = a_t + \int_0^tz_sdW_s - \sum_{i=1}^p\int_0^t\psi_s^i\lambda_s^ids + \sum_{i=1}^p\int_0^t\psi_s^idN_s^i$. Setting $B_t \coloneqq a_t + \int_0^tz_sdW_s - \sum_{i=1}^p\int_0^t\psi_s^i\lambda_s^ids$, we get,
    \begin{equation*}
        h_t = B_t + \sum_{i=1}^p\int_0^t\psi_s^idN_s^i = B_t + \sum_{i=1}^p\psi_{\tau_i}^i\mathbbm{1}_{\{t\ge\tau_i\}}.
    \end{equation*}
    The process $B$ is predictable since it is the sum of predictable terms. Moreover, $B$ is square-integrable.
    
    The equality $h_t = B_t + \sum_{i=1}^p\psi_t^i\mathbbm{1}_{\{t\ge\tau_i\}}$, together with the predictability of $(B_t)$, the non-decreasingness of $h$ and the assumption that $0\le\tau_1<\tau_2<\cdots<\tau_p$ a.s. implies that $\Delta h_{\tau_1}=\psi_{\tau_1}^1\ge0$ a.s. on $\{\tau_1\le T\}$, $\Delta h_{\tau_2}=\psi_{\tau_2}^2\ge0$ a.s. on $\{\tau_2\le T\}$ and $\Delta h_{\tau_p}=\psi_{\tau_p}^p\ge0$ a.s. on $\{\tau_p\le T\}$, hence $(B_t)$ is non-decreasing.
\end{proof}

\begin{lemma}\label{lemma:nonDecreasingDecompositions}
	Let $D$ and $\hat{D}$ be \textbf{optional} processes in $\mathcal{A}_T^2$. Let $D',\hat{D}'$ in $\mathcal{A}_{p,T}^2$ and $\theta^i,\hat{\theta}^i\in\mathcal{H}_{\lambda^i,T}^2$ for $i\in\{1,\ldots,p\}$, be the unique processes such that,
\begin{align}\label{eq:lemma_nDD1}
	\begin{aligned}
		D_t = D_t' + \int_0^t\sum_{i=1}^p\theta_s^idN_s^i = D_t' + \sum_{i=1}^p\theta_{\tau_i}^i\mathbbm{1}_{\{\tau_i\le t\}},\quad\text{a.s.}\\
		\hat{D}_t = \hat{D}_t' + \int_0^t\sum_{i=1}^p\hat{\theta}_s^idN_s^i = \hat{D}_t' + \sum_{i=1}^p\hat{\theta}_{\tau_i}^i\mathbbm{1}_{\{\tau_i\le t\}},\quad\text{a.s.}
	\end{aligned}
\end{align}
If  $\bar{D}\coloneqq D - \hat{D}$ is non-decreasing, then $\bar{D}'\coloneqq D'-\hat{D}'$ is non-decreasing and for each $i\in\{1,\ldots,p\}$, $\theta_{\tau_i}^i\ge\hat{\theta}_{\tau_i}^i$ a.s. on $\{\tau_i\le T\}$.
\end{lemma}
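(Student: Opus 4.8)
The plan is to reduce everything to the already-proved Lemma \ref{lemma:hDecomposition} by working with the difference $\bar D\coloneqq D-\hat D$ directly. First I would observe that $\bar D$ belongs to $\mathcal{A}_T^2$, being a difference of two processes in $\mathcal{A}_T^2$ (predictability is irrelevant here, but linearity, rcll, the initial condition $\bar D_0=0$, and the square-integrability of the total variation are all preserved, the latter via $\int_0^T|d\bar D_t|\le\int_0^T|dD_t|+\int_0^T|d\hat D_t|$ and $(a+b)^2\le 2a^2+2b^2$). By hypothesis $\bar D$ is non-decreasing, so $\int_0^T|d\bar D_t|=\bar D_T$ and $\mathbb{E}[\bar D_T^2]<\infty$; hence $\bar D$ is a non-decreasing process in $\mathcal{A}_T^2$ with $\bar D_0=0$, and Lemma \ref{lemma:hDecomposition} applies. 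It provides a (predictable) \emph{non-decreasing} process $B\in\mathcal{A}_{p,T}^2$ and processes $\psi^i\in\mathcal{H}_{\lambda^i,T}^2$ with $\psi^i_{\tau_i}\ge 0$ a.s. on $\{\tau_i\le T\}$ such that $\bar D_t=B_t+\sum_{i=1}^p\int_0^t\psi^i_sdN^i_s$.

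Next I would exhibit a second decomposition of $\bar D$ of the same type coming from the hypotheses: subtracting the two identities in \eqref{eq:lemma_nDD1} gives $\bar D_t=(D_t'-\hat D_t')+\sum_{i=1}^p\int_0^t(\theta^i_s-\hat\theta^i_s)dN^i_s$. Since $\mathcal{A}_{p,T}^2$ and each $\mathcal{H}_{\lambda^i,T}^2$ are linear spaces, $D'-\hat D'\in\mathcal{A}_{p,T}^2$ and $\theta^i-\hat\theta^i\in\mathcal{H}_{\lambda^i,T}^2$, so $(D'-\hat D',\theta^1-\hat\theta^1,\ldots,\theta^p-\hat\theta^p)$ is an admissible decomposition tuple for $\bar D$ in the sense of Proposition \ref{prop:OptionalpJumps}. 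The decomposition $(B,\psi^1,\ldots,\psi^p)$ furnished by Lemma \ref{lemma:hDecomposition} is of exactly the same form (its uniqueness is a special case of the uniqueness in Proposition \ref{prop:OptionalpJumps} applied to $\bar D$). Hence, by that uniqueness, $B=D'-\hat D'=\bar D'$ (up to indistinguishability) and, for each $i$, $\psi^i=\theta^i-\hat\theta^i$ in $\mathcal{H}_{\lambda^i,T}^2$, i.e.\ $\psi^i_{\tau_i}=\theta^i_{\tau_i}-\hat\theta^i_{\tau_i}$ a.s.\ on $\{\tau_i\le T\}$.

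Finally I would read off the two conclusions: $\bar D'=B$ is non-decreasing, and for each $i\in\{1,\ldots,p\}$, $\theta^i_{\tau_i}-\hat\theta^i_{\tau_i}=\psi^i_{\tau_i}\ge 0$ a.s.\ on $\{\tau_i\le T\}$, that is, $\theta^i_{\tau_i}\ge\hat\theta^i_{\tau_i}$ a.s.\ on $\{\tau_i\le T\}$. I do not expect any genuine obstacle here; the only point requiring a little care is the bookkeeping identifying the decomposition of Lemma \ref{lemma:hDecomposition} with that of Proposition \ref{prop:OptionalpJumps}, which is immediate once one notes that the former is literally an instance of the latter, so the respective uniqueness statements force the two to coincide.
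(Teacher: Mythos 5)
Your proof is correct, but it takes a genuinely different route from the paper's. The paper argues directly from the two given decompositions: it subtracts them, invokes the fact that an rcll predictable process does not jump at the totally inaccessible times $\tau_i$ to conclude $\Delta\bar{D}_{\tau_i}=\theta^i_{\tau_i}-\hat{\theta}^i_{\tau_i}\ge 0$ on $\{\tau_i\le T\}$, and then establishes the monotonicity of $\bar{D}'$ by a pathwise case analysis over the partition $A_0,\ldots,A_p$ of \eqref{sets}, reasoning on each interval between consecutive default times. You instead apply Lemma \ref{lemma:hDecomposition} to the non-decreasing process $\bar D$ itself and identify the resulting decomposition with $(\bar D',\theta^1-\hat\theta^1,\ldots,\theta^p-\hat\theta^p)$ via the uniqueness in Proposition \ref{prop:OptionalpJumps}; this is shorter, dispenses with the partition argument entirely, and exploits the uniqueness statement more systematically. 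The one point you should make explicit is that the non-decreasingness of $B$ and the non-negativity of $\psi^i_{\tau_i}$ on $\{\tau_i\le T\}$ are established only in the \emph{proof} of Lemma \ref{lemma:hDecomposition}, not in its statement, so you should either cite that part of the proof or re-derive these two facts in one line (they follow from $\Delta B_{\tau_i}=0$, since $B$ is predictable and rcll, together with the non-decreasingness of $\bar D$); also note that equality of $\psi^i$ and $\theta^i-\hat\theta^i$ in $\mathcal{H}^2_{\lambda^i,T}$ does yield $\psi^i_{\tau_i}=\theta^i_{\tau_i}-\hat\theta^i_{\tau_i}$ a.s.\ on $\{\tau_i\le T\}$, because $\mathbb{E}\bigl[|U_{\tau_i}|^2\mathbbm{1}_{\{\tau_i\le T\}}\bigr]=\mathbb{E}\bigl[\int_0^T|U_s|^2\lambda^i_s\,ds\bigr]$ for predictable $U$. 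With these small clarifications your argument is complete.
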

\begin{proof}
	We note that \eqref{eq:lemma_nDD1} holds by Proposition \ref{prop:OptionalpJumps}. Using \eqref{eq:lemma_nDD1} we have,
\begin{align}\label{eq:lemma_nDD2}
	\begin{aligned}
	\bar{D}_t&\coloneqq \bar{D}_t' + \left(\sum_{i=1}^p\left(\theta_{\tau_i}^i-\hat{\theta}_{\tau_i}^i\right)\mathbbm{1}_{\{\tau_i\le t\}}\right)\\ 
	&=\left(D_t' - \hat{D}_t'\right) + \left(\sum_{i=1}^p\left(\theta_{\tau_i}^i-\hat{\theta}_{\tau_i}^i\right)\mathbbm{1}_{\{\tau_i\le t\}}\right).
	\end{aligned}
\end{align}
As an rcll predictable process does  not jump at totally inaccessible stopping times (cf. \cite{jacod2013limit}, Proposition 2.24), we have  $\Delta \bar{D}'_{\tau_i}=0$ for all $i\in\{1,\ldots,p\}$ a.s. Since, $\bar{D}$ is non-decreasing,  we have, for $i\in\{1,\ldots,p\}$,  $\Delta\bar{D}_{\tau_i} = \theta_{\tau_i}^i-\hat{\theta}_{\tau_i}^i\ge0$ a.s.\\
Let us consider each of the sets $A_k$ from the partition from \eqref{sets}.  On $A_0=\{\tau_1>T\}$,  
 we have $\bar{D}_t(\omega) = D_t'(\omega)-\hat{D}_t'(\omega)=\bar{D}'_t(\omega);$ hence, $t\mapsto \bar{D}'_t(\omega)$ is non-decreasing on  $A_0$ (as $t\mapsto \bar{D}_t(\omega)$ is non-decreasing). 
Let $k\in\{1,\ldots,p-1\}$. On $A_k$, by reasoning successively for $t\in[0,\tau_1(\omega))$ , ..., for $t\in[\tau_{k-1},\tau_k(\omega))$, and for $t\in[\tau_k(\omega),T]$,  by using Eq. \eqref{eq:lemma_nDD2} and the assumption that $\bar D$ is non-decreasing,  and by the fact that $\bar D'$ is predictable (and hence, does not jump at any of the $\tau_k$'s), we get that  $t\mapsto \bar{D}_t'(\omega)$ is non-decreasing on $A_k$. Since, $A_k$ form a partition, we conclude.   
\end{proof}

\begin{lemma}\label{lemma:expSemiMartingale}
The solution of     the following forward SDE,
    \begin{equation}\label{eq:AppendixESM1}
        d\zeta_t = \zeta_{t-}(\beta_tdW_t + \sum_{i=1}^p\gamma_t^idM_t^i);\quad\zeta_0=0,
    \end{equation}
    where the processes $M^i$ are given by \eqref{eq:2-MRP}, is: for $t\in[0,T]$,
    \begin{equation}\label{eq:AppendixESM2}
        \zeta_t = \exp\left(\int_0^t\beta_sdW_s-\frac{1}{2}\int_0^t\beta_s^2ds\right)\exp\left(-\int_0^t\sum_{i=1}^p\gamma_s^i\lambda_s^ids\right)\prod_{i=1}^p(1+\gamma_{\tau_i}^i\mathbbm{1}_{\{\tau_i\leq t\}}),\text{ a.s.}
    \end{equation}
\end{lemma}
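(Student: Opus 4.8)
The plan is to recognize the right-hand side of \eqref{eq:AppendixESM2} as the Dol\'eans--Dade (stochastic) exponential $\mathcal{E}(X)$ of the $\mathbb{G}$-semimartingale $X_t \coloneqq \int_0^t \beta_s\,dW_s + \sum_{i=1}^p \int_0^t \gamma_s^i\,dM_s^i$, and to verify \emph{directly}, by integration by parts, that the candidate process solves \eqref{eq:AppendixESM1}; uniqueness then follows from the linearity of the equation. (In \eqref{eq:AppendixESM1} the initial condition is understood as $\zeta_0=1$, in accordance with \eqref{eq:AppendixESM2}.)

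First I would split the candidate into a continuous factor and a pure-jump factor: write $\zeta_t = A_t B_t$, where $A_t \coloneqq \exp\bigl(\int_0^t \beta_s\,dW_s - \tfrac12\int_0^t\beta_s^2\,ds - \sum_{i=1}^p\int_0^t\gamma_s^i\lambda_s^i\,ds\bigr)$ and $B_t \coloneqq \prod_{i=1}^p(1+\gamma_{\tau_i}^i\mathbbm{1}_{\{\tau_i\le t\}})$. The process $A$ is continuous and, by It\^o's formula, satisfies $dA_t = A_t\bigl(\beta_t\,dW_t - \sum_{i=1}^p\gamma_t^i\lambda_t^i\,dt\bigr)$. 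The process $B$ is of finite variation and constant between the default times; because $\tau_1<\cdots<\tau_p$ and $P(\tau_i\ne\tau_j)=1$ for $i\ne j$, the process $B$ has exactly one jump at each $\tau_i$, of size $\Delta B_{\tau_i}=B_{\tau_i-}\gamma_{\tau_i}^i$, so that $dB_t = B_{t-}\sum_{i=1}^p\gamma_t^i\,dN_t^i$.

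Next I would apply the integration-by-parts formula $d(A_tB_t) = A_{t-}\,dB_t + B_{t-}\,dA_t + d[A,B]_t$. Since $A$ is continuous, $[A,B]\equiv 0$ and $A_{t-}=A_t$; using $\zeta_{t-}=A_tB_{t-}$ we then obtain $d\zeta_t = \zeta_{t-}\bigl(\sum_{i=1}^p\gamma_t^i\,dN_t^i - \sum_{i=1}^p\gamma_t^i\lambda_t^i\,dt\bigr) + \zeta_{t-}\beta_t\,dW_t = \zeta_{t-}\bigl(\beta_t\,dW_t + \sum_{i=1}^p\gamma_t^i\,dM_t^i\bigr)$, where the last equality uses $dM_t^i = dN_t^i - \lambda_t^i\,dt$. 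As $A_0=B_0=1$, the initial condition is satisfied. Uniqueness follows from the fact that \eqref{eq:AppendixESM1} is linear: if $\zeta$ and $\tilde\zeta$ are two solutions, then $\zeta-\tilde\zeta$ solves the same linear SDE with null initial condition, hence is identically zero (equivalently, one may simply invoke the known closed form of $\mathcal{E}(X)$).

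The only genuine subtlety -- the part that needs care -- is the bookkeeping of the jumps: one must justify that $B$ jumps only at the $\tau_i$, one at a time, with multiplicative increment $1+\gamma_{\tau_i}^i$, which is exactly where the strict ordering of the default times and $P(\tau_i\ne\tau_j)=1$ are used; one must also observe that this multiplicative structure is precisely what reconciles \eqref{eq:AppendixESM2} with the abstract Dol\'eans--Dade formula $\mathcal{E}(X)_t = \exp\bigl(X_t-\tfrac12\langle X^c\rangle_t\bigr)\prod_{0<s\le t}(1+\Delta X_s)e^{-\Delta X_s}$, the cancellation $\exp\bigl(\sum_i\gamma_{\tau_i}^i\mathbbm{1}_{\{\tau_i\le t\}}\bigr)\prod_i(1+\gamma_{\tau_i}^i)e^{-\gamma_{\tau_i}^i} = \prod_i(1+\gamma_{\tau_i}^i\mathbbm{1}_{\{\tau_i\le t\}})$ doing the work. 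Everything else is routine It\^o calculus.
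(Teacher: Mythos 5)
Your proof is correct, but it takes a different route from the paper's. The paper simply applies the Dol\'eans--Dade formula to the semimartingale $X_t=\int_0^t\beta_s\,dW_s+\sum_{i=1}^p\int_0^t\gamma_s^i\,dM_s^i$, computes $[X^c]_t=\int_0^t\beta_s^2\,ds$, and simplifies the jump product $\prod_{0<s\le t}(1+\Delta X_s)e^{-\Delta X_s}$ using $P(\tau_i=\tau_j)=0$ for $i\ne j$ --- i.e.\ exactly the cancellation you relegate to your closing remark; existence, uniqueness and the closed form all come packaged in that one citation. You instead verify the candidate directly: the factorization $\zeta=AB$ into a continuous exponential and a pure-jump product, the computation $dA_t=A_t(\beta_t\,dW_t-\sum_i\gamma_t^i\lambda_t^i\,dt)$, $dB_t=B_{t-}\sum_i\gamma_t^i\,dN_t^i$, and the integration by parts with $[A,B]\equiv 0$ are all sound, and your jump bookkeeping (one jump per $\tau_i$, of relative size $\gamma_{\tau_i}^i$, using the strict ordering and a.s.\ distinctness of the default times) is exactly the point that needs care. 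What your approach buys is self-containedness --- only It\^o's product rule is needed --- at the cost of having to argue uniqueness separately; your appeal to linearity is left slightly informal (the step ``$\delta$ solves the homogeneous linear SDE with $\delta_0=0$, hence $\delta\equiv 0$'' deserves at least a reference to the uniqueness part of the Dol\'eans--Dade theorem or to a Gronwall-type estimate), though this is standard and you do flag the fallback of invoking the known closed form of $\mathcal{E}(X)$. Your observation that the initial condition in \eqref{eq:AppendixESM1} must read $\zeta_0=1$ rather than $\zeta_0=0$ is correct and identifies a genuine typo in the statement.
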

\begin{proof}
    The SDE from  \eqref{eq:AppendixESM1} can be solved by applying the Dol\'eans-Dade formula (cf., for instance, \cite{jacod2013limit}) to the semimartingale $(X_t)_{t\in[0,T]},$ where $X_t\coloneqq\int_0^t\beta_sdW_s + \int_0^t\sum_{i=1}^p\gamma_s^idM_s^i$, (with $X_0=0$).
     We have 
    \begin{equation}\label{eq:AppendixESM3}
        \zeta_t=\mathcal{E}(X)_t = \exp\left(X_t-X_0-\frac{1}{2}[X^c]_t\right)\prod_{0<s\le t}(1+\Delta X_s)e^{-\Delta X_s},
    \end{equation}
    where
    \begin{equation}\label{eq:AppendixESM4}
        [X^c]_t = \int_0^t\beta_s^2ds.
    \end{equation}
    Since $P(\tau_i = \tau_j)=0$ for $i,j\in\{1,\ldots,p\}$ such that $i\ne j$, we get,
    \begin{align}\label{eq:AppendixESM6}
        \begin{aligned}
            \prod_{0<s\le t}(1+\Delta X_s)e^{-\Delta X_s}  
            =\exp\left(-\int_0^t\sum_{i=1}^p\gamma_s^idM_s^i - \int_0^t\sum_{i=1}^p\gamma_s^i\lambda_s^ids\right)\prod_{i=1}^p\left(1+\gamma_{\tau_i}^i\mathbbm{1}_{\{\tau_i\le t\}}\right).
        \end{aligned}
    \end{align}
    Substituting \eqref{eq:AppendixESM4} and \eqref{eq:AppendixESM6} into \eqref{eq:AppendixESM3}, we get the desired result  \eqref{eq:AppendixESM2}.
\end{proof}

%
	
\end{appendices}
\newpage


\end{document}